\pgfplotsset{compat=1.16}
\newtheorem{thm}{Theorem}[section]
\newtheorem{cor}{Corollary}[section]
\newtheorem{lem}{Lemma}[section]
\newtheorem{rmk}{Remark}[thm]
\newtheorem{defn}{Definition}[section]
\renewcommand{\vec}{\bm}
\newcommand{\CC}{\mathcal{C}}
\newcommand{\CD}{\mathcal{D}}
\newcommand{\CE}{\mathcal{E}}
\newcommand{\BE}{\mathbb{E}}
\newcommand{\CI}{\mathcal{I}}
\newcommand{\CL}{\mathcal{L}}
\newcommand{\CO}{\mathcal{O}}
\newcommand{\CP}{\mathcal{P}}
\newcommand{\CR}{\mathcal{R}}
\newcommand{\BR}{\mathbb{R}}
\newcommand{\CT}{\mathcal{T}}
\newcommand{\vA}{\bm{A}}
\newcommand{\vB}{\bm{B}}
\newcommand{\vC}{\bm{C}}
\newcommand{\vF}{\bm{F}}
\newcommand{\vh}{\bm{h}}
\newcommand{\vH}{\bm{H}}
\newcommand{\vI}{\bm{I}}
\newcommand{\vO}{\bm{O}}
\newcommand{\vP}{\bm{P}}
\newcommand{\vS}{\bm{S}}
\newcommand{\vU}{\bm{U}}
\newcommand{\vV}{\bm{V}}
\newcommand{\vX}{\bm{X}}
\newcommand{\vY }{\bm{Y }}
\newcommand{\vZ}{\bm{Z}}
\newcommand{\vsigma}{\bm{ \sigma}}
\newcommand{\vrho}{\bm{ \rho}}
\renewcommand{\L}{\left}
\newcommand{\R}{\right}
\newcommand{\dagg}{\dagger}
\newcommand{\vertiii}[1]{{\left\vert\kern-0.25ex\left\vert\kern-0.25ex\left\vert #1 \right\vert\kern-0.25ex\right\vert\kern-0.25ex\right\vert}}
\newcommand{\norm}[1]{\Vert {#1} \Vert}
\newcommand{\normp}[2]{\norm{#1}_{#2}}
\newcommand{\lnormp}[2]{\lnorm{#1}_{#2}}
\newcommand{\labs}[1]{\left\vert {#1} \right\vert}
\newcommand{\lnorm}[1]{\left\Vert {#1} \right\Vert}
\newcommand{\e}{\mathrm{e}}
\newcommand{\ri}{\mathrm{i}}
\newcommand{\rd}{\mathrm{d}}
\newcommand*{\tr}{\mathrm{Tr}}
\newcommand*{\poly}{\mathrm{Poly}}
\newcommand*{\Supp}{\mathrm{Supp}}
\newcommand{\indicator}{\mathbbm{1}}
\newcommand{\nrm}[1]{\left\| #1 \right\|}
\newcommand{\undersetbrace}[2]{ \underset{#1}{\underbrace{#2}}}
\DeclarePairedDelimiterX{\braket}[1]{\langle}{\rangle}{#1}
\DeclarePairedDelimiterX\ketbra[2]{| }{|}{#1 \delimsize\rangle\!\delimsize\langle #2}	
\DeclarePairedDelimiterX\dotp[2]{\langle}{\rangle}{#1, #2}
\DeclareMathAlphabet{\dutchcal}{U}{dutchcal}{m}{n}
\SetMathAlphabet{\dutchcal}{bold}{U}{dutchcal}{b}{n}
\DeclareMathAlphabet{\dutchbcal} {U}{dutchcal}{b}{n}
\DeclareRobustCommand*{\pmzerodot}{%
	\nfss@text{%
		\sbox0{$\vcenter{}$}
		\sbox2{0}%
		\sbox4{0\/}%
		\ooalign{%
			0\cr
			\hidewidth
			\kern\dimexpr\wd4-\wd2\relax 
			\raise\dimexpr(\ht2-\dp2)/2-\ht0\relax\hbox{%
				\if b\expandafter\@car\f@series\@nil\relax
				\mathversion{bold}%
				\fi
				$\cdot\m@th$%
			}%
			\hidewidth
			\cr
			\vphantom{0}
		}%
	}%
}
	\newcommand{\authnote}[3]{{\color{#3} {\bf  #1:} #2}}	
	\newcommand{\authnote}[3]{}
\begin{document}

\title{Quantum Gibbs states are locally Markovian}

 \author{Chi-Fang Chen}
 \email{achifchen@gmail.com}
 \affiliation{University of California, Berkeley, CA, USA }
\affiliation{Massachusetts Institute of Technology, Cambridge, USA}

\author{Cambyse Rouz\'{e}}
\email{cambyse.rouze@inria.fr}
 \affiliation{Inria, Télécom Paris - LTCI,\\ Institut Polytechnique de Paris, 91120 Palaiseau, France}

\begin{abstract}
The Markov property entails the conditional independence structure inherent in Gibbs distributions for general classical Hamiltonians, a feature that plays a crucial role in inference, mixing time analysis, and algorithm design. However, much less is known about quantum Gibbs states. In this work, we show that for any Hamiltonian with a bounded interaction degree, the quantum Gibbs state is locally Markov at arbitrary temperature, meaning there exists a quasi-local recovery map for every local region. Notably, this recovery map is obtained by applying a detailed-balanced Lindbladian with jumps acting on the region. Consequently, we prove that (i) the conditional mutual information (CMI) for a shielded small region decays exponentially with the shielding distance, and (ii) under the assumption of uniform clustering of correlations, Gibbs states of general non-commuting Hamiltonians on $D$-dimensional lattices can be prepared by a quantum circuit of depth $\e^{\mathcal{O}(\log^D(n/\epsilon))}$, which can be further reduced assuming certain local gap condition. Our proofs introduce a regularization scheme for imaginary-time-evolved operators at arbitrarily low temperatures and reveal a connection between the Dirichlet form, a dynamic quantity, and the commutator in the KMS inner product, a static quantity. We believe these tools pave the way for tackling further challenges in quantum thermodynamics and mixing times, particularly in low-temperature regimes.
\end{abstract}
\maketitle
\tableofcontents
\section{Introduction}
We say three correlated random variables $ABC$ are Markov if $A$ and $C$ are independent conditioned on $B$. Such \textit{Markov property} is one of the most succinct yet versatile structural properties, allowing us to decompose correlated, high-dimensional distributions into conditionally independent sets of variables. In particular, in Markov random fields or graphical models, the conditional dependence relations can be effectively captured as edges on graphs, providing a starting point for computing marginals, maximum a posteriori assignments, and the partition function \cite{Stefankovic09,Bauerschmidt2019}. Such \textit{static} conditional dependence structure also ties deeply with local detailed-balanced Metropolis or Glauber \textit{dynamics}, which, if mixed rapidly, enables efficient sampling from the distribution~\cite{Eldan2021,Anari2022,Chen2021,anari2021entropic}. Remarkably, despite such powerful structural constraints, Markov random fields remain expressive and rich, encompassing many combinatorial optimization and counting problems~\cite{wainwright2008graphical,koller2009probabilistic,Mzard2009,Jerrum1995,Stefankovic09,https://doi.org/10.1184/r1/6477104.v1,blajerrum}; in fact, due to the Hammersley-Clifford theorem \cite{hammersleymarkov}, Markov random fields are \textit{equivalent} to Gibbs distributions of many-body classical Hamiltonians, which underpins the vast majority of classical statistical physics problems (\autoref{fig:Ising_Markov}). 

In quantum many-body physics and quantum computation, quantum Gibbs states play a similarly essential role, encapsulating
many-body quantum systems in thermal equilibrium. Given a set $\Lambda$ of $n=|\Lambda|$ qubits and a few-body Hamiltonian
\begin{align*}
\vH=\sum_{\gamma\in\Gamma}\vH_\gamma\quad \text{where}\quad \|\vH_\gamma\|\le 1,
\end{align*}
 the quantum Gibbs state corresponding to $\vH$ at inverse temperature $\beta>0$ is defined as
\begin{align*}
\vrho_\beta:=\e^{-\beta\vH}/\tr(\e^{-\beta\vH})\,.
\end{align*}
Much progress in quantum information considers general structural properties of quantum Gibbs states or ground states. Notable examples include the area law of entanglement and tensor networks~\cite{DMRG,Hastings2007AnAL,Garcia07,Vestraete2008}, which provide an efficient parameterization of physically motivated quantum states and lead to celebrated physical and algorithmic consequences. However, the general structure and complexity of quantum Gibbs states remain largely speculative. In finely-tuned cases, quantum Gibbs states could encode classically challenging computational problems for commuting Hamiltonians~\cite{bergamaschi2024quantum, https://doi.org/10.48550/arxiv.2408.01516} or at low temperatures~\cite{chen2024local, rouze2024efficient}. At high enough temperatures or in 1D, Gibbs states often have efficient classical algorithms~\cite{kuwahara2018polynomial, fawzi2023certified, harrow2020classical, fawzi2022subpolynomial, Helmuth2023} and 
exhibit classical product state behavior \cite{bakshi2024high}.

Nevertheless, the most celebrated classical Markov properties are, strictly speaking, \textit{false} for quantum Gibbs states. Given a tripartition of the system $\Lambda=A\sqcup B\sqcup C$, the \textit{quantum conditional mutual information} (QCMI) between $A$ and $C$ conditioned on $B$ for a quantum state $\vrho$ is defined as
\begin{align*}
I(A:C|B)_{\vrho}:=S(\vrho_{AB})+S(\vrho_{BC})-S(\vrho_B)-S(\vrho_{ABC})\, ,
\end{align*}
with $S(\vsigma):=-\tr(\vsigma\log\vsigma)$ being the entropy of a state $\vsigma$. Such a tripartite quantity is an indicator for quantum Markovianity because a vanishing QCMI implies a recovery map $\CR_{AB}$ acting only on subsystem $AB$ that recovers a lost system $A$~\cite{hayden2004structure,brown2012quantum}
\begin{equation}\tag{Exact Markov}\label{introeq.Markov}
I(A:C|B)_{\vrho}=0\quad \iff\quad    \exists \CR_{AB},\quad \CR_{AB}[\vrho_{BC}] = \vrho.
\end{equation}
Unfortunately, for quantum Gibbs states with tripartition $ABC$ such that $B$ shields $A$ from $C$, the QCMI may not vanish exactly for general Hamiltonians except for the commuting case. For decades, it has been a major open problem to prove approximate versions of the decay of QCMI as a step toward a suitable quantum Hammersley-Clifford theorem (see~\autoref{sec:intro_approx_Markov}). 

Recently, there has been a wave of new quantum Gibbs sampling algorithms serving as the quantum analog of classical Metropolis~\cite{temme2011quantum,yung2012quantum,gilyen2024quantum} or Glauber dynamics~\cite{Shtanko2021AlgorithmsforGibbs, chen2021fast, rall2023thermal, wocjan2023szegedy,chen2023quantum,jiang2024quantum,ding2024single,ding2024efficient}. In particular,~\cite{chen2023efficient} gives a (quasi)-local and detailed balanced Lindbladian $\CL$ that fixes the quantum Gibbs state
\begin{align}
    \CL = \sum_{a} \CL_a\quad \text{where} \quad \CL_a[\vrho_{\beta}] = 0 \quad \text{for each jump}\quad a.
\end{align}
The associated mixing time quantifies the convergence rate of such dynamics, whose classical analog has deep ties with the static properties of the underlying distribution \cite{martinelli1999lectures}. This static/dynamic equivalence was further extended to commuting Hamiltonians in a series of works~\cite{capel2021modified,Bardet2021EntropyDF,Bardet2024,kochanowski2024rapid,capel2024quasi}. Recently, optimal quantum mixing times at high temperatures were proven for general Hamiltonians~\cite{rouze2024optimal, rouze2024efficient}, giving efficient quantum Gibbs sampling algorithms in that regime. Given the intimate interplay between the Markov property and classical sampling algorithms, one naturally wonders whether quantum Gibbs sampling could offer a dynamical angle to the static problem at hand. The following themes guide this paper:
\smallskip

\begin{center}
\textit{ When are Lindbladians good recovery maps? }
\medskip

\textit{How do mixing times interact with Markov properties?}
\end{center}
In a nutshell, our main results show that quantum Gibbs states are locally Markov: for each local region, we construct a quasi-local recovery map (\autoref{fig:recovery_animation}). Remarkably, such a property holds at any temperature for any Hamiltonian with a bounded interaction degree, independent of thermal phase transitions. In particular, our recovery map is precisely running a Linbladian Gibbs sampler with jumps supported on that region.

\begin{figure}[t]
\includegraphics[width=0.95\textwidth]{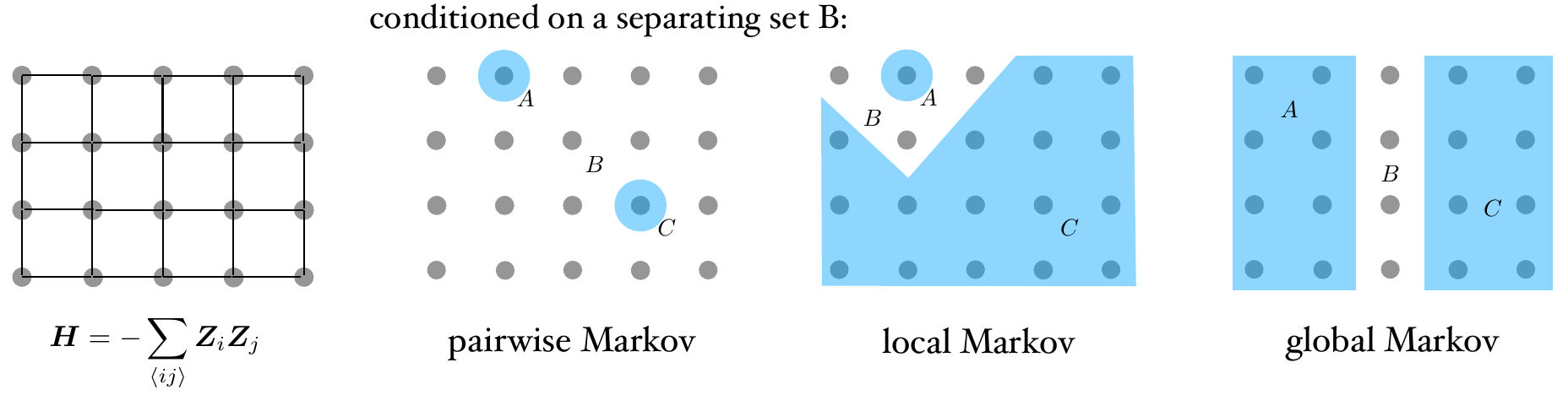}\caption{ 
Classical Gibbs distributions of local Hamiltonians are always Markovian at any non-zero temperatures. For the 2D Ising model with nearest neighbour interactions, the regions $A$ and $C$ are independent conditioned on the separating set $B$. Precise notions of pairwise Markov property (when $\labs{A},\labs{C}= 1$), local Markov property (when $\labs{A}=1$), and global Markov property (when $A$ is any subset) can be defined, and, are equivalent for positive distributions. In the quantum case, these three are not known to be equal, even allowing for approximations. 
}\label{fig:Ising_Markov}
\end{figure}

\subsection{Approximate Markov properties}
\label{sec:intro_approx_Markov}
\begin{table}
\centering
\begin{tabular}{|l|l|l|} 
\hline
 Markov property & Sufficient conditions  & QCMI $I(A:C|B)_{\vrho_{\beta}}$ for $ABC=\Lambda$ \\ \hline
Pairwise & $D$-dim lattice at any $\beta>0$~\cite{kuwahara2024clustering} & $\exp\bigg(c\labs{AC}-\operatorname{dist}(A,C)/\xi\bigg)$    \\ \hline
Local & Degree $d$ at any $\beta>0$ (\autoref{thm:main}) 
& $\labs{A}\labs{C}\exp\bigg(c \min(\labs{A},\labs{C}) -\operatorname{dist}(A,C)/\xi\bigg)$  \\ \hline
Global & Commuting or classical at any $\beta>0$ & $0$ if $\operatorname{dist}(A,C)\ge 1$ \\ 
\cline{2-3}
          & 1-dim at any $\beta>0$~\cite{Kato2019,kuwahara2024clustering}                       & $\exp\L(- \operatorname{dist}(A,C)/\xi\R)$\\ \hline
\end{tabular}
\caption{\label{table:summary} Known criteria for Quantum Markov properties to hold at varying generality. The distance $\operatorname{dist}(A,C)$ between sets $A$ and $C$ is defined as in~\autoref{sec:Ham} for general connectivity and reduces to Euclidean distance in $D$-dimensional lattices (up to multiplicative constants). The scalar $c$ and $\xi$ are constants that may vary from line to line, but only depend on inverse temperature $\beta$ and the connectivity of the Hamiltonian (i.e., the dimension $D$ or the interaction degree $d$). While all three types of Markov properties feature CMI decaying exponentially with the distance $\operatorname{dist}(A,C)$, the extra dependence on the sizes of $\labs{A},\labs{C}$ determines how large $A$ and $C$ can meaningfully be.}
\end{table}

In the recent years, various attempts at recovering an approximate version of the quantum Markov property \eqref{introeq.Markov} have been made:

\medskip
\noindent \textbf{Conjecture} \cite{kuwahara2020clustering}: given the Gibbs state $\vrho_\beta$ of a short-range interaction Hamiltonian on a $D$-dimension lattice at any inverse temperature $\beta$, and any tripartition of the system $\Lambda=A\sqcup B\sqcup C$, the quantum conditional mutual information (QCMI) evaluated at $\vrho_\beta$ satisfies
\begin{align}
    I(A:C|B)_{\vrho_\beta}\le \mathcal{D}(\operatorname{dist}(A,C))
\end{align}
for some superpolynomially decaying function $\mathcal{D}$ of the distance between $A$ and $C$, which also depends on $\beta$ and the geometry of $A,B,C$.

\medskip

Essentially, three main versions of the conjecture were considered in the literature \cite{kuwahara2024clustering}, listed below in order of increasing strength \cite{lauritzen1996graphical,koller2009probabilistic} (see~\autoref{table:summary}):

\begin{itemize}
\item \textit{Pairwise Markov property}: both subsystems A and C must be small: $|A|,|C| =\mathcal{O}(1)$.
\item \textit{Local Markov property}: $A$ or $C$ must be small, i.e., $\min(|A|, |C|) = \mathcal{O}(1)$.
\item \textit{Global Markov property}: both $A$ and $C$ can be macroscopic, i.e., $|A|, |C| = \mathcal{O}(|\Lambda|)$.
\end{itemize}
In \cite{Kato2019}, the authors proved the \textit{global Markov property} in the restricted case of $1D$ quantum spin chains. They derived a subexponentially decaying function $\mathcal{D}$ by leveraging quantum belief propagation equations, Lieb-Robinson bounds, and the exponential decay of correlations for $1D$ systems \cite{araki1969gibbs}. This bound was later strengthened into an exponential decay in \cite{kuwahara2024clustering}.
 In the general $D$-dimensional setting, the impressive work~\cite{kuwahara2024clustering} proved the \textit{pairwise Markov property} by constructing an effective Hamiltonian for the reduced states over subregions of the lattice.

In this paper, we prove that Gibbs states of Hamiltonians with short-range interaction at any temperature on $D$-dimensional lattices satisfy the \textit{local Markov property}, setting an open problem of~\cite{kuwahara2024clustering}. More precisely, 
for any tripartition of the system $\Lambda=A\sqcup B\sqcup C$ the quantum conditional mutual information (QCMI) evaluated at $\vrho_\beta$ satisfies
\begin{align}\label{CMIdecayintro}
    I(A:C|B)_{\vrho_\beta}\le r |A||C|\exp\bigg(c \min(\labs{A},\labs{C}) -\frac{\operatorname{dist}(A,C)}{\xi}\bigg)\,,
\end{align}
for some positive numbers $r,c,\xi$ depending on $\beta$ and $D$. In fact, our results were proved in the more general setting of interaction graphs with bounded degree, mirroring the setting of classical graphical models (see~\autoref{sec:main results}). However, finding a QCMI bound which depends only polynomially on $|A|$ and $|C|$, namely a \textit{global Markov property}, remains open.\footnote{In \cite{kuwahara2020clustering}, an even stronger notion of clustering of the QCMI at high enough temperature and for subregions $ABC$ of the lattice $\Lambda=ABCD$, was considered using high temperature cluster expansions. However, the proof requires expansions of operator-valued partial trace functionals, whose correctness remains unclear.} 

To prove the local Markov property, we explicitly construct a recovery map $\mathcal{R}_{AB}$ such that
\begin{align*}
\|\mathcal{R}_{AB}(\vrho_{BC})-\vrho\|_1\le\, r'\,\exp\left(c'|A|-\frac{\operatorname{dist}(A,C)}{\xi'}\right) 
\end{align*}
for some positive numbers $r',c',\xi'$ depending on $\beta$ and the dimension $D$. In fact, the recovery map arises from applying a time-averaged Lindbladian dynamics
\begin{align}
    \CR_{A,t}[\cdot]:= \frac{1}{t}\int_{0}^t \exp\L(s\,\sum_{a\in P^1_A}\CL_a\R)[\cdot] \,\rd s\label{eq:RAt_intro}
\end{align}
for generators $\CL_a$ with jumps $\vA^a \in P^1_A$ being all single site Pauli operators ($\vX,\vZ,\vZ$) acting on the region $A$. Due to Lieb-Robinson bounds, the Lindbladian depends mostly on the local Hamiltonian patch.

When this manuscript is near completion, we become aware of the concurrent and independent results of Kohtaro and Kuwahara which prove a stronger global Markov property, but only at high temperatures, also using certain detailed balanced dynamics. 

\begin{figure}[t]
\includegraphics[width=0.9\textwidth]{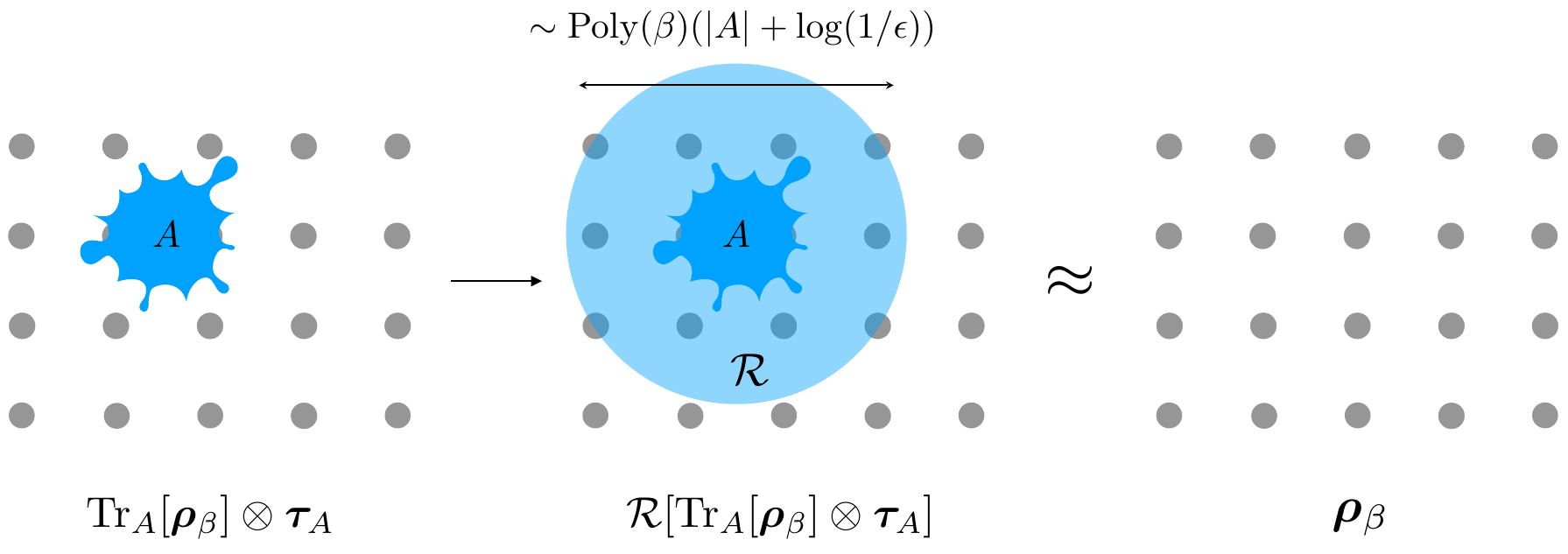}\caption{Our main result says that local disturbance to the Gibbs state can be recovered locally. Suppose we trace out a region $A$ and replace it with the maximally mixed state $\vec{\tau}_A$, the recovery map $\CR$ is quasi-local with radius growing with the region size $\labs{A}$ and the inverse temperature $\beta.$ In fact, the recovery map is a time-averaged detailed-balanced Lindbladian based on single-Pauli jumps on $A$.
}\label{fig:recovery_animation}
\end{figure}
\subsection{Efficient Gibbs sampling}

In the classical setting, Markovianity is a key tool for the construction and analysis of efficient sampling algorithms, when used in addition to the strong decay of other measures of correlations. Assume for instance the \textit{strong mixing condition}: for any regions $A\subset B\subseteq \Lambda$, and any site $x\in \Lambda\backslash B$, 
 \begin{align}\tag{Strong spatial mixing}\label{introdecaycorr}
\sup_\tau\,\left\|\rho^{B,\tau}_A-\rho^{B,\tau^x}_A\right\|_{\operatorname{TV}}\le C\,e^{-\operatorname{dist}(A,x)/\xi_s}
 \end{align}
 for some correlation length $\xi_s>0$ and constant $C$, where the supremum is taken over the set of all spin configurations $\tau\in \{-1,1\}^{\Lambda\backslash B}$, and where $\tau^x$ corresponds to the spin configuration equal to $\tau$ except on site $x$. Above, $\rho^{B,\tau}:=\e^{-\beta H^\tau_B}/\mathcal{Z}_B^\tau$ denotes the so-called conditional Gibbs distribution corresponding to the (classical) Hamiltonian $H^\tau_B:\{-1,1\}^{B}\to\mathbb{R}$ with spin configuration outside of $B$ fixed to $\tau\in \{-1,1\}^{\Lambda\backslash B}$, i.e.~$H^\tau_B(\delta_B)=\sum_{\gamma\cap B\ne 0}H_\gamma(\delta_{B\cap\gamma},\tau_{B^c\cap\gamma})$ for some interactions $H_\gamma:\{-1,1\}^\gamma\to[-1,1]$; $\rho^{B,\tau}_A$ denotes the marginal distribution on region $A$; finally, $\|.\|_{\operatorname{TV}}$ denotes the total variation distance. In other words, strong mixing requires that variations of the spins away from $B$ are not detected in the bulk $A$. 
 In the classical literature, this property is closely related to the uniqueness of the Gibbs distribution in the thermodynamic limit \cite{dobrushin1985completely,dobrushin1985constructive,dobrushin1987completely}. It was shown that \cite{martinelli1999lectures},
\begin{center}
\eqref{introeq.Markov}\quad +\quad \eqref{introdecaycorr}\quad $\Longrightarrow $\quad MCMC mixes in quasi-linear time.
\end{center}
 The main advantage of the above strategy compared to prior proofs~\cite{Aizenman1987,stroock1992equivalence,Majewski1995,temme2015fast,rouze2024optimal, rouze2024efficient} is that it provides physical, and often tight, static criteria for the efficient preparation of the Gibbs distribution.

In the quantum setting, how to systematically generalize the above paradigm remains open. So far, the most comparable approach of~\cite{brandao2019finite_prepare} requires an additional assumption of \textit{uniform Markov property} on top of a condition of \textit{uniform clustering of correlations} reminiscent from \eqref{introdecaycorr}: for any two non-overlapping regions $A,B\subset X\subset \Lambda$, 
 \begin{align}\tag{Decay correlations}\label{introcovariancedecay}
\|\vrho^X_{AB}-\vrho^X_{A}\otimes \vrho^X_B\|_1
\le \poly(\labs{A},\labs{B})\,e^{-\operatorname{dist}(A,B)/\xi_c}
 \end{align}
for some correlation length $\xi_c$ and constant $C'$, where $\vrho^X$
denotes the Gibbs state corresponding to the truncated Hamiltonian $\vH_X:=\sum_{\gamma\subseteq X}\vH_\gamma$. The authors constructed a quantum channel $\mathcal{A}$ composed of $\mathcal{O}(D)$ layers of log-local patches, leveraging the connection between Markov property and approximate local recovery channels \cite{fawzi2015quantum}. This channel satisfies $\|\mathcal{A}(\vrho) - \vrho_\beta\|_1 \leq \epsilon$ for any initial state $\vrho$. The channel $\mathcal{A}$ can be compiled into a nonexplict quantum circuit of size $\e^{\mathcal{O}(\log^D(n/\epsilon))}$.

Here instead, we exemplify the power of our approximate Markovianity by unconditionally removing the uniform Markov assumption, directly employing our time-averaged Lindblad dynamics \eqref{eq:RAt_intro} as recovery maps give $\vrho_\beta$ in $\e^{\mathcal{O}(\log^D(n/\epsilon))}$. Furthermore, assuming inverse polynomial control over the local gap of the algorithm's generators—an assumption valid for classical systems at sufficiently high temperatures \cite{martinelli1999lectures} and for commuting Hamiltonians under a strong clustering condition \cite{kastoryano2016commuting}—we can further reduce the mixing time to $\log^{\mathcal{O}(1)}(n/\epsilon)$, thereby achieving the regime of rapid mixing. 

\section{Preliminaries}

\subsection{The Lindbladian with exact detailed balance}
For any Hamiltonian $\vH$ on $n$ qubits, inverse temperature $\beta>0$ and a set of jumps $\{\vA^a\}_a$ containing their adjoints, each with norm $\norm{\vA^a}\le 1$, we consider the Linbladian~\cite{chen2023efficient}
\begin{align}
		\CL[\cdot] := \underset{\text{``coherent''}}{\underbrace{-\ri [\vB, \cdot]}} + \sum_{a}
		\int_{-\infty}^{\infty} \gamma(\omega) \left(\underset{\text{``transition''}}{\underbrace{\hat{\vA}^a(\omega)(\cdot)\hat{\vA}^{a}(\omega)^\dagg}} - \underset{\text{``decay''}}{\underbrace{\frac{1}{2}\{\hat{\vA}^{a}(\omega)^\dagg\hat{\vA}^a(\omega),\cdot\}}}\right)\rd\omega.\label{eq:exact_DB_L}
\end{align}
We recall the operator Fourier transform~\cite{chen2023quantum} of an operator $\vA$ associated to the Hamiltonian $\vH$ with spectral decomposition $\vH=\sum_{i}E_i\vP_{E_i}$, 

\begin{align}\label{eq:OFT}
{\hat{\vA}}(\omega)=  \frac{1}{\sqrt{2\pi}}\int_{-\infty}^{\infty} \e^{\ri \vH t} \vA \e^{-\ri \vH t} \e^{-\ri \omega t} f(t)\rd t =\sum_{\nu\in B(\vH)} \vA_{\nu}\hat{f}(\omega - \nu)
    \end{align}
where the Bohr frequencies $\nu\in B(\vH)$ are the set of energy differences, and  $\vA_\nu:=\sum_{E_2-E_1=\nu}\vP_{E_2}\vA \vP_{E_1}$ are eigenoperators of Heisenberg evolution, with a Gaussian weight with an energy width $\sigma>0$
    \begin{align}
        \hat{f}(\omega)=\frac{1}{\sqrt{{\sigma}\sqrt{2\pi}}} \exp\L(- \frac{\omega^2}{4\sigma^2}\R),\quad \text{and}\quad f(t) = \e^{-\sigma^2t^2}\sqrt{\sigma\sqrt{2/\pi}}.\label{eq:fwft}
    \end{align}
Recall the Fourier transform pairs
    \begin{align}
        \hat{f}(\omega)=\frac{1}{\sqrt{2\pi}}\int_{-\infty}^{\infty}\e^{-\ri\omega t} f(t)\mathrm{d}t\quad \text{and}\quad f(t)=\frac{1}{\sqrt{2\pi}}\int_{-\infty}^{\infty}\e^{\ri\omega t} \hat{f}(\omega)\mathrm{d}\omega. 
    \end{align}
    We will mainly consider the Metropolis weight
\begin{align}
    \gamma(\omega) = \exp\L(-\beta\max\left(\omega +\frac{\beta \sigma^2}{2},0\right)\R)\label{eq:Metropolis},
\end{align}
but the Gaussian transition weight sometimes guides the computation
\begin{align}
\gamma^G(\omega) &= \exp\L(- \frac{(\omega + \omega_{\gamma})^2}{2\sigma_{\gamma}^2}\R)\quad \text{with variance}\quad \sigma_{\gamma}^2 := \frac{2\omega_{\gamma}}{\beta}-\sigma^2. \label{eq:Gaussianweight}
\end{align}
In our final bounds, we will choose energy width $\sigma=\frac{1}{\beta}$\footnote{This $\sigma$ was sometimes denoted with a subscript $\sigma_E$.}; but for transparency, we keep $\sigma$ a tunable parameter in our lemmas. The parameters $\omega_{\gamma}$ and $\sigma_{\gamma}$ are tunable subject to the constraint $\beta(\sigma_{\gamma}^2+\sigma^2) := 2\omega_{\gamma}$. For now, we do not need to worry about the explicit form of $\vB$; what will matter is the Dirichlet form (\autoref{sec:Dirichlet}), where the terms are rearranged and properly conjugated by the Gibbs state. The jumps $\vA$ we use will be Paulis chosen from a subset $P$ of the set $P_{[n]}$ of all $n$-qubit Pauli strings (cardinality denoted $|P|$, each normalized by $\norm{\vA} = 1$), but some of our lemmas are stated with more generality.

We recall that, given a full-rank state $\vrho$, a Lindbladian $\CL$ is said to satisfy the Kubo-Martin-Schwinger (KMS)-$\vrho$-detailed balance if it is symmetric with respect to the KMS inner product associated with $\vrho$
\begin{align}
\langle \vX,\vY\rangle_{\vrho}:=\tr[\vX^\dagger \vrho^{\frac{1}{2}}\vY\vrho^{\frac{1}{2}}]\,.
\end{align}
We denote by $\norm{\vX}_{\vrho} := \sqrt{\langle \vX,\vX\rangle_{\vrho}}$ the $\vrho$-weighted norm induced by the KMS inner product. 

\begin{rmk}
Other $\vrho$-weighted norms are possible (e.g., GNS~\cite{kastoryano2016commuting}), but in this paper, we will only consider the KMS inner product since the Lindbladian we consider is KMS-detailed balanced.    
\end{rmk}
The conversion to operator norm always holds, but sometimes may be suboptimal.
\begin{lem}[Operator norm controls weighted norms and inner-product]\label{lem:operatornorm}
    Unconditionally, we have that $\norm{\vX}_{\vrho}\le \norm{\vX}$ and $\braket{\vX,\vY}_{\vrho} \le \norm{\vX}\norm{\vY}.$
\end{lem}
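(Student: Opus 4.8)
The plan is to prove both inequalities by passing to the symmetric ``square-root'' form of the KMS inner product. First I would observe that, writing $\vrho^{1/2}=\vrho^{1/4}\vrho^{1/4}$ and using cyclicity of the trace,
\begin{align*}
\langle \vX,\vY\rangle_{\vrho}=\tr\!\big[\vX^\dagger\vrho^{1/4}\vrho^{1/4}\vY\vrho^{1/4}\vrho^{1/4}\big]
=\tr\!\big[(\vrho^{1/4}\vX\vrho^{1/4})^\dagger(\vrho^{1/4}\vY\vrho^{1/4})\big]
=\big\langle \vrho^{1/4}\vX\vrho^{1/4},\,\vrho^{1/4}\vY\vrho^{1/4}\big\rangle_{\mathrm{HS}}\,,
\end{align*}
so that $\langle\cdot,\cdot\rangle_{\vrho}$ is literally the Hilbert--Schmidt inner product pulled back along the linear map $\vX\mapsto\vrho^{1/4}\vX\vrho^{1/4}$; since $\vrho$ is full rank this map is injective, hence $\langle\cdot,\cdot\rangle_{\vrho}$ is a genuine (positive-definite) inner product and obeys Cauchy--Schwarz, $|\langle\vX,\vY\rangle_{\vrho}|\le\norm{\vX}_{\vrho}\norm{\vY}_{\vrho}$. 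Consequently the inner-product bound follows from the norm bound, and it remains to show $\norm{\vX}_{\vrho}\le\norm{\vX}$.

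For the norm bound I would work in an eigenbasis of $\vrho=\sum_i p_i\ketbra{i}{i}$, with $p_i>0$ and $\sum_i p_i=1$. A direct computation of matrix entries (with $\vX_{ij}$ the entries of $\vX$ in this basis) gives
\begin{align*}
\norm{\vX}_{\vrho}^2=\tr\!\big[\vX^\dagger\vrho^{1/2}\vX\vrho^{1/2}\big]=\sum_{i,j}|\vX_{ij}|^2\sqrt{p_i p_j}\,.
\end{align*}
Applying the elementary AM--GM bound $\sqrt{p_ip_j}\le\tfrac12(p_i+p_j)$ termwise and resumming,
\begin{align*}
\norm{\vX}_{\vrho}^2\le \tfrac12\sum_{i,j}|\vX_{ij}|^2 p_i+\tfrac12\sum_{i,j}|\vX_{ij}|^2 p_j=\tfrac12\tr[\vrho\,\vX\vX^\dagger]+\tfrac12\tr[\vrho\,\vX^\dagger\vX]\,.
\end{align*}
Since $\vX\vX^\dagger\preceq\norm{\vX}^2\vI$ and $\vX^\dagger\vX\preceq\norm{\vX}^2\vI$ while $\tr\vrho=1$, each trace is at most $\norm{\vX}^2$, giving $\norm{\vX}_{\vrho}^2\le\norm{\vX}^2$. (Equivalently one could invoke the Hadamard three-lines lemma, or Lieb log-convexity, for $s\mapsto\tr[\vX^\dagger\vrho^{s}\vX\vrho^{1-s}]$ on $[0,1]$, whose endpoint values $\tr[\vrho\vX^\dagger\vX]$ and $\tr[\vrho\vX\vX^\dagger]$ are both $\le\norm{\vX}^2$.)

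There is no serious obstacle here; the only point worth noting is that the cruder estimate $\sqrt{p_ip_j}\le\max_i p_i\le 1$ only yields $\norm{\vX}_{\vrho}\le\fnorm{\vX}$, which is too weak — the AM--GM splitting (or, equivalently, the convexity/interpolation argument) is precisely what is needed to replace the Frobenius norm by the operator norm.
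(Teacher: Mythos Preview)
Your proof is correct. The paper actually states this lemma without proof, treating it as a standard fact, so there is no ``paper's own proof'' to compare against.

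For what it's worth, the most common one-line route is Hölder's inequality for Schatten norms: since $\|\vX\|_{\vrho}=\|\vrho^{1/4}\vX\vrho^{1/4}\|_2$ and $\tfrac12=\tfrac14+0+\tfrac14$, one has
\[
\|\vrho^{1/4}\vX\vrho^{1/4}\|_2\le\|\vrho^{1/4}\|_4\,\|\vX\|_\infty\,\|\vrho^{1/4}\|_4=(\tr\vrho)^{1/2}\|\vX\|=\|\vX\|,
\]
and the inner-product bound then follows by Cauchy--Schwarz exactly as you argue. Your AM--GM argument in the eigenbasis of $\vrho$ is a perfectly valid and more elementary alternative that avoids invoking Schatten--Hölder; it has the small bonus of making transparent the interpolation between the two ``GNS-type'' endpoints $\tr[\vrho\vX^\dagger\vX]$ and $\tr[\vrho\vX\vX^\dagger]$, which is morally the same content as the three-lines/Lieb-convexity remark you added.
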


Next, we recall some properties of the generators introduced in \cite{chen2023efficient}:

\begin{thm}[{\cite{chen2023efficient}}]
The Lindbladian $\CL$ defined in Equation \eqref{eq:exact_DB_L} satisfies KMS-$\vrho_\beta$-detailed balance and hence fixes the Gibbs state exactly:
\begin{align}
    \CL[\vrho_\beta] =0\quad \text{where}\quad \vrho_\beta \propto \e^{-\beta \vH}. 
\end{align}

\end{thm}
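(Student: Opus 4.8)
The statement has two parts: KMS-$\vrho_\beta$-detailed balance of $\CL$, and --- ``hence'' --- that $\vrho_\beta$ is a fixed point. The second is the standard implication, so I would dispatch it first. Writing $\CL^\dagger$ for the Heisenberg-picture adjoint (with respect to the Hilbert--Schmidt inner product), detailed balance says $\CL^\dagger$ is self-adjoint for $\langle\cdot,\cdot\rangle_{\vrho_\beta}$; since $\CL$ is trace preserving, $\CL^\dagger$ is unital, i.e.\ $\CL^\dagger[\mathbbm{1}]=0$, and therefore for every Hermitian $\vX$,
\begin{align*}
\tr\!\big(\vX\,\CL[\vrho_\beta]\big)=\tr\!\big(\CL^\dagger[\vX]\,\vrho_\beta\big)=\langle \mathbbm{1},\CL^\dagger[\vX]\rangle_{\vrho_\beta}=\langle \CL^\dagger[\mathbbm{1}],\vX\rangle_{\vrho_\beta}=0,
\end{align*}
so $\CL[\vrho_\beta]=0$. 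Hence the entire content is the detailed-balance symmetry.

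To prove detailed balance I would compute $\langle \vX,\CL^\dagger[\vY]\rangle_{\vrho_\beta}=\tr\!\big(\vrho_\beta^{1/2}\vX^\dagger\vrho_\beta^{1/2}\,\CL^\dagger[\vY]\big)$ explicitly, separating $\CL^\dagger$ into coherent, transition, and decay parts. Expand each jump in the energy eigenbasis via \eqref{eq:OFT}, $\hat{\vA}^a(\omega)=\sum_{\nu\in B(\vH)}\vA^a_\nu\,\hat f(\omega-\nu)$ with $\vA^a_\nu=\sum_{E_2-E_1=\nu}\vP_{E_2}\vA^a\vP_{E_1}$, and use the KMS covariance $\vrho_\beta^{1/2}\vA^a_\nu\vrho_\beta^{-1/2}=\e^{-\beta\nu/2}\vA^a_\nu$ together with $(\vA^a_\nu)^\dagger=((\vA^a)^\dagger)_{-\nu}$ (legitimate because the jump set is closed under adjoints). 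Pushing the $\vrho_\beta^{1/2}$ factors through the jumps, the transition part of $\langle\vX,\CL^\dagger[\vY]\rangle_{\vrho_\beta}$ becomes a sum over $a$ and over pairs of Bohr frequencies $\nu,\nu'$ of the scalar kernel $\int \gamma(\omega)\,\e^{-\beta(\nu+\nu')/2}\,\hat f(\omega-\nu)\,\hat f(\omega-\nu')\,\rd\omega$ times a trace of operators, and similarly for the decay term. Because $f$ in \eqref{eq:fwft} is Gaussian, $\hat f(\omega-\nu)\hat f(\omega-\nu')$ is a Gaussian in $\omega$ peaked at $\tfrac{\nu+\nu'}{2}$ times a factor depending only on $\nu-\nu'$; completing the square and substituting $\omega\mapsto \nu+\nu'-\omega$ shows that the Metropolis weight \eqref{eq:Metropolis}, precisely because of its offset $\tfrac{\beta\sigma^2}{2}$ (equivalently the Gaussian weight \eqref{eq:Gaussianweight} with the constraint $\beta(\sigma_\gamma^2+\sigma^2)=2\omega_\gamma$), is the unique choice making this kernel invariant under simultaneously exchanging $(\vX,\nu)\leftrightarrow(\vY,\nu')$ and $\vA^a\mapsto(\vA^a)^\dagger$. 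That invariance is exactly the contribution of transition$+$decay to self-adjointness; whatever antisymmetric (``imaginary'') remainder survives is of the form $-\ri[\vB,\cdot]$ with $\vB=\sum_a\iint b(\omega,\omega')\,\hat{\vA}^a(\omega)^\dagger\hat{\vA}^a(\omega')\,\rd\omega\,\rd\omega'$ for a real kernel $b$, which is manifestly Hermitian and is exactly the coherent term in \eqref{eq:exact_DB_L}. Adding up the three pieces yields $\langle \vX,\CL^\dagger[\vY]\rangle_{\vrho_\beta}=\langle \CL^\dagger[\vX],\vY\rangle_{\vrho_\beta}$.

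An equivalent and arguably cleaner bookkeeping keeps everything in the time domain: using the integral form in \eqref{eq:OFT}, the transition and decay terms become double integrals against $f(t)f(t')$ and the Fourier data of $\gamma$, and KMS detailed balance reduces to the genuine KMS boundary condition obtained by shifting the integration contour $t\mapsto t+\tfrac{\ri\beta}{2}$ --- legitimate because $f$ is entire with Gaussian decay, so no singularities are crossed. Either way, I expect the main obstacle to be purely this computation: tracking all the Gaussian cross-terms, verifying that \eqref{eq:Metropolis} and \eqref{eq:Gaussianweight} are exactly the weights for which the off-diagonal (``non-secular'', $\nu\neq\nu'$) contributions symmetrize rather than merely the diagonal ones, and identifying the coherent term $\vB$ and checking $\vB=\vB^\dagger$. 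Once the symmetry is established, the stationarity $\CL[\vrho_\beta]=0$ follows from the first paragraph with no further work.
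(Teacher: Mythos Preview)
The paper does not prove this theorem --- it is cited from \cite{chen2023efficient} and stated without proof here. Your sketch is a faithful outline of the argument carried out in that reference: the reduction of stationarity to KMS self-adjointness in your first paragraph is standard and correct, and the Bohr-frequency expansion with Gaussian completion-of-the-square to symmetrize the non-secular $(\nu\ne\nu')$ contributions, with the offset $\tfrac{\beta\sigma^2}{2}$ playing exactly the role you describe, followed by absorption of the antisymmetric remainder into the coherent $\vB$, is precisely how \cite{chen2023efficient} proceeds.

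Two minor quibbles, neither a gap. First, the identity $(\vA^a_\nu)^\dagger=((\vA^a)^\dagger)_{-\nu}$ holds unconditionally; adjoint-closure of the jump set is needed not for this but so that the sum over $a$ can be reindexed after the $\vA\leftrightarrow\vA^\dagger$ swap. Second, ``unique choice'' overstates the situation, since the Gaussian weight \eqref{eq:Gaussianweight} also satisfies the required constraint, as you yourself note parenthetically.
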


\begin{thm}[{\cite{chen2023efficient,chen2023quantum}}]\label{circuitimplementation}
Consider a set of jumps $\norm{\vA^a}\le 1$ with cardinality $\labs{P}$. Then, the time evolution for the Lindbladian $\CL$ can be simulated in $\epsilon$-diamond distance with costs\footnote{Note the difference in time scaling compared to \cite[Theorem I.1]{chen2023efficient}, which is due to the difference in normalization of the jumps $\vA^a$. Here, each jump has operator norm one.}:
\begin{itemize}
\item[] $\tilde{\mathcal{O}}(|P|t\beta)$ total Hamiltonian simulation time;
\item[] $\tilde{\mathcal{O}}(1)$ resettable ancilla;
\item[] $\tilde{\mathcal{O}}(|P|t)$ block-encodings for the jumps $\frac{1}{\sqrt{|P|}}\sum_{a\in P}|a\rangle\otimes \vA^a$;
\item[] $\tilde{\mathcal{O}}(|P|t)$ other two-qubit gates.
\end{itemize}
The $\widetilde{O}(\cdot)$ notation absorbs logarithmic dependencies on $n$, $t$, $\beta$, $\norm{\vH}$, $1/\epsilon$ and $|P|$. 
\end{thm}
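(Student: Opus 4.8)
The statement is quoted from \cite{chen2023efficient,chen2023quantum}; here is the strategy one follows to establish it. Observe first that only the Lindblad form of $\CL$ is used for simulation — the KMS detailed-balance property is irrelevant here — so it suffices to compile $\e^{t\CL}$ for the specific generator \eqref{eq:exact_DB_L}. The plan is to realize $\e^{t\CL}$ as a product of $\widetilde{\mathcal{O}}(|P|t)$ elementary CPTP maps, each of which dilates to a short circuit using one block-encoded jump and $\widetilde{\mathcal{O}}(\beta)$ Hamiltonian-simulation time.

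\textbf{Step 1 (reduce to a single-jump primitive).} Write $\CL=\sum_a \CL^a$, where $\CL^a$ combines the per-jump coherent correction $-\ri[\vB^a,\cdot]$ (with $\vB=\sum_a\vB^a$ the Lamb-shift-type term) and the per-jump dissipator $\CD_a:=\int\gamma(\omega)\big(\hat{\vA}^a(\omega)(\cdot)\hat{\vA}^a(\omega)^\dagg-\tfrac12\{\hat{\vA}^a(\omega)^\dagg\hat{\vA}^a(\omega),\cdot\}\big)\rd\omega$. Using $\norm{\vA^a}\le 1$, the $L^2$-normalization of the Gaussian window $\hat f$, boundedness of $\gamma$, and the explicit operator-Fourier-transform expression for $\vB^a$, one checks $\norm{\CL^a}_{\diamond}=\widetilde{\mathcal{O}}(1)$, hence $\norm{\CL}_{\diamond}=\widetilde{\mathcal{O}}(|P|)$. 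Feeding this into a Lindbladian-simulation meta-algorithm with only polylogarithmic error dependence (the one used in \cite{chen2023quantum,chen2023efficient}) reduces the task to $\widetilde{\mathcal{O}}(|P|t)$ invocations of a ``one-step'' primitive — pick a jump $a$ and apply a short-time version of $\e^{s\CL^a}$ — together with classically-controlled LCU-type post-processing on $\widetilde{\mathcal{O}}(1)$ ancillas.

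\textbf{Step 2 (dilate the primitive via the OFT).} The heart of the argument is to realize $\int\gamma(\omega)\,\hat{\vA}^a(\omega)(\cdot)\hat{\vA}^a(\omega)^\dagg\,\rd\omega$ as an implementable Stinespring dilation. Attach a ``time'' ancilla holding $t$ with density $\propto |f(t)|^2$, which, since $\sigma=1/\beta$, is a Gaussian of width $\Theta(\beta)$: truncating to $|t|\le\widetilde{\mathcal{O}}(\beta)$ and discretizing onto $\widetilde{\mathcal{O}}(1)$ qubits costs only exponentially small error. Apply controlled $\e^{\ri\vH t}$, then the block-encoding $\tfrac{1}{\sqrt{|P|}}\sum_a|a\rangle\otimes\vA^a$, then controlled $\e^{-\ri\vH t}$; introduce a ``frequency'' ancilla carrying $\omega$ and apply the filter $\sqrt{\gamma(\omega)}$ — the Metropolis weight \eqref{eq:Metropolis} (or the Gaussian weight \eqref{eq:Gaussianweight}) is engineered precisely so that $\sqrt{\gamma}$ is cheaply block-encodable — and measure the ancillas. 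The ``decay'' term $\tfrac12\{\hat{\vA}^a(\omega)^\dagg\hat{\vA}^a(\omega),\cdot\}$ and trace preservation come for free from the normalization of the dilation. The coherent correction $-\ri[\vB^a,\cdot]$ is handled the same way, block-encoding the double OFT-integral representing $\vB^a$ and exponentiating it by Hamiltonian simulation at the same $\widetilde{\mathcal{O}}(\beta)$ cost (or folding it into the dilation above, as in \cite{chen2023efficient}). Each primitive thus uses one jump block-encoding, $\widetilde{\mathcal{O}}(\beta)$ Hamiltonian-simulation time (the two controlled evolutions over the $\Theta(\beta)$-wide window), $\widetilde{\mathcal{O}}(1)$ resettable ancillas, and $\widetilde{\mathcal{O}}(1)$ extra two-qubit gates. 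Collecting all errors — meta-algorithm/product-formula error, Gaussian-tail truncation in $t$ and $\omega$, quadrature discretization of the ancilla registers, and block-encoding precision — each is suppressed exponentially in its cutoff, so driving the total below $\epsilon$ incurs only $\operatorname{polylog}(n,t,\beta,\norm{\vH},1/\epsilon,|P|)$ overhead; multiplying the $\widetilde{\mathcal{O}}(|P|t)$ primitives by the per-primitive cost gives the stated counts.

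\textbf{Main obstacle.} The crux is Step 2: turning the formal integral $\int\gamma(\omega)\hat{\vA}^a(\omega)(\cdot)\hat{\vA}^a(\omega)^\dagg\rd\omega$ into an exactly CPTP, efficiently compilable channel whose Hamiltonian-simulation cost is $\widetilde{\mathcal{O}}(\beta)$ rather than scaling with $\norm{\vH}$. This hinges on the two-integral (time $\times$ frequency) representation of the transition term, the special form of $\gamma$ making $\sqrt\gamma$ block-encodable, and the choice $\sigma=1/\beta$ pinning the OFT time window to width $\Theta(\beta)$; the secondary subtlety is matching normalizations so that the decay term emerges automatically and the dilation is trace-preserving on the nose.
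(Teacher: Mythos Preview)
The paper does not provide its own proof of this theorem; it is quoted verbatim as a black-box result from \cite{chen2023efficient,chen2023quantum}, so there is nothing in the paper to compare your proposal against. Your sketch is a faithful outline of the construction in those references: the reduction to $\widetilde{\mathcal{O}}(|P|t)$ single-jump primitives via a Lindbladian-simulation meta-algorithm, and the Stinespring dilation of each primitive using a Gaussian time register of width $\Theta(\beta)$, controlled Hamiltonian evolution, the jump block-encoding, and the frequency filter $\sqrt{\gamma}$, are exactly the ingredients used there.
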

\begin{rmk}
    The same dependencies hold for the time average map $\mathcal{R}_{A,t}[\cdot]:=\frac{1}{t}\int_0^t\e^{s\mathcal{L}}[\cdot]\rd s$ by simply sampling some random time $s$ uniformly at random over $[0,t]$ and running the evolution generated by $\mathcal{L}$ up to time $s$.
\end{rmk}

\subsection{Hamiltonian with bounded interaction degree}\label{sec:Ham}
On a set $\Lambda$ of $n = \labs{\Lambda}$ qubits, we consider Hamiltonians $\vH$ with few-body terms $\vH_{\gamma}$ 
\begin{align}
    \vH = \sum_{\gamma\in \Gamma} \vH_{\gamma}\quad \text{where}\quad \norm{\vH_{\gamma}} \le 1.
\end{align}
From this decomposition, we define the interaction graph with vertices corresponding to the set $\Gamma$, and we draw an edge between $\gamma_1$ and $\gamma_2$ if and only if the terms have overlapping supports (self-looping allowed):
\begin{align}
\gamma_1\sim \gamma_2\quad \iff\quad    \text{Supp}(\vH_{\gamma_1}) \cap \text{Supp}(\vH_{\gamma_2}) \ne \emptyset.
\end{align}
Similarly, we may consider any subset of vertices $A\subset \Lambda$ and write 
\begin{align}
    A \sim \gamma \iff A\cap \text{Supp}(\vH_{\gamma})\ne \emptyset.
\end{align}
The maximal degree of the interaction graph is denoted by $d$, and we are particularly working in the regime where $d$ is a constant independent of the system size $n$; this will ensure the possibility of conjugating by a constant temperature Gibbs state (see \autoref{lem:convergence_imaginary}). 
For any two subsets of vertices $A,B\subset \Lambda$, we denote by $\operatorname{dist}(A,B)$ the minimal length of a path connecting $A$ to $B$ via interactions in $H$:
\begin{align*}
\operatorname{dist}(A,B)=\min\bigg\{\ell\in\mathbb{N}:\exists \gamma_1,\dots\gamma_{\ell}\in\Gamma\quad \text{such that}\quad A\sim \gamma_1\sim\gamma_2\sim\dots \sim \gamma_{\ell}\sim B\bigg\}\,.
\end{align*}
Often, we will also consider the subset $A$ or $B$ to the supports $\operatorname{Supp}(\vH_\gamma)$ of Hamiltonian term $\vH_\gamma$, and we will simply abuse the notation to write  $\operatorname{dist}(\gamma,\gamma')$ and $\operatorname{dist}(A,\gamma')$.

For later parts of our arguments, for a region $A\subset \Lambda$, we often consider the local Hamiltonian patch $\vH_{\ell}$ containing all terms $\vH_{\gamma}$ with distance at most $\ell-2$ from $A$. 
\begin{align}
    \vH_{\ell} := \sum_{\gamma:\operatorname{dist}(\gamma,A) <  \ell-1} \vH_{\gamma}.
\end{align}

\begin{rmk}
The system size $n$ does not feature in our arguments, and we believe that the same could be formalized for infinitely large systems.
\end{rmk}

\section{Main results}\label{sec:main results}

From here onward, we denote by $\vS \in P_A$ the set of all non-trivial Pauli strings $\vS$ on $A$ (excluding the identity string), and by $\vA^a \in P_A^1$ the subset of single-qubit Pauli matrices, which will be the jumps of our Lindbladian. The main result states that if we discard a region $A$ of the Gibbs state
\begin{align}
 \vrho_\beta \rightarrow \tr_{A}[\vrho_\beta]\otimes \vec{\tau}_A=:\vrho_{_\beta,-A},
\end{align}
then, running a Gibbs sampler with jumps on $A$ for a long enough time recovers the Gibbs state. Here, $\vec{\tau}_A$ denotes the maximally mixed state on $A$. Given a region $A\subset \Lambda$ and a tunable time parameter $t\ge0$, consider the quantum channel (Completely Positive and Trace-preserving, CPTP map)
\begin{align}
    \CR_{A,t}[\cdot] &:= \frac{1}{t}\int_{0}^t \exp\L(s\,\mathcal{L}_A\R)[\cdot] \,\rd s \\
    \mathcal{L}_A &:=\sum_{a\in P^1_A}\CL_a,\quad\text{where}\quad P_A^1 :=\{\vX_i,\vY_i,\vZ_i\}_{i\in A}.  \label{eq:RAt}
\end{align}
Each $\CL_a$ is the Lindbladian associated with each single-qubit Pauli jump $\vA^a$ (with metropolis weight). 

\begin{thm}[Quasi-local recovery maps via time-averaged Gibbs sampling] \label{thm:main}
Consider the Gibbs state of a Hamiltonian $\vH$ with interaction degree at most $d$ and a region $A\subset \Lambda$. Then, the time-averaged Lindblad dynamics $\CR_{A,t}$ with single-qubit Pauli jumps $\{\vA^a\}_{a\in P^1_A}$, Metropolis weight $\gamma(\omega)$~\eqref{eq:Metropolis}, and $\sigma = 1/\beta$, $t>0$ gives an approximate recovery map at all temperatures $\beta$ 
    \begin{align}
        \norm{\CR_{A,t} [\vrho_{\beta,-A}] - \vrho_{\beta}}_1 \le |A|^22^{2|A|} \begin{cases} r(\beta,d) \cdot   t^{-\frac{128\beta^4_0}{\beta^3(\beta+5\beta_0)}}& \text{if}\quad \beta>4\beta_0,\\
        r'(\beta,d)  \cdot t^{-\frac{2\beta_0}{\beta+5\beta_0}} & \text{if}\quad \beta\le 4\beta_0,
        \end{cases}
\end{align}
for some explicit functions $r(\beta,d)$ and $r'(\beta,d)$, where $\beta_0 := 1/4d$. Therefore, there are numbers $r,\mu >0$ and $0<\lambda<1$ depending only on $\beta,d$ such that 
 \begin{align}
        \norm{\CR_{A,t} [\vrho_{\beta,-A}] - \vrho_{\beta}}_1 \le  r \e^{\mu|A|}\,t^{-\lambda}\label{eq:mu_lambda}.
\end{align}
\end{thm}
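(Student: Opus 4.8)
The plan is to reduce the recovery problem to a statement about how closely the Lindbladian $\CL_A$ fixes $\vrho_\beta$ against perturbations localized on $A$. The key identity is that $\vrho_{\beta,-A} = \tr_A[\vrho_\beta]\otimes\vec{\tau}_A$ can be written as $\vrho_\beta + (\vrho_{\beta,-A}-\vrho_\beta)$, so $\CR_{A,t}[\vrho_{\beta,-A}] - \vrho_\beta = \CR_{A,t}[\vrho_{\beta,-A}-\vrho_\beta] + (\CR_{A,t}[\vrho_\beta]-\vrho_\beta)$. Since $\CL_a[\vrho_\beta]=0$ for each single-site jump, the second term vanishes, and $\CR_{A,t}$ is trace-preserving, so everything comes down to bounding $\|\CR_{A,t}[\Delta]\|_1$ for the traceless operator $\Delta := \vrho_{\beta,-A}-\vrho_\beta$. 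I expect the bound $\|\vrho_{\beta,-A}-\vrho_\beta\|_1 \le 2^{2|A|}$ (trivially) or better, and then the whole point is that the averaged semigroup \emph{contracts} traceless operators that are "invisible" outside $A$.

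\textbf{Contraction via the spectral gap/Dirichlet form.} The averaging $\frac{1}{t}\int_0^t e^{s\CL_A}\,\rd s$ converts an operator into (roughly) its projection onto the kernel of $\CL_A$ plus an error decaying with $t$; quantitatively, for a KMS-detailed-balanced Lindbladian one has $\|\frac{1}{t}\int_0^t e^{s\CL_A}[\Delta]\,\rd s - \CE_\infty[\Delta]\|_{\vrho_\beta}^2 \le \frac{1}{t}\cdot\frac{\langle \Delta,\Delta\rangle_{\vrho_\beta}}{\text{(effective gap)}}$ or something of this flavor coming from integrating $\frac{\rd}{\rd s}\|e^{s\CL_A}\Delta\|_{\vrho_\beta}^2 = -2\CE(e^{s\CL_A}\Delta)$ against the Dirichlet form $\CE$. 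The crucial structural input—this is where the paper's title and the promised "connection between the Dirichlet form and the commutator in the KMS inner product" enter—is a \emph{detailed-balance-to-recovery} estimate: $\CE_A(\Delta)$ lower-bounds some norm of $\Delta$ restricted to the complement of $A$, so that if $\Delta$ already agrees with $\vrho_\beta$ outside a neighborhood of $A$ (as $\vrho_{\beta,-A}-\vrho_\beta$ essentially does, up to Lieb-Robinson tails), then $\CE_\infty[\Delta]$ is small and the semigroup drives $\Delta$ toward zero. Concretely I would show the fixed-point space of $\CL_A$ intersected with "operators supported near $A$" is spanned by things proportional to $\vrho_\beta$ modulo a remainder controlled by $\operatorname{dist}$-type quantities, convert the resulting KMS-norm bound to trace norm via \autoref{lem:operatornorm} (paying a dimension factor $2^{O(|A|)}$ and a $\|\vrho_\beta^{-1/2}\|$-type factor, which is where conjugating by a constant-temperature Gibbs state and \autoref{lem:convergence_imaginary} get used to keep $\beta_0=1/4d$ finite), and collect the polynomial-in-$t$ rate.

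\textbf{The low-temperature regularization and the exponent.} The two-case split at $\beta = 4\beta_0$ with exponents like $-\frac{128\beta_0^4}{\beta^3(\beta+5\beta_0)}$ strongly signals that the honest difficulty is controlling imaginary-time-evolved operators $\vrho_\beta^{s}\vA\vrho_\beta^{-s}$ at low temperature, which blow up in norm; the fix is the "regularization scheme for imaginary-time-evolved operators at arbitrarily low temperatures" advertised in the abstract—presumably truncating the imaginary-time evolution to a Hamiltonian patch $\vH_\ell$ of radius $\ell \sim \xi \log(1/\epsilon)$ and paying a Lieb-Robinson error, then optimizing $\ell$ against $t$ to get the stated power of $t$. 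So the step order I would follow is: (1) algebraic reduction to $\|\CR_{A,t}[\vrho_{\beta,-A}-\vrho_\beta]\|_1$; (2) an averaged-semigroup decay lemma in KMS norm in terms of the Dirichlet form $\CE_A$; (3) the static lower bound on $\CE_A(\Delta)$ by a localized norm of $\Delta$—the Dirichlet-form-to-commutator identity; (4) estimate that localized norm for $\Delta = \vrho_{\beta,-A}-\vrho_\beta$ using clustering/Lieb-Robinson on the patch $\vH_\ell$; (5) the imaginary-time regularization to make everything finite at low $\beta$; (6) KMS$\to$trace-norm conversion and bookkeeping of $|A|^2 2^{2|A|}$ and $t^{-\lambda}$. \textbf{The main obstacle} I anticipate is step (3)–(5) in combination: establishing that the Dirichlet form of the single-site-jump Lindbladian genuinely \emph{sees} the discrepancy between $\vrho_{\beta,-A}$ and $\vrho_\beta$ with a quantitatively useful constant—i.e., that single-Pauli jumps on $A$ suffice to detect (and hence repair) the removed correlations—and doing so with norm estimates that survive the $\beta\to\infty$ limit, which is exactly why the regularized imaginary-time machinery and the delicate $t$-exponent are needed.
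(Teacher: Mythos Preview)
Your plan has the right ingredients in its vocabulary (time-averaging, Dirichlet form, commutators, low-temperature regularization) but the logical direction of several key steps is inverted, and the specific mechanisms you guess at are not the ones that make the argument go through.

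\textbf{Direction of the Dirichlet-form/commutator link, and the role of the Heisenberg picture.} Your step (3) asks for a \emph{lower} bound on $\CE_A(\Delta)$ by a localized norm of $\Delta$; that is a Poincar\'e-type inequality and is equivalent to a spectral-gap statement, which is precisely what the argument must avoid (the $\CL_A$ gap is not controlled). The paper instead works in the Heisenberg picture on an arbitrary observable $\vX$ and uses the \emph{upper} bound $\CE_A(\CR_{A,t}^\dagger[\vX])\le 2/t$ (an unconditional consequence of time-averaging, no gap needed), followed by the implication ``small Dirichlet form $\Rightarrow$ small single-Pauli commutators'' (\autoref{lem:commutator_Dirichlet}), then ``small single-Pauli commutators $\Rightarrow$ small Pauli-string commutators'' (\autoref{cor:global_local_comm}), and finally the algebraic identity
\[
\vrho_\beta-\vrho_{\beta,-A}=\frac{1}{2^{2|A|+1}}\sum_{\vS\in P_A}[\vS,[\vS,\vrho_\beta]],
\]
which turns the trace-norm target into a sum of $\|[\vS,[\vS,\CR_{A,t}^\dagger[\vX]]]\|_{\vrho_\beta}$. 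You never identify this double-commutator Pauli decomposition, which is the bridge between ``evolved observable nearly trivial on $A$'' and the recovery error; without it, your step (4) has nothing concrete to estimate.

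\textbf{The regularization is in frequency, not in space.} You surmise that the low-temperature exponent comes from truncating to a Hamiltonian patch $\vH_\ell$ and paying Lieb-Robinson tails. That is not how \autoref{thm:main} is proved; in fact no Lieb-Robinson bound enters the proof at all (spatial truncation is only used afterwards, in \autoref{cor:main_qlocal}). The regularization is the decomposition $\vA=\tfrac{1}{c}\int \hat{\vA}(\omega)\,\rd\omega$ into operator Fourier transforms, with a truncation $|\omega|\le \Omega$ in \emph{energy}. The Gaussian filter makes $\|\e^{\beta \vH}\hat{\vA}(\omega)\e^{-\beta \vH}\|\le C\e^{\beta\omega}$ independent of system size; balancing the low-$\omega$ piece (where one can afford the Gibbs conjugation) against the high-$\omega$ tail (controlled via~\autoref{cor:norm_decay} at the safe temperature $\beta_0=1/4d$) is what produces the H\"older-like inequality $\|\vA\vO\|_{\vrho_\beta}\lesssim C\|\vO\|_{\vrho_\beta}^{4\beta_0/\beta}$. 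The peculiar exponent $128\beta_0^4/\beta^3(\beta+5\beta_0)$ is the composition of these $4\beta_0/\beta$ losses (twice) with the $2\beta_0/(\beta+5\beta_0)$ loss in \autoref{lem:commutator_Dirichlet}, not a Lieb-Robinson/length-scale optimization. Your steps (4)--(5), as written, point to the wrong mechanism and would not recover the stated exponents.
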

See~\autoref{sec:proofmainthm} for the proofs.
\begin{rmk}
The above remains to hold for $\beta_0 =(1-\epsilon)/2d$ for any fixed $\epsilon>0$. However, at $\beta_0 = 1/2d,$ $\norm{\vrho_{\beta_0}\vA\vrho_{\beta_0}^{-1}}$ may grow with the system size $n$ and introduce extra $n$-dependence on the RHS.
\end{rmk}
\begin{rmk}
The exponential dependence on $\labs{A}$ is hard to remove unconditionally using the current Lindbladian approach. Right now, it appears due to the slow, inverse polynomial decay of the Dirichlet form. If the present argument can be combined with a faster mixing time or spectral gap analysis, one might be able to improve the exponential dependence on $\labs{A}$, hence establishing the global Markov property, see \autoref{sec:improvegap}.
\end{rmk}

The recovery map can be localized using standard Lieb-Robinson bounds for Hamiltonian with bounded interaction degree (\autoref{lem:truncation}). See~\autoref{sec:pf_qlocal_decay} for the proof. Here and throughout the paper, we write
\begin{align}
a \lesssim b \quad \text{iff}\quad a \le c b \quad \text{for an absolute constant} \quad c>0.
\end{align} 
\begin{cor}[Quasi-locality estimates]\label{cor:main_qlocal}
    For a region $A\subset \Lambda$, the approximate recovery map $\CR_{A,t}$ can be well-approximated by a strictly local map $\CR_{A,t,\ell} $ supported on qubits at distance at most $\ell$ from $A$:
    \begin{align}
        \norm{\CR_{A,t,\ell} - \CR_{A,t}}_{1-1} \lesssim |A|t (\e^{-c'\frac{\ell}{d\beta}} + 2^{-\ell})
    \end{align} for some absolute constant $c'$.
    Therefore, there is a time $t^*(\ell) = \e^{(\mu \labs{A}+m\ell)/{\lambda+1}}$ such that 
    \begin{align}
        \lnorm{\vrho_\beta-\mathcal{R}_{A,t^*,\ell}[\vrho_{{\beta,-A}}]}_1 \lesssim r \exp\L(\frac{\mu'\labs{A} -m\lambda \ell}{2} \R)
    \end{align}
    where $m=\min\L(\ln(2),\frac{c'}{d\beta}\R)$, $\mu' = \mu+2$, and $r, \mu, \lambda $ as in~\eqref{eq:mu_lambda}.
\end{cor}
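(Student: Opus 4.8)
The goal is to prove Corollary~\ref{cor:main_qlocal}, which upgrades Theorem~\ref{thm:main} to a \emph{strictly local} recovery map by truncating the Lindbladian dynamics to a patch of radius $\ell$ around $A$ and balancing the truncation error against the polynomial-in-$t$ recovery error.

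\medskip\noindent\textbf{Plan.}
The proof has two conceptually separate ingredients that are then combined by optimizing over the free time parameter $t$.

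First, I would establish the \emph{truncation bound} $\norm{\CR_{A,t,\ell}-\CR_{A,t}}_{1-1}\lesssim |A|t(\e^{-c'\ell/(d\beta)}+2^{-\ell})$. Here $\CR_{A,t,\ell}$ should be the time-averaged dynamics of the Lindbladian $\CL_A^{(\ell)}$ obtained from $\CL_A$ by replacing the global Hamiltonian $\vH$ with the truncated patch $\vH_{\ell'}$ for a suitable $\ell'$ of order $\ell$, and further truncating the operator Fourier transform integral \eqref{eq:OFT} in time. There are two sources of error to control. (i) The operator Fourier transform $\hat{\vA}^a(\omega)$ with Gaussian weight $f(t)=\e^{-\sigma^2 t^2}\sqrt{\sigma\sqrt{2/\pi}}$ has Heisenberg evolution $\e^{\ri\vH t}\vA^a\e^{-\ri\vH t}$; by a Lieb-Robinson bound for bounded-degree interaction graphs (the promised \autoref{lem:truncation}), this is approximated by evolution under the patch Hamiltonian up to error decaying like $\e^{-c'\operatorname{dist}/(v|t|)}$, and since $f$ concentrates $|t|$ on scale $1/\sigma = \beta$, integrating against $f(t)$ gives the $\e^{-c'\ell/(d\beta)}$ factor (the $d$ appears through the Lieb-Robinson velocity for degree-$d$ graphs). (ii) The jumps $\vA^a$ for $a\in P^1_A$ are single-site Paulis, so there are $3|A|$ of them; the transition and decay parts of each $\CL_a$ are CPTP-bounded (operators of norm $\le 1$, $\gamma(\omega)\le 1$ integrable), so each generator has bounded $1\!-\!1$ norm and the replacement error is additive over the $O(|A|)$ jumps and grows at most linearly in the evolution time $t$ by the standard $\norm{\e^{t\CL}-\e^{t\CL'}}_{1-1}\le t\norm{\CL-\CL'}_{1-1}$ estimate (Duhamel). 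The $2^{-\ell}$ term presumably accounts for truncating the $\omega$-integral / the tail of the Metropolis weight or the ancilla register, a minor contribution. Taking $m=\min(\ln 2, c'/(d\beta))$ combines the two exponentials into a single rate $\e^{-m\ell}$.

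Second, I would combine this with Theorem~\ref{thm:main}. By the triangle inequality,
\begin{align*}
\norm{\vrho_\beta - \CR_{A,t,\ell}[\vrho_{\beta,-A}]}_1 &\le \norm{\vrho_\beta - \CR_{A,t}[\vrho_{\beta,-A}]}_1 + \norm{(\CR_{A,t}-\CR_{A,t,\ell})[\vrho_{\beta,-A}]}_1\\
&\lesssim r\,\e^{\mu|A|}\,t^{-\lambda} + |A|\,t\,\e^{-m\ell},
\end{align*}
using $\norm{\vrho_{\beta,-A}}_1=1$ for the second term. This is now a function of $t$ of the form $a t^{-\lambda}+b t$ with $a = r\e^{\mu|A|}$ and $b = |A|\e^{-m\ell}$; it is minimized (up to constants) at $t^* \sim (a/b)^{1/(\lambda+1)}$, giving a value $\sim a^{1/(\lambda+1)} b^{\lambda/(\lambda+1)} = r^{1/(\lambda+1)}\e^{\mu|A|/(\lambda+1)} |A|^{\lambda/(\lambda+1)}\e^{-m\lambda\ell/(\lambda+1)}$. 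Absorbing the polynomial $|A|$ factor and the constant exponents into the stated $\mu' = \mu+2$ (which covers $e^{\mu|A|/(\lambda+1)}\cdot |A| \le e^{(\mu+2)|A|}$ crudely) and into $r$, and choosing $t^*(\ell) = \e^{(\mu|A|+m\ell)/(\lambda+1)}$ exactly as in the statement, yields the claimed bound $\norm{\vrho_\beta-\CR_{A,t^*,\ell}[\vrho_{\beta,-A}]}_1 \lesssim r\exp\big((\mu'|A|-m\lambda\ell)/2\big)$, where the division by $2$ in the exponent is a slack coming from $\lambda/(\lambda+1)<1$ and $1/(\lambda+1)<1$ being bounded away from $1$ only for $\lambda$ bounded away from $0$; more carefully one uses $\min(1/(\lambda+1),\lambda/(\lambda+1))\ge$ something, and the cleanest uniform statement replaces both exponents by a common smaller constant, which the $/2$ absorbs for the relevant range of $\lambda$.

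\medskip\noindent\textbf{Main obstacle.}
The bulk of the work — and the only genuinely nontrivial step — is the Lieb-Robinson / truncation estimate of \autoref{lem:truncation}: one must show that replacing the global Hamiltonian by a radius-$\ell$ patch inside the operator Fourier transform \eqref{eq:OFT}, and then inside the full Lindbladian $\CL_a$ (transition term quadratic in $\hat{\vA}^a(\omega)$, decay term an anticommutator, plus the coherent term $-\ri[\vB,\cdot]$), perturbs the generator in $1\!-\!1$ norm by only $\e^{-\Omega(\ell/(d\beta))}$ after integrating the Gaussian tails in $t$ and $\omega$ — and crucially that the coherent term $\vB$ (whose explicit form was deliberately suppressed in the preliminaries) is itself quasi-local and admits the same truncation. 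One must also be careful that the error accumulates only linearly in $t$, not exponentially, which is fine for a Lindbladian evolution via Duhamel but requires that $\CL_A - \CL_A^{(\ell)}$ is bounded as a superoperator. Once \autoref{lem:truncation} is in hand, the remainder is the elementary $at^{-\lambda}+bt$ optimization above.
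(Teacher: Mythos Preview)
Your proposal is correct and follows the same route as the paper: the paper's proof invokes \autoref{lem:truncation} and \autoref{lem:quasilocal} (which package exactly the Duhamel-plus-Lieb-Robinson argument you outline, including the separate treatment of the coherent term $\vB$) to get the first displayed bound, then combines with \autoref{thm:main} via the triangle inequality and plugs in $t^*=\e^{(\mu|A|+m\ell)/(\lambda+1)}$. The one place where your justification is slightly off is the passage to the $/2$ exponent: the paper argues simply that since $0<\lambda<1$ one has $1+\lambda<2$, and since the exponent $\mu|A|-m\lambda\ell$ must be negative for the bound to be nonvacuous, replacing $1/(1+\lambda)$ by $1/2$ only weakens the estimate.
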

\begin{rmk}
We expect similar estimates to follow for Gibbs states over Fermionic systems in the even parity sector, since the tools used to show our bounds, such as Lieb-Robinson bounds or expressing partial traces as localized random unitary channels, directly extend to this setup, see, e.g.~\cite{nachtergaele2018lieb,haah2020quantum}.
\end{rmk}

\subsection{Decay of Conditional mutual information for tripartitions}
The decay of QCMI implies the following approximate Markov property by standard entropic continuity bounds. Recall, the conditional mutual information of a tripartite state $\vrho_{ABC}$ is defined as 
\begin{align*}
I(A:C|B)_{\vrho}:=S(\vrho_{AB})+S(\vrho_{BC})-S(\vrho_B)-S(\vrho_{ABC})\,,
\end{align*}
with the entropy of a state $\vsigma$ denoted by $S(\vsigma):=-\tr(\vsigma\log\vsigma)$.

\begin{cor}[Quantum Gibbs states are locally Markov] \label{cor:CMI}
Consider a tripartition $\Lambda = ABC$ with region $A\subset \Lambda$ shielded by $B$. Then, if $ \operatorname{dist}(A,C) \ge 4\e^2\beta d$, the conditional mutual information satisfies
\begin{align}
    I(A:C|B)_{\vrho_\beta}\lesssim r' |A||C|\exp\bigg(\mu' \min(\labs{A},\labs{C}) -\lambda'\operatorname{dist}(A,C)\bigg)\,,
\end{align}
for some numbers $r'$, $\mu'$ and $\lambda'$ (as in~\eqref{eq:mu_lambda}) which only depend on the inverse temperature $\beta$ and the degree $d$ of the interaction graph.
\end{cor}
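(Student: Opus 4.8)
\emph{Proof plan.} The idea is to turn the quasi-local recovery guarantee of \autoref{cor:main_qlocal} into a bound on $I(A:C|B)_{\vrho_\beta}$ by combining data processing and a uniform continuity bound for the conditional mutual information. Since $I(A:C|B)_{\vrho}=I(C:A|B)_{\vrho}$ and the shielding hypothesis is symmetric in $A$ and $C$, I may assume without loss of generality that $\labs{A}\le\labs{C}$, so that $\min(\labs{A},\labs{C})=\labs{A}$, and work with the recovery map built from single-Pauli jumps on $A$. The hypothesis $\operatorname{dist}(A,C)\ge 4\e^2\beta d$ is precisely what allows me to invoke \autoref{cor:main_qlocal} at the radius $\ell:=\operatorname{dist}(A,C)-1$.

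First I would fix $\ell=\operatorname{dist}(A,C)-1$ and set $B':=\{q\in\Lambda:\operatorname{dist}(q,A)\le \ell\}$. Because $\ell<\operatorname{dist}(A,C)$ we have $B'\subseteq A\cup B$ and $B'\cap C=\emptyset$, so the localized channel $\CR_{A,t^*,\ell}$ of \autoref{cor:main_qlocal} is supported on $\tilde A:=A\cup B'$ and acts as the identity on $\tilde B:=B\setminus B'$ and on $C$. Writing $\vsigma:=\CR_{A,t^*,\ell}[\vrho_{\beta,-A}]$ and $\varepsilon:=\|\vsigma-\vrho_\beta\|_1$, \autoref{cor:main_qlocal} gives $\varepsilon\lesssim r\,\e^{\frac{\mu'}{2}\labs{A}-\frac{m\lambda}{2}\operatorname{dist}(A,C)}$ after absorbing the shift $\ell=\operatorname{dist}(A,C)-1$. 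Now I would exploit that the input $\vrho_{\beta,-A}=\tr_A[\vrho_\beta]\otimes\vec{\tau}_A$ has $A$ in tensor product and that $\CR_{A,t^*,\ell}$ acts only on the flanking system $\tilde A$. Data processing for the conditional mutual information under $\CR_{A,t^*,\ell}$ gives $I(\tilde A:C|\tilde B)_{\vsigma}\le I(\tilde A:C|\tilde B)_{\vrho_{\beta,-A}}$; by the chain rule and the product form of the input the right-hand side collapses to $I(B':C|\tilde B)_{\vrho_\beta}$, while on the left the chain rule (using $\tilde B\cup B'=B$) gives $I(\tilde A:C|\tilde B)_{\vsigma}=I(B':C|\tilde B)_{\vsigma}+I(A:C|B)_{\vsigma}$. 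Hence
\begin{align*}
I(A:C|B)_{\vsigma}\ \le\ I(B':C|\tilde B)_{\vrho_\beta}-I(B':C|\tilde B)_{\vsigma}\ \le\ \big|\,I(B':C|\tilde B)_{\vrho_\beta}-I(B':C|\tilde B)_{\vsigma}\,\big|.
\end{align*}

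Next I would apply a uniform continuity bound for the conditional mutual information (of Fannes--Audenaert--Winter type): for two states at trace distance $\le\varepsilon$ the difference of conditional mutual informations is at most $\varepsilon$ times the logarithm of the dimension of the \emph{smaller} of the two flanking subsystems, plus an $h(\varepsilon)$-type remainder. The essential point is that for $I(B':C|\tilde B)$ this dimension factor is at most $\log d_C=\labs{C}\log 2$ — crucially not $\log d_{B'}$, which grows with $\operatorname{dist}(A,C)$ — so that $I(A:C|B)_{\vsigma}\lesssim \varepsilon\,\labs{C}+h(\varepsilon)$. Applying the same continuity bound once more, now to $I(A:C|B)$ at $\vrho_\beta$ and $\vsigma$ (dimension factor $\le\log d_A=\labs{A}\log2$, since $\labs{A}\le\labs{C}$), yields $I(A:C|B)_{\vrho_\beta}\le I(A:C|B)_{\vsigma}+\varepsilon\,\labs{A}+h(\varepsilon)\lesssim \varepsilon(\labs{A}+\labs{C})+h(\varepsilon)$. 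Finally I would substitute the estimate for $\varepsilon$ and clean up: bounding $h(\varepsilon)\lesssim \varepsilon\log(1/\varepsilon)$ with $\log(1/\varepsilon)\lesssim \operatorname{dist}(A,C)+\labs{A}+|\log r|$, the polynomial $\operatorname{dist}(A,C)$ prefactor is absorbed into $\e^{-\frac{m\lambda}{2}\operatorname{dist}(A,C)}$ at the price of a slightly smaller rate, and $\labs{A}+\labs{C}\le 2\labs{A}\labs{C}$ together with $\labs{A}=\min(\labs{A},\labs{C})$ yields the claimed bound with constants $r',\mu',\lambda'$ depending only on $\beta$ and $d$.

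The step I expect to be most delicate is the continuity estimate. One must use precisely the version of the conditional-mutual-information continuity bound whose dimension factor is the minimum over the two flanking subsystems, so that the large region $B'$ (whose size grows with $\operatorname{dist}(A,C)$) never enters; and one must verify that data processing for the conditional mutual information does apply to a channel acting on the enlarged flanking system $\tilde A=A\cup B'$ — which it does, as the conditional mutual information is monotone under channels on either flanking subsystem. Everything else (the chain-rule manipulations, absorbing polynomial prefactors into the exponential decay, and renaming constants) is routine.
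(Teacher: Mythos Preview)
Your argument is correct, but it is considerably more elaborate than the paper's. The paper dispatches the corollary in one line by citing the standard fact (Wilde, Theorem~11.10.5) that any approximate recovery channel $\mathcal{R}:B\to AB$ with $\tfrac12\lnorm{\vrho-\mathcal{R}[\vrho_{BC}]}_1=\Delta$ already forces
\[
I(A:C|B)_{\vrho}\ \le\ \Delta\log(\dim C)+h_2(\Delta)\ \lesssim\ \labs{C}\sqrt{\Delta},
\]
and then plugs in the $\Delta$ from \autoref{cor:main_qlocal} with $\ell=\operatorname{dist}(A,C)-1$. The mechanism behind that one-liner is exactly a single data-processing step ($H(C|AB)_{\sigma}\ge H(C|B)_{\vrho}$, since $\sigma$ arises from $\vrho_{BC}$ via a $B\to AB$ channel) together with Alicki--Fannes continuity of $H(C|AB)$. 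Your route reproduces this via an unnecessary detour: you split $B=B'\sqcup\tilde B$, invoke the chain rule on $I(\tilde A:C|\tilde B)$, use monotonicity under the $\tilde A$-channel, and then apply continuity \emph{twice}. None of that extra structure is needed, because once $\ell<\operatorname{dist}(A,C)$ the map $\mathcal{R}_{A,t^*,\ell}\circ(\vec{\tau}_A\otimes\tr_A)$ is already a bona fide $B\to AB$ channel, so the textbook bound applies directly. On the upside, your linear-in-$\varepsilon$ estimate is slightly sharper than the paper's $\sqrt{\Delta}$ (coming from $h_2(\Delta)\lesssim\sqrt{\Delta}$), though both collapse to the stated exponential form after substituting \autoref{cor:main_qlocal} and absorbing polylog prefactors.
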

See~\autoref{sec:pf_qlocal_decay} for the proof. Our arguments focus on tripartitions $ABC\subset \Lambda$; the current argument does not handle the case with more refined partitions $ABCD$~\cite{kuwahara2020clustering}. 

\begin{rmk}
The prefactor in the above corollary scales exponentially with the size of the smaller region and linearly with the larger region. In comparison, the quasi-local recovery statement (\autoref{cor:main_qlocal}) does not refer to the global system size and can operate in the thermodynamic limit; this loss is due to $\log(dim)$ factors common in conversion between entropies and trace distances.\end{rmk}

\subsection{Quasi-local preparation algorithm assuming uniform clustering}
Another application of \autoref{thm:main} is the following guarantee for the preparation of Gibbs states on $D$-dimensional hypercubic lattices. Recall, ref.~\cite{brandao2019finite_prepare} gave a quasi-local preparation algorithm for quantum Gibbs states under a \textit{uniform clustering} and a \textit{uniform Markov} condition. As a demonstration, we use the newly proven unconditional local Markov property (\autoref{thm:main}) to get rid of the second condition. In this section, we assume that the Hamiltonian is of \textit{finite range}, meaning that the non-zero interactions are localized on regions of finite diameter with respect to the standard lattice distance. We first introduce some notation. For any subset $X\subset \Lambda_L$ of the lattice $\Lambda_L=[-L,L]^{D}$ with $|\Lambda_L|=n$, we write the truncated Gibbs state by
\begin{align}
    \vrho_\beta^X := \e^{-\beta \vH_X}/\tr[\e^{-\beta \vH_X}]\quad \text{where}\quad \vH_X=\sum_{Z\subseteq X}\vh_Z  \,.
\end{align}
Here, we assume that the interactions $\vh_Z$ with $\|\vh_Z\|\le 1$ are supported on regions $Z$ of the lattice $\Lambda_L$ such that for any region $Z$ of diameter larger than a constant $r$, $\vh_Z=0$. For any state $\vsigma$ and pair of observables $\vA,\vB$, we define a covariance
\begin{align}
    \operatorname{Cov}_{\vsigma}(\vA,\vB) := \labs{\tr[\vsigma\vA\vB]-\tr[\vsigma\vA]\tr[\vsigma\vB]}.
\end{align}

\begin{defn}[Uniform clustering]\label{defn:unifclust}
Consider a Hamiltonian $\vH$ with an interaction graph and an inverse temperature $\beta$. We say the pair $(\vH,\beta)$ is uniformly clustering if for any regions $A,C\subseteq X\subseteq \Lambda$ such that $\operatorname{dist}(A,C)\ge \ell$, we have
\begin{align}
\operatorname{Cov}_{\vrho_{\beta}^X}(\vA,\vC)\le \poly(|A|\,| C|) \cdot \norm{\vA}\norm{\vC} \e^{-\frac{\ell}{\xi}}
\end{align}
for some correlation length $\xi>0$ and any operators $\vA$ supported on $A$ and $\vC$ supported on $C$. 
\end{defn}
\begin{rmk}
    The above definition is relaxed slightly from~\cite{brandao2019finite_prepare} by allowing polynomial prefactors of the volumes $\labs{A},\labs{C}$ that may arise in tools to prove uniform clustering (e.g., liberal use of Lieb-Robinson bounds). The tighter scaling of $\min(\labs{\partial A},\labs{\partial C})$ was proven in~\cite{Kliesch19} in a high-temperature regime. A version with local observable and at superlogarithmic distances $\Omega(\log(n))$ was proven in~\cite{Harrow20} under the assumption of certain complex analytic properties of the free energy.
\end{rmk}
Roughly, uniform clustering demands that far-apart observables have a decay of covariance under any restricted Gibbs state for any subset $X$ (which may be topologically nontrivial). For our purposes, the key intermediate consequence of uniform clustering is \textit{local indistinguishability}. As shown in~\cite{brandao2019finite_prepare}, one can prepare the Gibbs states by stitching quasi-local patches together.

\begin{thm}[uniform clustering implies local indistinguishability{~\cite[Theorem 5]{brandao2019finite_prepare}}]
\label{thm:localindistinguishability}
Consider a Hamiltonian $\vH$ on a $D$-dimensional lattice and an inverse temperature $\beta$. Suppose the pair $(\vH,\beta)$ is uniformly clustering with correlation length $\xi$ (\autoref{defn:unifclust}), then, the pair satisfies local indistinguishability: For any $ABC=X\subset \Lambda,$ with the distance $\operatorname{dist}(A,C) = \ell,$ we have that
\begin{align}
\norm{\tr_{BC}[\vrho^X] - \tr_{B}[\vrho^{AB}] }_1\le {\e^{c'\beta}} \labs{\partial_{B} C}\L(\poly(\labs{A},\ell^{D})\e^{-\ell/{2}\xi} +  \e^{-\ell/c}\R)
\end{align}
for some universal constant $c$ and a constant $c'>0$ which depends on $\beta$ and the locality of $H$ (see \cite{kim2012perturbative,Kato2019,brandao2019finite_prepare} for further details), and where $\partial_B C$ is the boundary of $C$ with $B$.
\end{thm}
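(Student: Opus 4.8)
The plan is to reduce the statement to the assumed decay of correlations by interpolating between $\vH_{AB}$ and $\vH_X$ and applying Duhamel's formula. Write $\vH_X=\vH_{AB}+\vH_C+\vW$, where $\vH_C:=\sum_{Z\subseteq C}\vh_Z$ collects the interactions internal to $C$ and $\vW:=\sum_{Z\subseteq X,\,Z\cap C\neq\emptyset,\,Z\not\subseteq C}\vh_Z$ collects the ones straddling the $B$--$C$ interface; by finiteness of the interaction range, $\operatorname{supp}\vW$ lies within an $O(1)$-neighbourhood of $\partial_B C$ (hence, once $\operatorname{dist}(A,C)=\ell$ exceeds the range, at distance $\ge\ell-O(1)$ from $A$) and $\norm{\vW}\lesssim\labs{\partial_B C}$. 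Put $\vH_\tau:=\vH_{AB}+\vH_C+\tau\vW$ and $\vrho_\tau:=\e^{-\beta\vH_\tau}/\tr[\e^{-\beta\vH_\tau}]$ for $\tau\in[0,1]$; then $\vrho_1=\vrho^X$ while $\vrho_0=\vrho^{AB}\otimes\vrho^C$ (disjoint supports), so $\tr_{BC}[\vrho_0]=\tr_B[\vrho^{AB}]$. Since the trace distance of the $A$-marginals equals $\sup\{\,|\tr[(\vrho_1-\vrho_0)\vO]|:\vO\text{ an observable on }A,\ \norm{\vO}\le1\,\}$, it suffices to bound $|\tr[(\vrho_1-\vrho_0)\vO]|$, and differentiating along the path gives
\begin{align}
\tr[(\vrho_1-\vrho_0)\vO]=-\beta\int_0^1\!\!\int_0^1\operatorname{Cov}_{\vrho_\tau}\!\big(\vW,\ \e^{-s\beta\vH_\tau}\vO\,\e^{s\beta\vH_\tau}\big)\,\rd s\,\rd\tau\,.
\end{align}
Thus everything reduces to controlling the covariance of the far-away perturbation $\vW$ with an imaginary-time-evolved copy of the local observable $\vO$.

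The heart of the proof is to show that $\vO^{(s)}_\tau:=\e^{-s\beta\vH_\tau}\vO\,\e^{s\beta\vH_\tau}$ is quasi-local around $A$ \emph{uniformly in the temperature}. A naive commutator expansion of $\vO^{(s)}_\tau$ fails to converge outside the high-temperature regime, so I would instead represent it through a \emph{regularized} evolution --- smoothing the real-time Heisenberg dynamics $\e^{\ri\vH_\tau t}\vO\e^{-\ri\vH_\tau t}$ against a Gaussian-type filter and analytically continuing, which is exactly the operator-Fourier-transform / quantum-belief-propagation construction behind~\eqref{eq:OFT} and used in~\cite{Kato2019} --- combined with a Lieb--Robinson bound for the bounded-degree Hamiltonian $\vH_\tau$. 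This yields, for every radius $R$, an operator $\widetilde\vO^{(s)}_\tau$ supported within distance $R$ of $A$ with $\norm{\vO^{(s)}_\tau-\widetilde\vO^{(s)}_\tau}\le\e^{c'\beta}\e^{-R/c}$ for a universal $c$. Choosing $R=\ell/2$: the tail contributes at most $2\norm{\vW}\,\e^{c'\beta}\e^{-\ell/(2c)}\lesssim\e^{c'\beta}\labs{\partial_B C}\e^{-\ell/c}$ to each covariance --- the second term of the claimed bound --- while $\widetilde\vO^{(s)}_\tau$ now lives on a radius-$\ell/2$ ball about $A$, at distance $\ge\ell/2-O(1)$ from $\operatorname{supp}\vW$.

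For the remaining term $\operatorname{Cov}_{\vrho_\tau}(\vW,\widetilde\vO^{(s)}_\tau)$ I would invoke the uniform clustering hypothesis (\autoref{defn:unifclust}), which bounds it by $\poly(\labs{A},\ell^D)\,\norm{\vW}\,\e^{-\ell/(2\xi)}$, the polynomial prefactor absorbing the volumes of the two supports (that of $\widetilde\vO^{(s)}_\tau$ being a ball of radius $\ell/2$ about $A$, of size $\lesssim\ell^D$). Adding the two contributions, integrating over $s,\tau\in[0,1]$, and taking the supremum over $\vO$ yields
\begin{align}
\norm{\tr_{BC}[\vrho^X]-\tr_B[\vrho^{AB}]}_1\;\lesssim\;\e^{c'\beta}\,\labs{\partial_B C}\Big(\poly(\labs{A},\ell^D)\,\e^{-\ell/(2\xi)}+\e^{-\ell/c}\Big)\,,
\end{align}
which is the assertion.

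Two places need care, but neither affects the bound. First, $\vrho_\tau$ is the Gibbs state of $\vH_{AB}+\vH_C+\tau\vW$, which is not literally one of the truncations $\vH_Y$ appearing in \autoref{defn:unifclust}; this is handled either by using the clustering hypothesis in the (standard) form in which it is stated in~\cite{brandao2019finite_prepare}, which is stable under rescaling couplings far from $A$, or by replacing the continuous interpolation by a telescoping sum that removes the qubits of $C$ one at a time --- so each intermediate state is a genuine truncated Gibbs state --- with one Duhamel step per removal, paying the quasi-locality price there. Second, converting a covariance bound into a trace-distance bound on \emph{marginals} costs $\log(\dim)$ / volume factors, which is the origin of the prefactors $\labs{A},\ell^D,\labs{\partial_B C}$. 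The single genuine obstacle is the temperature-independent quasi-locality of the imaginary-time-evolved observable in the second paragraph; the rest is interpolation and entropy-to-distance bookkeeping.
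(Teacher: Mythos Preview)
The paper does not prove this statement; it is quoted as \cite[Theorem~5]{brandao2019finite_prepare}, with the parenthetical ``(see \cite{kim2012perturbative,Kato2019,brandao2019finite_prepare} for further details)'' standing in for a proof. There is therefore no ``paper's own proof'' to compare your proposal against.

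That said, your sketch is in the spirit of those references: interpolate between $\vH_{AB}$ and $\vH_X$, express the difference of expectations as a path integral of covariances via Duhamel, and invoke quasi-locality of the imaginary-time conjugation. Two comments. First, the literature usually replaces your continuous $\tau$-interpolation by the quantum belief-propagation identity $\e^{-\beta \vH_X}=\eta\,\e^{-\beta(\vH_{AB}+\vH_C)}\,\eta^{\dagger}$ with $\eta$ quasi-local near $\partial_B C$; this sidesteps exactly the issue you flag, that $\vrho_\tau$ is not one of the truncated Gibbs states to which \autoref{defn:unifclust} applies, and avoids having to argue separately that clustering is stable under rescaling a far-away coupling. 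Second, the step you correctly identify as the ``single genuine obstacle'' --- operator-norm quasi-locality of $\e^{-s\beta \vH_\tau}\vO\,\e^{s\beta \vH_\tau}$ uniformly in $\beta$ --- is \emph{not} what the operator-Fourier-transform regularization of this paper (\autoref{sec.oftprop}) delivers: that machinery controls weighted-norm quantities like $\norm{\vrho_\beta^{1/4}\vA\vrho_\beta^{-1/4}}$ after a frequency truncation, at the price of a residual exponent $<1$, and does not produce a strictly local approximant with exponentially small operator-norm tail. For the bound as stated you need the belief-propagation filter of \cite{kim2012perturbative,Kato2019}, whose analytic-continuation/Lieb--Robinson analysis is what yields the $\e^{c'\beta}\e^{-R/c}$ form.
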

\begin{cor}[Uniform clustering implies quasi-local preparation]\label{cor:samplingresult}
Under the assumption of uniform clustering \autoref{defn:unifclust} for $\vH$ at inverse temperature $\beta$, there exists a channel  with (dissipative) gate complexity $\e^{\mathcal{O}(\log^{D}(n/\epsilon))}$ outputting a state $\vrho'$ such that $\|\vrho'-\vrho_\beta\|_1\le \epsilon$.
\end{cor}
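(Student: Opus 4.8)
The plan is to follow the stitching strategy of~\cite{brandao2019finite_prepare}, but replace the channel built from the assumed uniform Markov property by the explicit time-averaged Lindbladian recovery map $\CR_{A,t,\ell}$ from \autoref{thm:main} and \autoref{cor:main_qlocal}. First I would recall from~\cite{brandao2019finite_prepare} the geometric decomposition of the $D$-dimensional lattice $\Lambda_L$ into $\bigO(D)$ ``color classes'' of cells, each cell of side length $O(\ell)$, such that cells in the same class are pairwise separated by distance $\gtrsim \ell$. The preparation proceeds in $\bigO(D)$ rounds: in round $k$ one simultaneously refreshes all cells of color $k$ by applying, on each cell $A$, the local recovery channel $\CR_{A,t^*,\ell}$ to the current state. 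Because same-color cells are far apart and each $\CR_{A,t^*,\ell}$ is supported within distance $\ell$ of its cell, the round-$k$ map is a tensor product of commuting local channels and its gate cost is (number of cells) $\times$ (cost of one $\CR_{A,t^*,\ell}$).

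The key steps, in order. (1) Error of one recovery step: by \autoref{cor:main_qlocal} with $\labs{A}=\bigO(\ell^D)$ and distance parameter $\ell$, choosing $t^* = \e^{(\mu\labs{A}+m\ell)/(\lambda+1)}$ gives $\lnorm{\vrho_\beta - \CR_{A,t^*,\ell}[\vrho_{\beta,-A}]}_1 \lesssim r\exp((\mu'\labs{A}-m\lambda\ell)/2)$; here I must take $\ell$ large enough (polylogarithmic in $n/\epsilon$) that the $\ell^D$ in the exponent still loses to the linear-in-$\ell$ term — this forces $\ell = \Theta(\log^{1/?}(\cdot))$, and more carefully $\ell$ scaling so that $\exp(\text{poly}(\ell))$-type errors are beaten; tracking this is where the final $\e^{\bigO(\log^D(n/\epsilon))}$ emerges. (2) Local indistinguishability to control the ``input'' to each recovery step: after the earlier rounds the global state restricted to a neighborhood of cell $A$ agrees with the honest restricted Gibbs state $\vrho^{X}$ up to the error in \autoref{thm:localindistinguishability}, which is superpolynomially small in $\ell$ under uniform clustering; this is what lets me invoke \autoref{thm:main} (whose guarantee is about $\vrho_\beta$) even though the true input is only approximately a local patch of the Gibbs state. (3) Telescoping: sum the $\bigO(D)$ rounds' worth of errors, each a sum over $O(n/\ell^D)$ cells, using the data-processing inequality to propagate errors through later channels; set $\ell$ (polylog) and $t^*$ so that the accumulated trace-distance error is $\le\epsilon$. (4) Gate count: each $\CR_{A,t^*,\ell}$ costs $\tilde{\bigO}(\labs{P_A^1}\,t^*\beta) = \e^{\bigO(\ell^D)}$ by \autoref{circuitimplementation} and the subsequent remark (sampling $s\in[0,t^*]$), times $\bigO(n)$ cells times $\bigO(D)$ rounds; with $\ell = \bigO(\log(n/\epsilon))$ this is $\e^{\bigO(\log^D(n/\epsilon))}$.

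The main obstacle I expect is step (1)/(3): reconciling the exponential-in-$\labs{A}$ prefactor $\e^{\mu\labs{A}}$ of \autoref{thm:main} (with $\labs{A}=\Theta(\ell^D)$ for a $D$-dimensional cell) against the only polynomially-in-$\ell$ decay budget one gets from uniform clustering, and the merely inverse-polynomial $t^{-\lambda}$ decay of the recovery map. Concretely, beating $\e^{\mu\ell^D}$ requires $t^*$ as large as $\e^{\Theta(\ell^D)}$, and then forcing the per-cell error below $\epsilon/(nD)$ forces $\ell = \Theta(\log(n/\epsilon))$, so $t^* = \e^{\Theta(\log^D(n/\epsilon))}$ and the gate complexity inherits this $\e^{\bigO(\log^D(n/\epsilon))}$ scaling — matching~\cite{brandao2019finite_prepare} but now unconditionally. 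The bookkeeping of how $\partial_B C$, $\poly(\labs{A},\ell^D)$, and the constants $c,c'$ from \autoref{thm:localindistinguishability} combine with the $r,\mu,\lambda$ of \autoref{thm:main} across $\bigO(D)$ rounds is routine but must be done carefully to confirm a single choice of $\ell = \Theta(\log(n/\epsilon))$ works simultaneously for all rounds.
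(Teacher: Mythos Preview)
There is a genuine gap in your step (1). You take the cell side length and the localization radius of the recovery map to be the \emph{same} parameter $\ell$, so that $\labs{A}=\Theta(\ell^{D})$. Plugging into \autoref{cor:main_qlocal} gives a per-cell error
\[
\exp\!\big(\tfrac{1}{2}(\mu'\labs{A}-m\lambda\ell)\big)=\exp\!\big(\Theta(\ell^{D})-\Theta(\ell)\big),
\]
which \emph{diverges} as $\ell\to\infty$ for every $D\ge 2$; there is no choice of $\ell$ for which ``the $\ell^{D}$ in the exponent loses to the linear-in-$\ell$ term.'' Your proposed workaround---invoke \autoref{thm:main} directly with $t^{*}=\e^{\Theta(\ell^{D})}$ to beat the $\e^{\mu\labs{A}}$ prefactor---then breaks the Lieb--Robinson localization: by \autoref{lem:truncation} and \autoref{lem:quasilocal} the truncation error scales like $t^{*}\labs{A}\,\e^{-c\ell}=\e^{\Theta(\ell^{D})-c\ell}$, which also diverges. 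So with a single length scale the two error sources cannot be balanced.

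The fix, which is what the paper does, is to \emph{decouple} the two length scales: take the cell side to be $O(\log(n/\epsilon))$, so that $\labs{A}=O(\log^{D}(n/\epsilon))$, and \emph{then} set the localization radius $\ell=\Theta(\log^{D}(n/\epsilon))$ with a constant chosen so that $m\lambda\ell>\mu'\labs{A}$. Now \autoref{cor:main_qlocal} gives a genuinely small recovery error, the optimal time is $t^{*}=\e^{\Theta(\log^{D}(n/\epsilon))}$, and the Lieb--Robinson error is controlled because $\log t^{*}$ and $\ell$ are of the same order. A side effect is that the number of color classes is no longer $O(D)$ but polylogarithmic in $n$ (since the support regions $A_{+}$ have diameter $\Theta(\ell)\gg$ cell side), but this is harmless for the final $\e^{O(\log^{D}(n/\epsilon))}$ gate count.
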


\begin{rmk}
The runtime of the algorithm in \autoref{cor:samplingresult} can be improved to a (quasi-)optimal scaling of $\log^{\mathcal{O}(1)}(n/\epsilon)$ under the further condition that the Lindbladians $\mathcal{L}_{A,\ell}$ defined on every region $A$ have a gap that at least inverse polynomial in the size $|A|$, {see \autoref{sec:improvegap}}. Right now, the local condition we imposed might be overly stringent, but we wanted to illustrate the possibility of improving the argument assuming certain local mixing as input. 
\end{rmk}

\section{key ideas}
In this section, we outline the key steps of our approach, highlighting the novel aspects, while deferring further details to the following sections (see \autoref{sec:proofmainthm} for the overall proof). The proof intuition behind \autoref{thm:main} is that the long-time averaging map $\CR_{A,t}$, which keeps updating region $A$ using all supported single qubit Pauli jumps, must mix $A$ very thoroughly, ``conditioned'' on everything else. In particular, we will work in the Heisenberg picture by making use of the duality
\begin{align}\label{rhotoX}
    \norm{\vrho_\beta-\CR_{A,t}[\vrho_{\beta,-A}]}_1 = \sup_{\norm{\vX}\le 1} \labs{\tr[\vX(\vrho_\beta-\CR_{A,t}[\vrho_{\beta,-A}])]} = \sup_{\norm{\vX}\le 1} \labs{\tr[(\CR^{\dagger}_{A,t}[\vX] - (\CR^{\dagger}_{A,t}[\vX])_{-A}) \vrho_\beta]}
\end{align}
where the second equality uses the fixed point property $\CR_{A,t}[\vrho_\beta]=\vrho_\beta$. Here, we denote by $\CR^\dagger$ the adjoint of a quantum channel $\CR$ w.r.t. the Hilbert-Schmidt inner product. Therefore, the statement of recoverability of $A$ is reduced to showing that the evolved operator is nearly trivial on region $A$ 
\begin{align}
\CR^{\dagger}_{A,t}[\vX] \approx ({\CR^{\dagger}_{A,t}[\vX])_{-A}}.
\end{align}
However, quantifying the above approximation requires care, as we do not make any structural assumption on the state $\vrho_\beta$. We begin with reducing trace distance in the objective function to bounding the following nested commutator with Pauli strings on region $A$
\begin{align}\label{doublecommutatortoDirichlet}
  \frac{1}{4^{\labs{A}}}\sum_{\vS\in P_A}   \norm{[\vS,[\vS,\CR^{\dagger}_{A,t}[\vX]]]}_{\vrho_\beta}\,\quad \text{for any}\quad \vX.
\end{align}

\subsection{A H\"{o}lder-like inequality in the weighted norm}

The first ingredient is a Holder-like inequality for products in the weighted norm. It allows us to ``peel off'' the outer commutator by 
\begin{align}
\norm{\vS\vO}_{\vrho_\beta}\le 
    r \e^{\mu\labs{\Supp(\vS)}} \cdot \norm{\vO}_{\vrho_\beta}^{\nu} \quad \text{for any operator $\vO$ and Pauli string $\vS$}
\end{align}
for any inverse temperature $\beta$, with parameters $r,\mu,\nu>0$ depending only on $\beta,d$ (see~\autoref{lem:weight_norm_comm} and \autoref{cor:highweight}). The main novelty here is apparent when compared with a naive application of H\"{o}lder's inequality 
\begin{align}
    \norm{\vS\vO}_{\vrho_{\beta}} \le \norm{\vrho^{1/4}_{\beta}\vS\vrho^{-1/4}_{\beta}}\cdot  \norm{\vO}_{\vrho_{\beta}} \qquad \qquad\text{(Naive H\"{o}lder)},
\end{align}
which depends on a conjugation with the Gibbs state, the norm of which may generally diverge at low temperatures 
 $   \norm{\vrho_\beta \vA \vrho_\beta^{-1}} \sim \e^{\Omega(n)}$ \cite{perez2023locality}.
In fact, this is a genuinely noncommutative phenomenon absent in commuting or classical Hamiltonians. Roughly, a local operator $\vA$ can change the energy by a lot with an exponentially small amplitude; once $\beta$ gets large enough, it sufficiently amplifies the exponentially small amplitudes, causing a divergence. 

Remarkably, in our proofs, we found a systematic and conceptually transparent way to regularize this divergence, by decomposing the operator by an operator Fourier transform (\autoref{lem:sumoverenergies})
\begin{align}
    \vA =  \frac{1}{\sqrt{\sigma2\sqrt{2\pi}}} \int_{-\infty}^{\infty} \hat{\vA}(\omega)\rd \omega = \frac{1}{\sqrt{\sigma2\sqrt{2\pi}}} \L(\int_{\labs{\omega}\le \Omega}+\int_{\labs{\omega}>\Omega}\R) \hat{\vA}(\omega)\rd \omega.
\end{align}
The operator Fourier transform selects matrix elements that change the energy by roughly $\omega$, which allows us to control the effect of Gibbs conjugation (\autoref{lem:bounds_imaginary_conjugation})
\begin{align}
\norm{\e^{\beta \vH} \hat{\vA}(\omega) \e^{-\beta \vH}} \le \e^{\beta\omega}\cdot \frac{\e^{\sigma^2\beta^2}}{\sqrt{{\sigma}\sqrt{2\pi}}}  \norm{\vA}.  
\end{align}
Proper choices of the truncation frequency $\Omega$ allow us to avoid the divergences. Technically, the above implicitly exploits the Gaussian uncertainty $\hat{f}(\nu)$ of the operator Fourier transform as a Gaussian decay with a \textit{fixed} variance always dominates the exponential growth at \textit{any} exponent (See~\autoref{fig:regularize}).

We may further reduce commutators with Pauli strings $\vS\in P_A$ to commutators with single qubit Pauli jumps $\vA^a$ (\autoref{cor:global_local_comm}) by another use of the aforementioned Gibbs conjugation regularization bounds derived in \autoref{lem:weight_norm_comm} and \autoref{cor:highweight}: for any inverse temperature $\beta>0$, and $\vS=\prod_{a=1}^w\vA^a\in P_A$,
\begin{align}
    \norm{[\vS,\vO]}_{\vrho_\beta} \le 2^{w} r'\sum_{a} \lnorm{[\vA^a,\vO]}_{\vrho_\beta}^{\nu'}
\end{align}    
for some constants $r',\nu'$ depending on $\beta$. Therefore, we may focus entirely on the bounding weighted norm of the commutator with single Paulis.

\begin{figure}[t]
\includegraphics[width=0.9\textwidth]{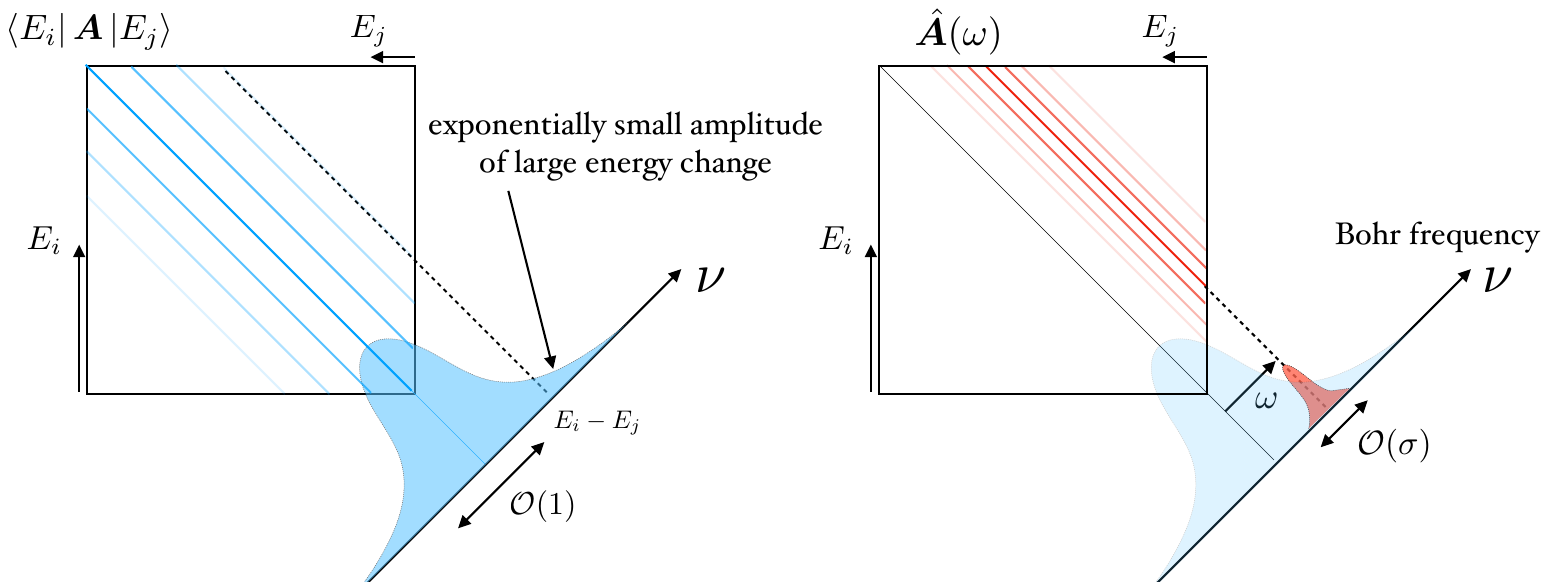}\caption{ Any operator can be decomposed by the Bohr frequencies $\vA = \sum_{\nu} \vA_{\nu}$. When the operator $\vA$ acts on a single site, the amplitudes concentrate around $\nu = \CO(1) \sim \norm{[\vH,\vA]}.$ However, when the Hamiltonian is non-commuting and beyond one dimensions, there could be an exponentially small amplitude for large energy changes $\nu$, which causes divergence for the imaginary time conjugation $e^{\beta \vH}\vA e^{-\beta \vH}$ at a large constant $\beta$. The operator Fourier transform $\hat{\vA}(\omega)$ with Gaussian weights selects the amplitudes near $\omega\pm \CO(\sigma)$. The Gaussian tail is particularly effective for mitigating the exponential divergence due to imaginary time conjugation.
}\label{fig:regularize}
\end{figure}
\subsection{Controlling commutator norms by Dirichlet forms}
Another ingredient in our proof is the link between Dirichlet forms and commutators (\autoref{lem:commutator_Dirichlet}). 
Recall that the Lindbladian $\CL_{A}$ is associated with the Dirichlet form
\begin{align}
    \CE_{A}(\vX) = - \langle \vX,\,\CL^\dagger_A(\vX)\rangle_{\vrho_\beta} =- \sum_{a\in P^1_A} \langle \vX,\, \CL^\dagger_a(\vX)\rangle_{\vrho_\beta} =  \sum_{a\in P^1_A} \CE_a(\vX).
\end{align}
We can show that, for any single-qubit Pauli jump $\vA^a$ and all bounded observables $\vX$, $\|\vX\|\le 1$, 
\begin{align}
    \lnorm{[\vA^a,\vX]}_{\vrho_{\beta}} &\le r''\,
    \CE_{a}(\vX) ^{\nu''}
    \end{align}    
for some $r'',\mu'',\nu''>0$ depending only on $\beta,d$. 

The backbone of this inequality is the following exact expression (\autoref{lem:Dirichlet_Gaussianmetropolis}): for any $a\in P^1_A$,
\begin{align}
        \CE_a(\vX) =\int_{-\infty}^{\infty}\int_{-\infty}^{\infty} g(t) h(\omega)\|[\hat{\vA}^a(\omega,t),\vX] \|_{\vrho_\beta}^2\rd t \,\rd \omega,\label{DirichletformAnice}
\end{align}
where $\hat{\vA}^a(\omega,t):=\e^{i\vH t}\hat{\vA}^a 
(\omega)\e^{-i\vH t}$, and for some positive functions $g, h> 0$. Remarkably, this yields an elegant, manifestly PSD quadratic form of commutators. Given the utility of analogous expressions for Dirichlet forms in the classical literature, we expect that \eqref{commutantrule} will also be beneficial in other contexts.

Consequently, an exact stationary operator must satisfy 
\begin{align}\label{commutantrule}
    \CL^{\dagger}_a[\vX] = 0 \iff \CE_a(\vX,\vX) =0 \iff  [\hat{\vA}^a(\omega,t),\vX ] = 0 \quad \text{for almost all}\quad \omega, t.
\end{align}
In particular, by taking linear combination (\autoref{lem:sumoverenergies}), considering all possible single-body jumps, and sending $t\rightarrow 0$,
\begin{equation}\label{dirichletkernel}
    \Big[\vX, \int_{-\infty}^{\infty}\hat{\vA}^a(\omega)\rd \omega\Big] \propto [\vX,\vA^a]=0\,\quad \text{for all}\quad a\in P^1_A.
\end{equation}

Therefore, by Leibniz rule for commutators, for any Pauli string $\vS=\prod_{a}\vA^a\in P_A$, $[\vX,\vS]=0$, which implies that the kernel of the function of $\vX$ in \autoref{doublecommutatortoDirichlet} is included in that of $\CE_A$, which is consistent with our quantitative bound.

\subsection{Polynomial decay of Dirichlet form}

It remains to show that the Dirichlet form associated with the single-Pauli jumps $\vA^a\in P^1_A$ decays for $t\to\infty$. Remarkably, the time-average map has unconditional polynomial decay of the Dirichlet form under (\autoref{cor:dirichlet}):
\begin{align}
    \CE_A(\CR^{\dagger}_{A,t}[\vX]) \le \frac{2}{t}.
\end{align}
Note that such a property is generally false for the Lindblad evolution $\e^{\CL_A^{\dagger} t}$ itself without time-averaging, because $\CL_A^{\dagger}$ may have arbitrarily small eigenvalues. Thus, time-averaging provides a different mechanism to obtain a small Dirichlet form that is independent of the spectral gap. Indeed, imposing an additional local gap condition implies an exponentially faster decay of the Dirichlet form and would significantly improve our Gibbs sampling results (see \autoref{sec:improvegap}). Roughly, this means that an operator changing slowly under the Lindbladian must nearly commute with the jump $\vA^a$. The claim follows after combining all the aforementioned bounds:
\begin{align*} 
\|\CR^{\dagger}_{A,t}[\vX] - ({\CR^{\dagger}_{A,t}[\vX])_{-A}}\|\le \e^{\mu |A|}\CE_A(\CR^{\dagger}_{A,t}[\vX])^{\nu}\le r\frac{\e^{\mu |A|}}{t^\nu}\,.
\end{align*}
\section{Discussions and outlook}

We have proved the local Markov property of quantum Gibbs states for any Hamiltonian with a bounded interaction degree at any constant temperature. Tracing out region $A\subset \Lambda$ of a Gibbs state, there is a recovery map approximately localized around $A$ that recovers the Gibbs state. Remarkably, this static property is proven using the dynamics: the recovery map is a time-averaged Lindblad dynamics with single-Pauli jumps on $A$. Consequently, the conditional mutual information for tripartitions $ABC = \Lambda$, where $B$ shields $A$ from the remaining sites, decays exponentially with the shielding distance. However, the bound on CMI grows exponentially with the size $\labs{A}$, which comes from the possibility of an exponentially long mixing time. Nevertheless, this local Markov property is already sufficient for Gibbs state preparation using quasi-local patches, assuming uniform clustering of covariance. If we further assume that the local gap of the Lindbladians with jumps on region $A$ decays polynomially in the $\labs{A}$, we may improve the CMI bound. Still, for general Hamiltonians at low temperatures, the global Markov property remains open.  

Our proof introduces a family of new analytic toolkit that adds to the Gibbs sampling literature. To handle conjugation with a low-temperature Gibbs state $\e^{\beta \vH} \vA \e^{-\beta\vH}$, we introduce a handy decomposition into operator Fourier transforms $\hat{\vA}(\omega)$. By properly truncating the frequency tail, we regularize the divergences at low temperatures. To relate the dynamics to the statics, we give an explicit commutator-square expression for the Dirichlet form. We believe the ingredients will be cornerstones to the holy grail of rapid mixing times from the decay of correlation for quantum Gibbs samplers. 

It is intriguing to contrast our Gibbs state results with those of gapped ground states. Our local Markov property extends smoothly at low temperatures (length scales growing as $\poly(\beta)$). Thus, if we impose conditions on the density of states near ground states~\cite{hastings2007entropy}, our results can be relevant to gapped-ground state properties as soon as $\beta \gtrsim \log(n)/\Delta.$ A closely related result is that gapped ground states always enjoy decay of correlation~\cite{hastings2006spectral}, which can be converted to mutual information decay (with a loss dependent on the region size). Since for pure states, the mutual information and conditional mutual information are strictly equal, our bounds as a black box are not new. Still, the fact that the recovery map can be taken to be a thermalization dynamics, and the fact that the Lindbladian depends largely on the nearby Hamiltonian, is new. Recently, the ``entanglement-bootstrap'' program~\cite{shi2020fusion,kim2024learning,kim2025Topological} has provided a new understanding of the structure of gapped ground states with additional assumptions on multipartite correlation beyond the CMI. It would be interesting to ask whether such approaches for mixed states could demystify thermal correlation beyond the Markov properties we studied.

\section{Road map}
We begin with a recap of notations, followed by
sections organized by the key ingredients in our proofs. First, we instantiate the elementary but useful properties of the time-averaging map (\autoref{sec:time_avg}). Then, we quickly relate global jumps to local jumps (\autoref{sec:global_to_local_jumps}).
Then, in~\autoref{sec.oftprop}, we derive the refined properties of the operator Fourier transform and, most importantly, how it interplays with Gibbs conjugation. Lastly, 
We display the explicit Dirichlet form and its relation to the commutator (\autoref{sec:Dirichlet}). Based on these key ingredients, we present the proofs of the main results in~\autoref{sec:proofsthms}. In the appendix, we instantiate standard Lieb-Robinson bounds for the Lindbladians (\autoref{sec:LRestimates}), and show that adding a local gap condition allows us to bootstrap the quasi-polynomial runtime to logarithmic (\autoref{sec:improvegap}).

\acknowledgments 
We thank Itai Arad, Ainesh Bakshi, Shankar Balasubramanian, Thiago Bergamaschi, Fernando Brandao, András Gilyén, Aram Harrow, Ting-Chun Lin, Allen Liu, Ankur Moitra, Jiaqing Jiang, Isaac Kim, Michael Kastoryano, Sidhanth Mohanty, John Preskill, Amit Rajaraman, Daniel Ranard, Ewin Tang, Umesh Vazirani, David Wu, and Yongtao Zhan for helpful discussions. We thank Anurag Anshu and Quynh T. Nguyen for collaboration in related work. We thank Kohtaro Kato and Tomotaka Kuwahara for letting us know about their concurrent and independent work. CFC is supported by a Simons-CIQC postdoctoral fellowship through NSF QLCI Grant No. 2016245. CR acknowledges financial support from the ANR project QTraj (ANR-20-CE40-0024-01) of the French National Research Agency (ANR).
\section*{Data Availability}
No datasets were generated or analyzed during the current study. 
\section*{Conflict of interest statement}
The authors declare that they have no conflict of interest. 
\section*{Notations}\label{sec:recap_notation}
We use $a \lesssim b$ to absorb absolute constants.
We write scalars, functions, and vectors in normal font, matrices in bold font $\vO$, and superoperators in curly font~$\CL$ with matrix arguments in square brackets $\CL[\vrho]$. We use $\CO(\cdot),\Omega (\cdot)$ to denote asymptotic upper and lower bounds.
\begin{align}
\vI&: &\text{the identity operator}\\
\beta&: &\text{ inverse temperature}\\
\vrho_{\beta}&:= \frac{\e^{-\beta \vH }}{\tr[ \e^{-\beta \vH }]} (\equiv \vrho) \quad &\text{the Gibbs state with inverse temperature $\beta$}\\
A \subset \Lambda&:& \text{A subset of vertices}\\
\labs{A}&: & \text{cardinality of the region $A$}\\
n &= \labs{\Lambda} &\text{ system size (number of qubits) of the Hamiltonian $\vH$}\\
\{\vA^a\}_{a}&: &\text{set of jumps (for defining the Lindbladian)}\\
P_A&:& \text{set of nontrivial Pauli string on region $A$}\\
P^1_A&:=\{\vX_i,\vY_i,\vZ_i\}_{i\in A}  & \text{set of 1-local Pauli on region $A$}
\end{align}
Fourier transform notations:
\begin{align}
\vH &= \sum_i E_i \ketbra{\psi_i}{\psi_i}&\text{the Hamiltonian of interest and its eigendecomposition}\\
\vP_{E}&:= \sum_{i:E_i = E} \ketbra{\psi_i}{\psi_i}&\text{eigenspace projector for energy $E$}\\
\nu& \in B(\vH)  &\text{the set of Bohr frequencies, i.e., energy differences}\\
\vA_\nu&:=\sum_{E_2 - E_1 = \nu } \vP_{E_2} \vA \vP_{E_1} &\text{amplitude of $\vA$ that changes the energy by exactly $\nu$}\\
{\vA}(t) &:=\e^{i\vH t}\vA \e^{-i\vH t}& \text{Heisenberg-evolved operator $\vA$}\\
\hat{\vA}_{(f)}(\omega) &:= \frac{1}{\sqrt{2\pi}}\int_{-\infty}^{\infty} \e^{-\ri \omega t}f(t) \vA(t)\mathrm{d}t& \text{operator Fourier Transform for $\vA$ weighted by $f$}\\
\hat{f}(\omega)&=\lim_{K\rightarrow  \infty}\frac{1}{\sqrt{2\pi}}\int_{-K}^{K}\e^{-\ri\omega t} f(t)\mathrm{d}t & \text{the Fourier transform of function $f$}		
\end{align}
Norms: 
\begin{align}
	\norm{\vO}&:= \sup_{\ket{\psi},\ket{\phi}} \frac{\bra{\phi} \vO \ket{\psi}}{\norm{\ket{\psi}}\cdot \norm{\ket{\phi}}}= \norm{\vO}_{\infty} \quad &\text{the operator norm of a matrix $\vO$}\\
  	\norm{\vO}_p&:= (\tr \labs{\vO}^p)^{1/p} \quad&\text{the Schatten p-norm of a matrix $\vO$}\\
 \norm{\CL}_{p-p} &:= \sup_{\vO\ne 0} \frac{\normp{\CL[\vO]}{p}}{\normp{\vO}{p}}\quad&\text{the induced $p-p$ norm of a superoperator $\CL$}
\end{align}
\section{Properties of time-averaging}\label{sec:time_avg}
The key property we exploit from the time-averaging map is that it is simultaneously (quasi)-localized and stationary. Interestingly, the interplay between time-averaged dynamics, stationarity, and Dirichlet form has also been recently exploited in full glory in the recent analysis of classical slow-mixing Markov chains~\cite{liu2024locally}. 

Given any Lindbladian, we can define a time-averaged map $\CR_t$:
\begin{align}
    \CR_{t}[\cdot]:= \frac{1}{t}\int_{0}^t \exp\L(s\CL\R)[\cdot] \,\rd s.
\end{align}

\begin{lem}[Time-averaging implies approximate stationarity]\label{lem:time_average}
 For any Lindbladian $\CL$ and any operator $\vO$ such that $\norm{\vO}\le1$, the time-averaged map $\CR_t$ satisfies
 \begin{align}
     \lnorm{\CL^{\dagger}[\CR_t^{\dagger}[\vO]] } \, \le \frac{2}{t}.
 \end{align}
\end{lem}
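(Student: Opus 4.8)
The statement to prove is that for any Lindbladian $\CL$ and operator $\vO$ with $\norm{\vO}\le 1$, one has $\lnorm{\CL^{\dagger}[\CR_t^{\dagger}[\vO]]}\le 2/t$. The plan is to exploit that the time-averaged map is, up to a $1/t$ error, a fixed point of the semigroup: applying $\CL^\dagger$ to $\CR_t^\dagger[\vO]$ should telescope. Concretely, since $\CR_t^\dagger[\cdot]=\frac1t\int_0^t \e^{s\CL^\dagger}[\cdot]\,\rd s$ (the adjoint of $\e^{s\CL}$ is $\e^{s\CL^\dagger}$), I would first compute, for any $\vO$,
\begin{align}
    \CL^\dagger\bigl[\CR_t^\dagger[\vO]\bigr]
    = \frac1t\int_0^t \CL^\dagger \e^{s\CL^\dagger}[\vO]\,\rd s
    = \frac1t\int_0^t \frac{\rd}{\rd s}\Bigl(\e^{s\CL^\dagger}[\vO]\Bigr)\,\rd s
    = \frac1t\Bigl(\e^{t\CL^\dagger}[\vO] - \vO\Bigr),
\end{align}
using that $\CL^\dagger$ commutes with $\e^{s\CL^\dagger}$ and the fundamental theorem of calculus for the norm-continuous one-parameter semigroup $\e^{s\CL^\dagger}$.

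It then remains to bound $\lnorm{\e^{t\CL^\dagger}[\vO]-\vO}\le 2$. This follows from the triangle inequality together with the fact that the Heisenberg-picture evolution $\e^{t\CL^\dagger}$ is unital and a contraction in operator norm: $\e^{t\CL^\dagger}$ is the adjoint of the CPTP map $\e^{t\CL}$, hence it is completely positive, unital, and therefore $\norm{\e^{t\CL^\dagger}[\vO]}\le\norm{\vO}\le 1$. Combining, $\lnorm{\e^{t\CL^\dagger}[\vO]-\vO}\le \norm{\e^{t\CL^\dagger}[\vO]}+\norm{\vO}\le 2$, and dividing by $t$ gives the claim.

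I do not expect any serious obstacle here: the only points requiring a word of care are (i) justifying the interchange of $\CL^\dagger$ with the integral and the differentiation under the integral sign, which is immediate because $\CL^\dagger$ is a bounded superoperator and $s\mapsto \e^{s\CL^\dagger}[\vO]$ is smooth (indeed real-analytic) in the finite-dimensional setting, and (ii) recalling that $(\e^{s\CL})^\dagger = \e^{s\CL^\dagger}$ and that the dual of a CPTP map is a completely positive \emph{unital} map, which is the standard fact giving the operator-norm contraction. Everything else is the one-line telescoping identity above.
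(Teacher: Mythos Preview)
Your proof is correct and follows essentially the same approach as the paper: compute $\CL^\dagger\CR_t^\dagger[\vO]=\frac{1}{t}(\e^{t\CL^\dagger}[\vO]-\vO)$ via the fundamental theorem of calculus, then bound by $2/t$ using the operator-norm contractivity of the unital CP map $\e^{t\CL^\dagger}$. The paper's own proof is the same one-line telescoping identity, just stated more tersely.
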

\begin{proof}
Evaluate the integral 
    \begin{align}
        \frac{1}{t}\CL^{\dagger}\int_0^t \e^{\CL^{\dagger}s}\rd s = \frac{1}{t}(\e^{\CL^{\dagger}t}-\CI),\label{eq:L_integral}
    \end{align}
    and take suitable norms to conclude the proof.
\end{proof}
For our purposes, the approximate stationarity also holds for the Dirichlet form (\autoref{sec:Dirichlet}). 
\begin{cor}[Approximate stationarity of the Dirichlet form]\label{cor:dirichlet}
For any $\vO$ such that $\norm{\vO}\le 1$, consider a $\vrho$-detailed balance Lindbladian $\CL$. Then, the Dirichlet form $\CE(\vX,\vY):=-\langle \vX,\CL^\dagger(\vY)\rangle_{\vrho}$ for the time-averaged operator $\CR_t^{\dagger}[\vO]$ satisfies
\begin{align}
\CE(\CR^{\dagger}_{t}[\vO]) \le \frac{2}{t}.
\end{align}
\end{cor}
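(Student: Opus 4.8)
The plan is to deduce the Dirichlet form bound directly from the operator-norm bound already established in \autoref{lem:time_average}, using only the definition of $\CE$, the KMS-detailed balance of $\CL$, and the crude comparison $\braket{\vX,\vY}_{\vrho}\le\norm{\vX}\norm{\vY}$ from \autoref{lem:operatornorm}. Concretely, set $\vX := \CR_t^{\dagger}[\vO]$. Since $\CR_t$ is a CPTP map with $\CR_t[\vrho]=\vrho$, its Heisenberg adjoint $\CR_t^\dagger$ is unital and contractive in operator norm, so $\norm{\vX}\le\norm{\vO}\le 1$. By definition of the Dirichlet form,
\begin{align}
\CE(\vX) = -\braket{\vX,\CL^\dagger(\vX)}_{\vrho}.
\end{align}

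First I would bound this by Cauchy--Schwarz in the KMS inner product, or more simply by the operator-norm estimate of \autoref{lem:operatornorm}:
\begin{align}
\CE(\vX) = -\braket{\vX,\CL^\dagger(\vX)}_{\vrho} \le \norm{\vX}\cdot\norm{\CL^\dagger(\vX)} \le \norm{\CL^\dagger(\vX)}.
\end{align}
Now $\CL^\dagger(\vX) = \CL^\dagger(\CR_t^\dagger[\vO]) = \CL^\dagger[\CR_t^\dagger[\vO]]$, which is exactly the quantity bounded in \autoref{lem:time_average}, giving $\norm{\CL^\dagger[\CR_t^\dagger[\vO]]}\le 2/t$. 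Chaining the two inequalities yields $\CE(\CR^\dagger_t[\vO])\le 2/t$, as claimed. (One small point to check: $\CE(\vX)$ should be real and nonnegative here so that the inequality is meaningful; this follows from KMS-$\vrho$-detailed balance, which makes $-\CL^\dagger$ a positive semidefinite operator with respect to $\braket{\cdot,\cdot}_{\vrho}$, so $\CE(\vX)=\braket{\vX,(-\CL^\dagger)\vX}_{\vrho}\ge 0$ and in particular $\labs{\CE(\vX)}=\CE(\vX)$.)

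Alternatively, and perhaps cleaner, I would avoid the operator-norm detour: using detailed balance, write $-\CL^\dagger = T^\dagger T$ for the square root $T$ of the positive operator $-\CL^\dagger$ in the KMS geometry, so that $\CE(\vX)=\norm{T\vX}_\vrho^2$; then one still needs the spectral information, so this is not actually shorter. The one-line argument above via \autoref{lem:time_average} and \autoref{lem:operatornorm} is the most economical route. There is essentially no obstacle here: the corollary is a direct corollary, and the only thing requiring a moment's thought is verifying that $\CR_t^\dagger$ does not blow up the operator norm (it does not, being the adjoint of a CPTP fixed-point-preserving map, hence unital and a contraction) and that the sign/positivity of $\CE$ makes the final inequality a genuine upper bound on a nonnegative quantity rather than on its absolute value only.
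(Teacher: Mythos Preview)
Your proposal is correct and follows essentially the same route as the paper's proof: write $\CE(\CR_t^\dagger[\vO])=-\braket{\CR_t^\dagger[\vO],\CL^\dagger\CR_t^\dagger[\vO]}_{\vrho}$, apply \autoref{lem:operatornorm} together with $\norm{\CR_t^\dagger[\vO]}\le 1$, and then invoke \autoref{lem:time_average}. Your extra remarks on contractivity of $\CR_t^\dagger$ and on the nonnegativity of $\CE$ via detailed balance are correct and make explicit what the paper leaves implicit.
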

\begin{proof} 
Rewrite 
    \begin{align}
        \CE(\CR^{\dagger}_{t}[\vO]) =- \braket{\CR^{\dagger}_{t}[\vO],\CL^{\dagger}\CR^{\dagger}_{t}[\vO]}_{\vrho}
    \end{align}
and use~\autoref{lem:time_average}, \autoref{lem:operatornorm}, and $\norm{\CR^{\dagger}_{t}[\vO]}\le \norm{\vO} \le 1$ to conclude the proof.
\end{proof}

For our purposes, the above will be applied to the detailed-balanced Lindbladian~\eqref{eq:exact_DB_L} associated with jump operators being single-site Pauli operators $\vA^a \in P^1_A$ on region $A$~\eqref{eq:RAt}.

\subsection{Quasi-locality}
Since the jumps are restricted to region $A$, we expect the associated Lindbladian and the time-averaging map $\CR_{A,t}$ to also be quasi-local. We will work in the Heisenberg picture, with the superoperator norm $\norm{\cdot}_{\infty-\infty}$ induced by the operator norm.

\begin{lem}[Truncation error]\label{lem:truncation}
Consider the time averaging map $\CR^{\dagger}_{A,t,\ell}$ associated with the Hamiltonian $\vH_{\ell}$ containing all jumps $\vA^a$ with $a\in P^1_A$. Then,  
    \begin{align}
        \norm{\CR^{\dagger}_{A,t,\ell} - \CR^{\dagger}_{A,t}}_{\infty-\infty} \lesssim t \norm{\CL^{\dagger}_{A,\ell}-\CL_A^{\dagger}}_{\infty-\infty},
    \end{align}
    where $\mathcal{L}_A$, resp. $\mathcal{L}_{A,\ell}$, is the Lindbladian of the Gibbs sampler with Hamiltonian $\vH$ and jumps $\vA^a\in P_A^1$, resp. the one associated to the Hamiltonian $\vH_\ell$ with jumps $\vA^a\in P_A^1$.
\end{lem}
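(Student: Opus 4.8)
The plan is to compare the two time-averaged maps by comparing the two underlying Lindblad semigroups, using the standard Duhamel (variation-of-constants) identity and then integrating over the averaging window. Concretely, for fixed $s \ge 0$ I would write the difference of the two Heisenberg-picture semigroups as
\begin{align}
    \e^{s\CL^{\dagger}_{A,\ell}} - \e^{s\CL^{\dagger}_{A}} = \int_0^s \e^{(s-u)\CL^{\dagger}_{A,\ell}}\,\bigl(\CL^{\dagger}_{A,\ell}-\CL^{\dagger}_{A}\bigr)\,\e^{u\CL^{\dagger}_{A}}\,\rd u.
\end{align}
Since $\CL_{A,\ell}$ and $\CL_A$ are generators of completely positive, unital (in the Heisenberg picture) semigroups, both $\e^{(s-u)\CL^{\dagger}_{A,\ell}}$ and $\e^{u\CL^{\dagger}_{A}}$ are contractions in the $\infty$-$\infty$ (operator) norm. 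Hence submultiplicativity of the induced norm gives $\norm{\e^{s\CL^{\dagger}_{A,\ell}} - \e^{s\CL^{\dagger}_{A}}}_{\infty-\infty} \le s\,\norm{\CL^{\dagger}_{A,\ell}-\CL^{\dagger}_{A}}_{\infty-\infty}$.

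Next I would average over $s \in [0,t]$. By definition $\CR^{\dagger}_{A,t,\ell} - \CR^{\dagger}_{A,t} = \frac{1}{t}\int_0^t \bigl(\e^{s\CL^{\dagger}_{A,\ell}} - \e^{s\CL^{\dagger}_{A}}\bigr)\rd s$, so by the triangle inequality for the integral and the bound just obtained,
\begin{align}
    \norm{\CR^{\dagger}_{A,t,\ell} - \CR^{\dagger}_{A,t}}_{\infty-\infty} \le \frac{1}{t}\int_0^t s\,\norm{\CL^{\dagger}_{A,\ell}-\CL^{\dagger}_{A}}_{\infty-\infty}\,\rd s = \frac{t}{2}\,\norm{\CL^{\dagger}_{A,\ell}-\CL^{\dagger}_{A}}_{\infty-\infty},
\end{align}
which is exactly the claimed estimate with the absorbed absolute constant $\tfrac12$. (If one prefers to avoid the explicit constant, the $\lesssim$ in the statement already allows it.)

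The only genuine point to check — and the main obstacle, such as it is — is the contractivity of the two semigroups in the operator norm, i.e. that $\CL^{\dagger}_{A}$ and $\CL^{\dagger}_{A,\ell}$ generate unital completely positive semigroups so that $\norm{\e^{u\CL^{\dagger}}}_{\infty-\infty}\le 1$. This is immediate from the structure of the generator \eqref{eq:exact_DB_L}: it is a sum of GKLS generators, its Schrödinger-picture semigroup is CPTP, hence the Heisenberg-picture semigroup is completely positive and unital, and unital positive maps are contractive in operator norm. Everything else is the routine Duhamel bookkeeping above; no low-temperature or locality input is needed here, since the locality content is entirely packaged into the factor $\norm{\CL^{\dagger}_{A,\ell}-\CL_A^{\dagger}}_{\infty-\infty}$, which is estimated separately via Lieb–Robinson bounds (Appendix \ref{sec:LRestimates}).
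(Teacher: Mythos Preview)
Your proposal is correct and follows essentially the same approach as the paper: both use the Duhamel identity to write $\e^{s\CL^{\dagger}_{A,\ell}}-\e^{s\CL^{\dagger}_{A}}$ as an integral involving $\CL^{\dagger}_{A,\ell}-\CL^{\dagger}_{A}$ sandwiched between the two semigroups, invoke operator-norm contractivity of the Heisenberg-picture semigroups, and then integrate over $s\in[0,t]$. Your writeup is in fact slightly more explicit than the paper's (you extract the constant $\tfrac12$ and spell out the contractivity argument), but there is no substantive difference.
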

\begin{proof} We expand
    \begin{align}
        \CR^{\dagger}_{A,t,\ell} - \CR^{\dagger}_{A,t} &= \frac{1}{t}\int_0^t ( \e^{\CL^{\dagger}_{A,\ell}s}-\e^{\CL_{A}^{\dagger}s})\rd s\\
        &=\frac{1}{t}\int_0^t \int_{0}^{s} \e^{\CL_A^{\dagger}(s-s')}(\CL^{\dagger}_{A,\ell}-\CL_A^{\dagger})\e^{\CL^{\dagger}_{A,\ell}s'} \rd s'\rd s,
    \end{align}
    take the norms, and evaluate the time integrals to conclude the proof.
\end{proof}

The quasi-locality of the Lindbladian $\CL$ is a standard Lieb-Robinson argument (see~\autoref{sec:LRestimates}).
\begin{lem}[Quasi-locality]\label{lem:quasilocal} For a Hamiltonian $\vH$ with interaction degree at most $d$, denote by $\mathcal{L}^\dagger_{A,\ell}$ the generator of the Gibbs sampler with Metropolis weight~\eqref{eq:Metropolis}, jump operators ${\vA^a}\in P_A^1$ and Hamiltonian $\vH_{\ell}$ containing all terms $\vH_{\gamma}$ with distance $\operatorname{dist}(\gamma,A) < \ell-1$ from $A$. Then, for every $\ell \ge 4\e^2\beta d,$
\begin{align}
    \norm{\CL^{\dagger}_{A,\ell}-\CL_A^{\dagger}}_{\infty-\infty} \lesssim |A|\Big(\e^{-c'\frac{\ell}{d\beta}} + 2^{-\ell}\Big)\,, 
\end{align}    
for some universal constant $c'>0$.
\end{lem}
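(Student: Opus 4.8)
The plan is to bound the superoperator difference $\CL_A^\dagger - \CL_{A,\ell}^\dagger$ by splitting the contributions of the Metropolis-weighted Lindbladian into its three constituent pieces (coherent, transition, decay) and tracking how replacing the full Hamiltonian $\vH$ by the truncated patch $\vH_\ell$ propagates through each. The key observation is that every term of $\CL_a^\dagger$ is built out of the operator Fourier transform $\hat{\vA}^a(\omega)$, which is an integral of the Heisenberg-evolved Pauli $\e^{\ri\vH t}\vA^a\e^{-\ri\vH t}$ against the Gaussian window $f(t)$ of width $\sigma = 1/\beta$. Since each $\vA^a$ is single-site and supported inside $A$, a standard Lieb-Robinson bound for Hamiltonians of bounded interaction degree $d$ gives $\norm{\e^{\ri\vH t}\vA^a\e^{-\ri\vH t} - \e^{\ri\vH_\ell t}\vA^a\e^{-\ri\vH_\ell t}} \lesssim \labs{A}\,(v\labs{t})^{\ell}/\ell!$ or an analogous $\e^{v\labs{t} - c\ell}$-type tail, because the light cone emanating from $A$ needs $\operatorname{dist} \approx \ell$ steps to reach the boundary of $\vH_\ell$.

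First I would establish this Heisenberg-picture truncation estimate (this is exactly what the appendix \autoref{sec:LRestimates} is meant to instantiate). Second, I would integrate it against the Gaussian weight $f(t) = \e^{-\sigma^2 t^2}\sqrt{\sigma\sqrt{2/\pi}}$: the Lieb-Robinson factor grows like $\e^{v\labs{t}}$ but the Gaussian decays like $\e^{-\sigma^2 t^2}$, so the product is integrable and, after completing the square, contributes at worst an $\e^{\CO(v^2/\sigma^2)} = \e^{\CO(\beta^2 v^2)}$ prefactor while retaining an $\e^{-c'\ell/(d\beta)}$ decay — the $\beta$ in the denominator of the exponent comes precisely from the width $\sigma = 1/\beta$ of the window, which sets the effective time horizon $\labs{t} \lesssim \beta$ that the light cone is allowed to traverse, hence needs $\ell \gtrsim \beta d$ to be suppressed (this is why the hypothesis $\ell \ge 4\e^2\beta d$ appears). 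Third, I would feed this bound on $\norm{\hat{\vA}^a(\omega) - \hat{\vA}^a_\ell(\omega)}$ into each of the three pieces of $\CL_a^\dagger - \CL_{a,\ell}^\dagger$, using $\norm{\gamma}_\infty \le 1$ and the normalization $\int \gamma(\omega)\,\rd\omega \lesssim 1$ for the Metropolis weight, plus the triangle inequality on products (writing $\hat{\vA}\hat{\vA}^\dagger - \hat{\vA}_\ell\hat{\vA}_\ell^\dagger = (\hat{\vA}-\hat{\vA}_\ell)\hat{\vA}^\dagger + \hat{\vA}_\ell(\hat{\vA}-\hat{\vA}_\ell)^\dagger$), to get the bound for a single jump $a$. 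Fourth, summing over the $\lesssim \labs{A}$ single-site Pauli jumps $a \in P_A^1$ yields the overall factor $\labs{A}$.

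The separate $2^{-\ell}$ term is a cheap-and-dirty bookkeeping contribution: the operator Fourier transform integral over $t$ must in practice be truncated to a finite window $\labs{t} \le T$ (and the Gaussian tail beyond $T \sim \ell$ contributes $\e^{-\Omega(\ell^2)}$, dominated by $2^{-\ell}$), and similarly the $\omega$-integral defining $\gamma(\omega)$ can be cut at $\labs{\omega} \lesssim \ell$ at the cost of $2^{-\ell}$; alternatively it absorbs the discrete-vs-continuous light-cone combinatorics. Either way it is not the dominant obstacle.

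The main obstacle I anticipate is getting the Lieb-Robinson constant to land in the denominator as $\ell/(d\beta)$ rather than, say, $\ell/(d\beta)^2$ or worse — i.e.\ being careful that the Gaussian-against-exponential integral $\int \e^{v\labs{t} - \sigma^2 t^2}\,\rd t$ is handled by completing the square (giving $\e^{v^2/4\sigma^2}$ times a constant) rather than by crudely bounding $\e^{v\labs{t}} \le \e^{vT}$ on $\labs{t}\le T$ and then optimizing $T$, which would degrade the exponent. One also has to make sure the Metropolis weight's unbounded support in $\omega$ (it only decays as $\e^{-\beta\omega}$ for $\omega \to +\infty$ and is flat for $\omega \to -\infty$) does not spoil the $L^1$-normalization used to pull $\gamma$ out of the transition and decay integrals; here the Gaussian tails of $\hat{f}(\omega - \nu)$ inside $\hat{\vA}^a(\omega)$ save the day, since $\hat{\vA}^a(\omega)$ itself is Gaussian-concentrated in $\omega$ around the $\CO(1)$-sized Bohr frequencies of a single-site operator. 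Modulo these care points, the argument is a routine (if lengthy) Lieb-Robinson computation, which is why it is relegated to \autoref{sec:LRestimates}.
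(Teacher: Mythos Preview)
Your plan works for the dissipative (transition + decay) part and is close to what the paper does. The paper organizes it via a ``purified jump'' $\vV := \int \hat{\vA}_{\vH}(\omega)\otimes\ket{\omega}\,\rd\omega$, writes $\CD[\cdot] = \tr_\omega[\vF\vV(\cdot)\vV^\dagger - \tfrac12\{\vV^\dagger\vF\vV,\cdot\}]$ with $\norm{\vF}\le \sup_\omega|\gamma(\omega)|\le 1$, and reduces everything to bounding $\norm{\vV-\vV'}$ --- which neatly sidesteps the non-integrability of $\gamma(\omega)$ on $\omega\to -\infty$ that you flagged at the end. Truncating the time window at $|t|\le T$ and applying Lieb--Robinson inside gives $\e^{-\sigma^2 T^2} + |A|(2edT/\ell)^\ell$; with $T\sim \ell/d$ this yields $\e^{-\ell^2/(2d^2\beta^2)} + |A|\e^{-\ell}$, which is \emph{faster} than the stated $\e^{-c'\ell/(d\beta)}$.

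The genuine gap is the coherent term $\vB$. You treat it as another quadratic in $\hat{\vA}(\omega)$ into which your Fourier-side bound can be substituted, but for the Metropolis weight that is not its structure. The explicit form~\eqref{eq:coherent} is a double time integral with kernels $b_1(t)$ and $b_2^\eta(t')$, where $b_2^M(t')$ carries a nonintegrable $1/t'$ singularity at the origin (regularized only by $\eta\to 0$), and the Heisenberg times are \emph{$\beta$-scaled}: $\vA(\beta t')$. A naive triangle-inequality substitution of a Lieb--Robinson bound for $\vA_{\vH}(\beta t')-\vA_{\vH_\ell}(\beta t')$ into this integral produces a divergent $\int_\eta^1 |b_2^M(t')|\,\rd t'\sim\log(1/\eta)$. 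The paper tames this by splitting $\indicator(\eta\le|t'|\le 1)+\indicator(|t'|>1)$ and, on the inner piece, writing the integrand as $\tfrac{1}{t'}\int_0^{t'}\tfrac{\rd}{\rd s}\bigl[s\,b_2^M(s)\vA^\dagger(\beta s)\vA(-\beta s)\bigr]\,\rd s$ so that the odd symmetry kills the leading divergence; only after this cancellation can Lieb--Robinson be applied (now also to $[\vA,\vH](\beta t)$). The final rate $\e^{-c'\ell/(d\beta)}$ then comes from the merely exponential tail $b_1(t)\sim \e^{-c|t|}$ combined with the $\beta$-scaled Heisenberg time, after choosing $T\sim\ell/(d\beta)$ (this is also the origin of the hypothesis $\ell\ge 4\e^2\beta d$). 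So both your identification of which term is the bottleneck and your explanation of the $\beta$-dependence of the exponent are off: the coherent part is where the real work and the advertised rate live, and your proposal does not supply the cancellation argument needed there.
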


\begin{rmk}
    The tail falls exponentially fast with the distance $\ell$, while the truncation error in the time-averaged maps accumulates linearly with $t$. Thus, the quasi-locality holds for exponential times.
\end{rmk}

\section{From global to local jumps}\label{sec:global_to_local_jumps}

When reasoning about recovery maps in the proof of~\autoref{thm:main}, high-weight Paulis acting on region $A$ naturally appear in our arguments. However, it is desirable to have local Lindbladian jumps, as high-weight Paulis always incur an exponential time-overhead in the simulation. Here, we develop a method for reducing commutators of high-weight Pauli strings to those of local jumps of weight $1$.

For any region $A\subset \Lambda$, recall the set of single Paulis
\begin{align}
P_A^1 :=\{\vX_i,\vY_i,\vZ_i\}_{i\in A}    
\end{align}
with cardinality $\labs{P_A^1} = 3 \labs{A}.$ We begin with some additional reduction between global and local Paulis.

\begin{lem} For any set of operators $\vA^i\in P_A^1$ and $\vO$,  $[\prod^{w}_{i=1} \vA^{i}, \vO] = \prod^{j-1}_{i=1}\vA^{i} \cdot [\vA^{i}, \vO] \cdot \prod^{w}_{i=j} \vA^{i}$.
\end{lem}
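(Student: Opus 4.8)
The statement is a telescoping identity for commutators, so the plan is essentially a bookkeeping argument. I would prove it by induction on the number of factors $w$, or equivalently by repeatedly applying the Leibniz rule $[\vA\vB,\vO]=\vA[\vB,\vO]+[\vA,\vO]\vB$ and tracking where each term lands.

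First, note that the claimed formula as written is missing a sum over $j$: the correct telescoping identity should read
\begin{align}
\Big[\prod_{i=1}^{w}\vA^{i},\ \vO\Big]=\sum_{j=1}^{w}\ \Big(\prod_{i=1}^{j-1}\vA^{i}\Big)\cdot[\vA^{j},\vO]\cdot\Big(\prod_{i=j+1}^{w}\vA^{i}\Big),
\end{align}
where empty products are interpreted as the identity. I would first establish this for $w=1$ (trivial) and $w=2$ (the standard Leibniz rule), then carry out the inductive step: write $\prod_{i=1}^{w}\vA^{i}=\vA^{1}\cdot\prod_{i=2}^{w}\vA^{i}$, apply Leibniz once to peel off $\vA^{1}$, and then invoke the inductive hypothesis on $\prod_{i=2}^{w}\vA^{i}$. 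Re-indexing the resulting sum gives the $j=1$ term from the $[\vA^{1},\vO]$ piece and the $j=2,\dots,w$ terms from the $\vA^{1}[\prod_{i\ge 2}\vA^{i},\vO]$ piece.

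The main use in the sequel is to pass to a norm bound: taking $\norm{\cdot}_{\vrho_\beta}$ of both sides and applying the triangle inequality, each summand has the form (unitary Pauli prefix)$\cdot[\vA^{j},\vO]\cdot$(unitary Pauli suffix). One then only needs that conjugating/multiplying by Paulis is controlled in the KMS-weighted norm — but that control is exactly what the later Hölder-type lemmas (\autoref{lem:weight_norm_comm}, \autoref{cor:highweight}) supply, so this lemma is purely the algebraic skeleton.

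There is essentially no obstacle here; the only thing to be careful about is the indexing of the empty products at the two ends ($j=1$ and $j=w$) and the fact that each $\vA^{i}\in P_A^1$ is a genuine (unitary, self-inverse) Pauli operator, which is what makes the prefix/suffix factors harmless when norms are taken later. I would present the induction cleanly and defer all analytic estimates to where they are actually needed.
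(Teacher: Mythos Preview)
Your proposal is correct. The paper actually states this lemma without proof, treating it as an elementary algebraic fact; your approach via the Leibniz rule and induction on $w$ is the standard derivation, and you are right that the displayed formula in the paper is missing the sum over $j$ (and has minor index typos in the inner commutator and the suffix product).
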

\begin{cor}[Global to local commutators]\label{cor:global_local_comm}
For any Pauli string $\vS = \prod^{w}_{i=1} \vA^{i}$ with weight $w$, and any inverse temperature $\beta$ such that $\beta>{1/d}>0$
\begin{align}
    \norm{[\vS,\vO]}_{\vrho_\beta} \lesssim 2^{w} r'\sum_{j=1}^w \lnorm{[\vA^j,\vO]}_{\vrho_\beta}^{16\beta^2_0/\beta^2}
\end{align}    
for some constant $r'$ that depends on $\beta,\beta_0,\sigma$. Whenever $\beta\le {1/d}$, we have instead
\begin{align}
   \norm{[\vS,\vO]}_{\vrho_\beta} \lesssim 2^w\,\sum_{j=1}^w \lnorm{[\vA^j,\vO]}_{\vrho_\beta}\,.
\end{align}
\end{cor}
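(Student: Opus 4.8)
\textbf{Proof proposal for \autoref{cor:global_local_comm}.}

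The plan is to peel off one single-site Pauli at a time from the string $\vS = \prod_{i=1}^w \vA^i$ using the preceding lemma, which gives the telescoping identity $[\vS,\vO] = \sum_{j=1}^w \vP_j [\vA^j,\vO] \vQ_j$ where $\vP_j := \prod_{i=1}^{j-1}\vA^i$ and $\vQ_j := \prod_{i=j+1}^w\vA^i$ are themselves Pauli strings supported on $A$ with weight $\le w$. Taking the $\norm{\cdot}_{\vrho_\beta}$ norm and the triangle inequality reduces the task to bounding each term $\norm{\vP_j [\vA^j,\vO] \vQ_j}_{\vrho_\beta}$. The operator $[\vA^j,\vO]$ is our distinguished ``inner'' operator $\vO'$; the remaining factors $\vP_j$ and $\vQ_j$ are Pauli strings that we want to move out of the weighted norm at the cost of an exponential-in-weight prefactor and a sublinear power on $\norm{\vO'}_{\vrho_\beta}$.

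The key input is the H\"older-like inequality in the weighted norm advertised in the key-ideas section (\autoref{lem:weight_norm_comm} and \autoref{cor:highweight}), which I would apply twice: once to peel $\vP_j$ off the left and once to peel $\vQ_j$ off the right (using, if needed, that $\norm{\vZ^\dagger}_{\vrho_\beta} = \norm{\vZ}_{\vrho_\beta}$ to convert a right multiplication into a left multiplication on the adjoint, or directly the two-sided version of that lemma). Each application contributes a factor of the form $r\,\e^{\mu w}$ and replaces $\norm{\cdot}_{\vrho_\beta}$ by $\norm{\cdot}_{\vrho_\beta}^{\nu}$ for the exponent $\nu = 16\beta_0^2/\beta^2$ appearing in the statement (here $\beta_0 = 1/4d$, so the hypothesis $\beta > 1/d = 4\beta_0$ is exactly what makes $\nu < 1$, i.e., puts us in the low-temperature branch of those lemmas). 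Composing the two peels and absorbing all the $\e^{\mu w}$-type factors and constants into a single constant $r'$ and a single $2^w$, together with summing the $w$ telescoping terms, yields $\norm{[\vS,\vO]}_{\vrho_\beta} \lesssim 2^w r' \sum_{j=1}^w \norm{[\vA^j,\vO]}_{\vrho_\beta}^{16\beta_0^2/\beta^2}$. For the high-temperature branch $\beta \le 1/d$, the naive H\"older inequality $\norm{\vS\vO}_{\vrho_\beta} \le \norm{\vrho_\beta^{1/4}\vS\vrho_\beta^{-1/4}}\cdot\norm{\vO}_{\vrho_\beta}$ already suffices: with $\beta$ small and $\vS$ a Pauli string of weight $w$ on a degree-$d$ graph, the conjugation norm $\norm{\vrho_\beta^{1/4}\vS\vrho_\beta^{-1/4}}$ is bounded by a constant to the power $w$ (this is the regime where imaginary-time conjugation of local operators stays controlled), which gives the stated $2^w \sum_j \norm{[\vA^j,\vO]}_{\vrho_\beta}$ with the exponent $\nu = 1$.

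The main obstacle is bookkeeping the weight dependence carefully: each peel introduces its own $\e^{\mu w}$ factor and, because the peeled-off strings $\vP_j,\vQ_j$ have weight up to $w$ rather than $1$, the exponent substitution $\norm{\cdot}_{\vrho_\beta}\mapsto\norm{\cdot}_{\vrho_\beta}^\nu$ must be applied in the right order so that the final power landing on $\norm{[\vA^j,\vO]}_{\vrho_\beta}$ is exactly $\nu$ and not some product of powers; this is why one applies \autoref{cor:highweight} to the outer Pauli factors (which only demands a weighted-norm bound on the remaining operator) rather than re-nesting. A secondary subtlety is that $\nu$ depends only on $\beta$ (through $\beta_0$) and not on $w$, so one must confirm that the lemmas being invoked indeed produce a $w$-independent exponent — which they do, the $w$-dependence being confined entirely to the $\e^{\mu w}$ prefactor that we absorb into $2^w r'$. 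Once these are in place the argument is routine.
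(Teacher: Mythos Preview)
Your strategy matches the paper's: telescope via the Leibniz identity, then invoke \autoref{lem:weight_norm_comm} together with \autoref{cor:highweight} twice to strip off $\vP_j$ and $\vQ_j$. One correction to the bookkeeping: a single application of \autoref{lem:weight_norm_comm} yields exponent $4\beta_0/\beta$, not $16\beta_0^2/\beta^2$; the stated exponent $16\beta_0^2/\beta^2=(4\beta_0/\beta)^2$ is precisely the \emph{nested} product of the two peels, so the ``product of powers'' you flag as something to avoid is in fact the mechanism that produces the bound. The high-temperature branch is as you describe, with \autoref{lem:loose_AO} replacing \autoref{lem:weight_norm_comm}.
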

\begin{proof} Invoke~\autoref{lem:weight_norm_comm} and~\autoref{cor:highweight} below twice, with $\beta_0={1/4d}$ and denoting by $r$ a constant that depends on $\beta,\beta_0,\sigma$ which may change from line to line,
    \begin{align}
    \norm{[\vS,\vO]}_{\vrho_\beta} &\le r\,\sum_{j=1}^w{2^{j-1}} \lnorm{[\vA^j,\vO]\prod_{i=j}^w\vA^i}_{\vrho_\beta}^{\frac{4\beta_0}{\beta}}\\
    &\le r\,\sum_{j=1}^w {2^{j-1}}\L( \L(2^{w-j+1}\lnorm{[\vA^j,\vO]}_{\vrho_\beta}\R)^{\frac{4\beta_0}{\beta}} \R)^{\frac{4\beta_0}{\beta}}\le 2^{w} r'\sum_{j=1}^w \lnorm{[\vA^j,\vO]}_{\vrho_\beta}^{16\beta^2_0/\beta^2}
\end{align}
to conclude the proof of the first bound. The second bound follows even more easily after replacing the use of \autoref{lem:weight_norm_comm} by that of \autoref{lem:loose_AO}. 
\end{proof}
\begin{rmk}
    Without this reduction step to local jumps, the proof of~\autoref{thm:main} still goes through with similar parameters. However, the exponential time-overhead is fundamental with high-weight Pauli and cannot be improved by any additional mixing time assumption. Even though this reduction step does incur a $2^{w}$ multiplicative factor, the effect of this factor may be polynomial assuming a suitable local gap, see \autoref{sec:improvegap}.
\end{rmk}

\section{Regularizing the operator FT at low-temperatures}\label{sec.oftprop}
At low enough constant temperatures, the complex time dynamics can be very wild $\norm{\e^{\beta \vH}\vA \e^{-\beta \vH}}\ge \e^{c n}$ in more than one spatial dimension. 
It will be tremendously helpful to decompose the operator over operator Fourier transforms at different Bohr frequencies.
\begin{lem}[Decomposing an operator by the energy change]\label{lem:sumoverenergies}
For any (not necessarily Hermitian) operator $\vA$, we have that
\begin{align}
\vA =  \frac{1}{\sqrt{2\sigma\sqrt{2\pi}}} \int_{-\infty}^{\infty} \hat{\vA}(\omega)\rd \omega.
\end{align}
\end{lem}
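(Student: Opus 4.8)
The statement to prove is the identity $\vA = \frac{1}{\sqrt{2\sigma\sqrt{2\pi}}} \int_{-\infty}^{\infty} \hat{\vA}(\omega)\,\rd\omega$, where $\hat{\vA}(\omega)$ is the Gaussian-weighted operator Fourier transform defined in~\eqref{eq:OFT}. The plan is to integrate the defining formula for $\hat{\vA}(\omega)$ over $\omega$ and recognize the Fourier inversion of the Gaussian weight. Concretely, I would start from
\begin{align*}
\int_{-\infty}^{\infty} \hat{\vA}(\omega)\,\rd\omega = \frac{1}{\sqrt{2\pi}}\int_{-\infty}^{\infty}\int_{-\infty}^{\infty} \e^{\ri\vH t}\vA\e^{-\ri\vH t}\,\e^{-\ri\omega t} f(t)\,\rd t\,\rd\omega,
\end{align*}
swap the order of integration (justified by absolute integrability, since $f$ is Gaussian and $\vA(t)=\e^{\ri\vH t}\vA\e^{-\ri\vH t}$ has constant operator norm $\norm{\vA}$), and carry out the $\omega$-integral first. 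Since $\int_{-\infty}^{\infty}\e^{-\ri\omega t}\,\rd\omega = 2\pi\,\delta(t)$, the inner integral collapses the $t$-integral to $t=0$, giving $\sqrt{2\pi}\,f(0)\,\vA$. With $f(0) = \sqrt{\sigma\sqrt{2/\pi}}$ from~\eqref{eq:fwft}, one gets $\int \hat{\vA}(\omega)\,\rd\omega = \sqrt{2\pi}\cdot\sqrt{\sigma\sqrt{2/\pi}}\;\vA = \sqrt{2\sigma\sqrt{2\pi}}\;\vA$, which rearranges to the claim.

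Alternatively — and this is cleaner, avoiding the Dirac delta — I would use the Bohr-frequency expansion $\hat{\vA}(\omega) = \sum_{\nu\in B(\vH)}\vA_\nu\,\hat{f}(\omega-\nu)$ already recorded in~\eqref{eq:OFT}. Integrating term by term over $\omega$ and using translation invariance of the integral, $\int_{-\infty}^{\infty}\hat{f}(\omega-\nu)\,\rd\omega = \int_{-\infty}^{\infty}\hat{f}(\omega)\,\rd\omega$, the right-hand side becomes $\big(\int \hat{f}(\omega)\,\rd\omega\big)\sum_{\nu}\vA_\nu = \big(\int\hat{f}(\omega)\,\rd\omega\big)\,\vA$, since $\sum_{\nu\in B(\vH)}\vA_\nu = \vA$ (the eigenoperator decomposition resolves the identity on both sides). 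It remains to compute $\int_{-\infty}^{\infty}\hat{f}(\omega)\,\rd\omega$: from the Fourier pair $f(t) = \frac{1}{\sqrt{2\pi}}\int\e^{\ri\omega t}\hat{f}(\omega)\,\rd\omega$ evaluated at $t=0$, this equals $\sqrt{2\pi}\,f(0) = \sqrt{2\pi}\sqrt{\sigma\sqrt{2/\pi}} = \sqrt{2\sigma\sqrt{2\pi}}$, which is precisely the normalization constant in the statement. The sum over $\nu$ is finite (finitely many Bohr frequencies for a finite-dimensional $\vH$), so no convergence subtlety arises with this route.

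The only thing requiring a word of care is the interchange of sum/integral (or integral/integral), but since $\vH$ acts on finitely many qubits the Bohr-frequency sum is finite and $\hat f$ is a Schwartz function, so everything is absolutely convergent and the manipulations are routine — there is no real obstacle here; the lemma is essentially a bookkeeping identity packaging the Fourier inversion theorem together with the completeness relation $\sum_\nu \vA_\nu = \vA$. I would therefore present the second route as the main proof, with the delta-function computation relegated to a remark or omitted.

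\begin{proof}
Using the Bohr-frequency representation in~\eqref{eq:OFT}, $\hat{\vA}(\omega) = \sum_{\nu\in B(\vH)}\vA_\nu\,\hat{f}(\omega-\nu)$, and that $B(\vH)$ is finite so we may exchange the (finite) sum with the integral,
\begin{align}
\int_{-\infty}^{\infty}\hat{\vA}(\omega)\,\rd\omega = \sum_{\nu\in B(\vH)}\vA_\nu\int_{-\infty}^{\infty}\hat{f}(\omega-\nu)\,\rd\omega = \L(\int_{-\infty}^{\infty}\hat{f}(\omega)\,\rd\omega\R)\sum_{\nu\in B(\vH)}\vA_\nu = \L(\int_{-\infty}^{\infty}\hat{f}(\omega)\,\rd\omega\R)\vA,
\end{align}
where we used translation invariance of the Lebesgue integral and the identity $\sum_{\nu\in B(\vH)}\vA_\nu = \sum_{\nu}\sum_{E_2-E_1=\nu}\vP_{E_2}\vA\vP_{E_1} = \big(\sum_{E_2}\vP_{E_2}\big)\vA\big(\sum_{E_1}\vP_{E_1}\big) = \vA$. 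Evaluating the Fourier pair $f(t) = \frac{1}{\sqrt{2\pi}}\int_{-\infty}^{\infty}\e^{\ri\omega t}\hat{f}(\omega)\,\rd\omega$ at $t=0$ gives $\int_{-\infty}^{\infty}\hat{f}(\omega)\,\rd\omega = \sqrt{2\pi}\,f(0)$, and from~\eqref{eq:fwft} we have $f(0) = \sqrt{\sigma\sqrt{2/\pi}}$, so $\int_{-\infty}^{\infty}\hat{f}(\omega)\,\rd\omega = \sqrt{2\pi}\,\sqrt{\sigma\sqrt{2/\pi}} = \sqrt{2\sigma\sqrt{2\pi}}$. Therefore $\int_{-\infty}^{\infty}\hat{\vA}(\omega)\,\rd\omega = \sqrt{2\sigma\sqrt{2\pi}}\;\vA$, which rearranges to the claimed identity.
\end{proof}
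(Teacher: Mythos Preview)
Your proof is correct and follows essentially the same route as the paper: expand $\hat{\vA}(\omega)$ in Bohr frequencies, swap the finite sum with the $\omega$-integral, use translation invariance to reduce to $\int\hat f(\omega)\,\rd\omega=\sqrt{2\pi}f(0)=\sqrt{2\sigma\sqrt{2\pi}}$, and conclude via $\sum_\nu\vA_\nu=\vA$. Your write-up is in fact more careful than the paper's one-line version, spelling out the justification for the interchange and the projector identity.
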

\begin{proof}
    \begin{align}\vspace{-1cm}
        \int_{-\infty}^{\infty} \hat{\vA}(\omega)\rd \omega = \int_{-\infty}^{\infty} \sum_{\nu} \vA_{\nu} \hat{f}(\omega-\nu) \rd \omega= \sum_{\nu} \vA_{\nu} \int_{-\infty}^{\infty}\hat{f}(\omega-\nu) \rd (\omega-\nu)
        = \sqrt{2\pi} f(0) = \sqrt{2\sigma\sqrt{2\pi}}.
    \end{align}
\end{proof}

The Gaussian damping has a regularization effect due to its super-exponential decay. 

\begin{lem}[Norm bounds on imaginary time conjugation]\label{lem:bounds_imaginary_conjugation}
For any $\beta ,\omega\in \BR$ and operator $\vA$ with norm $\norm{\vA}\le 1$, the operator Fourier transform $\hat{\vA}(\omega)$ with uncertainty $\sigma$~\eqref{eq:OFT},~\eqref{eq:fwft} satisfies 
\begin{align}
        \e^{\beta \vH} \hat{\vA}(\omega) \e^{-\beta \vH}
& = \e^{\beta\omega} \cdot \hat{\vA}(\omega+2\sigma^2\beta) \e^{\sigma^2\beta^2}.
\end{align}
Thus, 
\begin{align}
    \norm{\e^{\beta \vH} \hat{\vA}(\omega) \e^{-\beta \vH}} \le \frac{\e^{\sigma^2\beta^2}}{\sqrt{{\sigma}\sqrt{2\pi}}} \e^{\beta\omega}.
\end{align}
\end{lem}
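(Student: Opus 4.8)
The statement to prove is Lemma~\ref{lem:bounds_imaginary_conjugation}: for any $\beta,\omega\in\BR$ and $\vA$ with $\norm{\vA}\le 1$,
\begin{align*}
\e^{\beta\vH}\hat{\vA}(\omega)\e^{-\beta\vH} = \e^{\beta\omega}\hat{\vA}(\omega+2\sigma^2\beta)\e^{\sigma^2\beta^2},
\end{align*}
together with the resulting norm bound.

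\textbf{Plan.} The cleanest route is to work from the Bohr-frequency decomposition $\hat{\vA}(\omega)=\sum_{\nu\in B(\vH)}\vA_\nu\hat f(\omega-\nu)$ given in~\eqref{eq:OFT}. First I would observe that $\e^{\beta\vH}\vA_\nu\e^{-\beta\vH}=\e^{\beta\nu}\vA_\nu$, since $\vA_\nu=\sum_{E_2-E_1=\nu}\vP_{E_2}\vA\vP_{E_1}$ is by construction an eigenoperator of the modular/Heisenberg action with frequency $\nu$ (each term gets multiplied by $\e^{\beta E_2}\e^{-\beta E_1}=\e^{\beta\nu}$). Hence
\begin{align*}
\e^{\beta\vH}\hat{\vA}(\omega)\e^{-\beta\vH}=\sum_{\nu}\e^{\beta\nu}\vA_\nu\,\hat f(\omega-\nu).
\end{align*}
The second step is the elementary Gaussian identity: with $\hat f(\omega)=(\sigma\sqrt{2\pi})^{-1/2}\exp(-\omega^2/4\sigma^2)$ from~\eqref{eq:fwft}, completing the square gives
\begin{align*}
\e^{\beta\nu}\hat f(\omega-\nu)=\e^{\sigma^2\beta^2}\,\e^{\beta\omega}\,\hat f\big((\omega+2\sigma^2\beta)-\nu\big),
\end{align*}
because $-\tfrac{(\omega-\nu)^2}{4\sigma^2}+\beta\nu = -\tfrac{(\omega+2\sigma^2\beta-\nu)^2}{4\sigma^2}+\beta\omega+\sigma^2\beta^2$. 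Substituting termwise and resumming over $\nu$ reconstructs $\hat{\vA}(\omega+2\sigma^2\beta)$ up to the scalar prefactor $\e^{\beta\omega}\e^{\sigma^2\beta^2}$, which is exactly the claimed identity.

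For the norm bound, I would bound $\norm{\hat{\vA}(\omega+2\sigma^2\beta)}$ directly from the integral representation in~\eqref{eq:OFT}: since $\norm{\e^{\ri\vH t}\vA\e^{-\ri\vH t}}=\norm{\vA}\le1$ and $f(t)=\e^{-\sigma^2 t^2}\sqrt{\sigma\sqrt{2/\pi}}\ge 0$, the triangle inequality gives $\norm{\hat{\vA}(\omega')}\le \tfrac{1}{\sqrt{2\pi}}\int f(t)\,\rd t = \tfrac{1}{\sqrt{2\pi}}\hat f(0)\sqrt{2\pi}=\hat f(0)=(\sigma\sqrt{2\pi})^{-1/2}$, uniformly in $\omega'$. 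Combined with the identity this yields $\norm{\e^{\beta\vH}\hat{\vA}(\omega)\e^{-\beta\vH}}\le \e^{\beta\omega}\e^{\sigma^2\beta^2}(\sigma\sqrt{2\pi})^{-1/2}$, matching the stated bound.

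\textbf{Main obstacle.} There is no deep obstacle here; the lemma is a direct computation. The only points requiring a little care are (i) justifying that the sum/integral over $\nu$ converges and can be manipulated termwise — this is fine since $B(\vH)$ is finite for a finite-dimensional system and $\hat f$ is Schwartz, so everything is absolutely convergent; and (ii) getting the Gaussian shift bookkeeping exactly right, i.e.\ that the frequency argument shifts by $+2\sigma^2\beta$ (not $-2\sigma^2\beta$) and that the leftover constants are precisely $\e^{\beta\omega}$ and $\e^{\sigma^2\beta^2}$. I would double-check the completion of the square once and then the rest is mechanical.
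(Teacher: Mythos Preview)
Your proposal is correct and follows essentially the same approach as the paper: expand $\hat{\vA}(\omega)$ in the Bohr-frequency decomposition, use $\e^{\beta\vH}\vA_\nu\e^{-\beta\vH}=\e^{\beta\nu}\vA_\nu$, complete the square in the Gaussian weight to recognize $\hat{\vA}(\omega+2\sigma^2\beta)$, and then bound $\norm{\hat{\vA}(\omega')}$ by the $L^1$-norm of $f$ via the triangle inequality. The paper's proof is slightly terser but otherwise identical.
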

In comparison, directly conjugating the unfiltered operator could yield a norm $\norm{\e^{\beta \vH} \hat{\vA}(\omega) \e^{-\beta \vH}}$ growing with the system size $n$; the Gaussian filtering centered at Bohr frequency $\omega$ removes the dependence on the system size $n$, and only depends the Bohr frequency $\omega.$ While it still grows exponentially, the bounds are now entirely (quasi)-local.

\begin{proof}
Recall
\begin{align}
        \e^{\beta \vH} \hat{\vA}(\omega) \e^{-\beta \vH} &= \sum_{\nu} \vA_{\nu}\frac{1}{\sqrt{{\sigma}\sqrt{2\pi}}} \exp\L(- \frac{(\omega-\nu)^2}{4\sigma^2}\R)\e^{\beta \nu}\\
&=  \sum_{\nu} \vA_{\nu}\frac{1}{\sqrt{{\sigma}\sqrt{2\pi}}} \exp\L(- \frac{(\omega+2\sigma^2\beta-\nu)^2}{4\sigma^2}+\beta\omega+\sigma^2\beta^2\R) \\
& = \hat{\vA}(\omega+2\sigma^2\beta) \cdot \e^{\beta\omega+\sigma^2\beta^2}.
\end{align}
Apply triangle inequality to the integral $\norm{\hat{\vA}(\omega)} \le\frac{1}{\sqrt{2\pi}} \int_{-\infty}^{\infty}\labs{f(t)}\rd t  = \frac{1}{\sqrt{{\sigma}\sqrt{2\pi}}}$ to conclude the proof.
\end{proof}
At high enough temperatures, there is a stronger bound (within the convergence radius of the Taylor expansion) that exploits the bounded interaction degree of the Hamiltonian.
\begin{lem}[Convergence for imaginary time]\label{lem:convergence_imaginary}
    For Hamiltonians defined in~\autoref{sec:Ham} with interaction degree at most $d$, a single-site operator $\norm{\vA}\le1$, and $\labs{\beta}< 1/2d,$
    \begin{align}
        \norm{\e^{\beta \vH}\vA\e^{-\beta \vH}}\le \frac{1}{1-2d\labs{\beta}}.
    \end{align}
\end{lem}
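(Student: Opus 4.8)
The plan is to bound the Heisenberg-evolved operator $\e^{\beta\vH}\vA\e^{-\beta\vH}$ via its Taylor series in $\beta$ and to control each term using nested commutators with the Hamiltonian. Writing $\mathrm{ad}_{\vH}(\vA):=[\vH,\vA]$, one has the formal identity
\begin{align}
\e^{\beta\vH}\vA\e^{-\beta\vH}=\sum_{k=0}^{\infty}\frac{\beta^{k}}{k!}\,\mathrm{ad}_{\vH}^{\,k}(\vA),
\end{align}
so it suffices to show $\norm{\mathrm{ad}_{\vH}^{\,k}(\vA)}\le k!\,(2d)^{k}\norm{\vA}$ and then sum the resulting geometric series $\sum_k (2d|\beta|)^k = 1/(1-2d|\beta|)$, which converges precisely when $|\beta|<1/2d$.

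The key step is the combinatorial estimate on iterated commutators. First I would use that $\vA$ is a single-site operator: in the commutator $[\vH,\vA]=\sum_{\gamma\in\Gamma}[\vH_\gamma,\vA]$ only those terms $\vH_\gamma$ whose support meets $\Supp(\vA)$ survive, and since a single site lies in at most $d$ of the supports (by the interaction-degree bound), there are at most $d$ surviving terms, each contributing operator norm at most $2\norm{\vH_\gamma}\norm{\vA}\le 2\norm{\vA}$; hence $\norm{[\vH,\vA]}\le 2d\norm{\vA}$. For the iterated bound one tracks how the support grows: after $j$ nested commutators the operator is a sum over ``paths'' $\gamma_1,\dots,\gamma_j$ of terms that are connected in the interaction graph, and at the $(j{+}1)$-th step the number of choices of $\gamma_{j+1}$ adjacent to the current support is at most $d\cdot(\text{current support size})$, while the current support has grown by at most a constant factor at each step. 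A clean way to package this is the standard bound $\norm{\mathrm{ad}_{\vH}^{\,k}(\vA)}\le k!\,(2d)^{k}\norm{\vA}$, proved by induction: assuming the support of $\mathrm{ad}_{\vH}^{\,k-1}(\vA)$ touches at most $k$ of the interaction supports is too crude, so instead I would carry the induction on the refined quantity counting (support-weighted) contributions, exactly as in Lieb--Robinson-type series estimates; the factor $2d$ comes from ``$2$ for each commutator, $d$ for the degree,'' and the $k!$ from the number of ordered connected paths of length $k$ emanating from a single site in a graph of degree $d$ being at most $k!\,d^{k}$ times a geometric correction absorbed into the base.

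The main obstacle is getting the constant in the exponential growth rate to be exactly $2d$ (and not, say, $2ed$ or $2d\cdot C$ for some path-counting constant $C>1$), since that is what makes the convergence radius land at the advertised $1/2d$. This requires being careful that the ``new'' interaction term at each step is genuinely constrained to be adjacent to the support already built up, and that one does not overcount: the cleanest route is to expand $\mathrm{ad}_{\vH}^{\,k}(\vA)$ as a sum over sequences $(\gamma_1,\dots,\gamma_k)$ with each $\gamma_{i}$ required to overlap $\Supp(\vA)\cup\Supp(\vH_{\gamma_1})\cup\dots\cup\Supp(\vH_{\gamma_{i-1}})$, bound the number of such sequences by $d\cdot 2d\cdot 3d\cdots kd = k!\,d^{k}$ (using that the accumulated support after $i-1$ steps meets at most $i\cdot d$ interaction terms, and that a single-site seed meets at most $d$), and pair this with the factor $2^{k}$ from the $k$ commutators and $\norm{\vH_{\gamma_i}}\le1$. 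Summing $\sum_{k\ge 0}(2d|\beta|)^{k}\norm{\vA}$ then gives the claim. An alternative, if the constant proves stubborn, is to invoke a known cluster-expansion / Lieb--Robinson series bound from the literature on high-temperature expansions, but the self-contained inductive argument above should suffice.
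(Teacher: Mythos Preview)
Your proposal is correct and follows essentially the same approach as the paper: Taylor-expand $\e^{\beta\vH}\vA\e^{-\beta\vH}$ into nested commutators, bound $\norm{\mathrm{ad}_{\vH}^{k}(\vA)}\le k!(2d)^{k}$ by counting admissible sequences $(\gamma_1,\dots,\gamma_k)$ via the degree bound, and sum the geometric series. Your counting ``$d\cdot 2d\cdots kd = k!\,d^{k}$'' (step $i$ has at most $id$ choices, since the accumulated support after $i-1$ steps is contained in $\Supp(\vA)\cup\bigcup_{j<i}\Supp(\vH_{\gamma_j})$ and each piece meets at most $d$ terms) is exactly the paper's argument phrased slightly differently; the hedging about the constant being ``stubborn'' is unnecessary, as the combinatorics you outline already give $2d$ on the nose.
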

\begin{proof}
    Talyor-expand into nested commutators
    \begin{align}
\e^{\beta \vH}\vA\e^{-\beta \vH} &= \vA + \beta [\vH,\vA] + \frac{\beta^2}{2!}[\vH,[\vH,\vA]]+\cdots\\
&=: \sum_{k=0}^\infty \frac{\beta^k}{k!}\CC_{\vH}^{k}[\vA].
    \end{align}

Since the operator $\vA$ is single site, for any string $[\vH_{\gamma_k},\cdots, [\vH_{\gamma_{1}}, \vA ]]$ in $\CC^{k-1}_{\vH}[\vA]$, the outermost commutator $[\vH,\cdot ]$ has at most $\max( (k-1)d,d)\le kd$ Hamiltonian terms that may contribute, we have that
\begin{align}
\norm{\CC_{\vH}^{k}[\vA]}
\le k! (2d)^k
\end{align}
sum over the geometric series to conclude the proof.
\end{proof}

From the above, we can also extend to higher weight Paulis by expanding them into products of single-site Paulis. The bound grows exponentially with the weight but is independent of the global system size.
\begin{cor}[High weight]\label{cor:highweight}
In the setting of~\autoref{lem:convergence_imaginary}, for any Pauli string $\vS$ of weight $w$,
\begin{align}
    \norm{\e^{\beta \vH}\vS\e^{-\beta \vH}} \le \L(\frac{1}{1-2d\labs{\beta}}\R)^{w}.
\end{align}
\end{cor}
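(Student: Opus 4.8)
The plan is to reduce the conjugation of a weight-$w$ Pauli string to $w$ successive conjugations of single-site Paulis, exploiting that conjugation by $\e^{\beta\vH}$ is a multiplicative (algebra) homomorphism. Write $\vS = \vA^1\vA^2\cdots\vA^w$ as a product of $w$ commuting single-qubit Paulis, one on each site of $\operatorname{Supp}(\vS)$, each with $\norm{\vA^i}\le 1$. The first step is the algebraic identity
\begin{align}
\e^{\beta\vH}\vS\e^{-\beta\vH} = \prod_{i=1}^{w}\big(\e^{\beta\vH}\vA^i\e^{-\beta\vH}\big),
\end{align}
which holds because inserting $\e^{-\beta\vH}\e^{\beta\vH}=\vI$ between consecutive factors telescopes. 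The second step is submultiplicativity of the operator norm, $\norm{XY}\le\norm{X}\norm{Y}$, applied to this product of $w$ factors, giving
\begin{align}
\norm{\e^{\beta\vH}\vS\e^{-\beta\vH}} \le \prod_{i=1}^{w}\norm{\e^{\beta\vH}\vA^i\e^{-\beta\vH}}.
\end{align}

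The third step is to bound each single-site factor. Since each $\vA^i$ is a single-site operator with $\norm{\vA^i}\le 1$ and $\labs{\beta}<1/2d$, \autoref{lem:convergence_imaginary} applies verbatim to each one, yielding $\norm{\e^{\beta\vH}\vA^i\e^{-\beta\vH}}\le \frac{1}{1-2d\labs{\beta}}$. Substituting this uniform bound into the product over the $w$ factors gives
\begin{align}
\norm{\e^{\beta\vH}\vS\e^{-\beta\vH}} \le \Big(\frac{1}{1-2d\labs{\beta}}\Big)^{w},
\end{align}
which is the claimed estimate.

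There is essentially no obstacle here — the result is a one-line consequence of the homomorphism property of conjugation together with \autoref{lem:convergence_imaginary}. The only point requiring a sentence of care is the decomposition of $\vS$ into single-site Paulis: a Pauli string of weight $w$ is by definition a tensor product of $w$ nontrivial single-qubit Paulis on distinct sites, so this factorization is immediate and each factor genuinely satisfies the hypotheses of \autoref{lem:convergence_imaginary} (single-site support, unit norm). I would state the identity, cite submultiplicativity, invoke the previous lemma $w$ times, and conclude.
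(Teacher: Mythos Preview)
Your proof is correct and matches the paper's approach exactly: the paper simply states that one extends to higher-weight Paulis ``by expanding them into products of single-site Paulis,'' which is precisely the telescoping-conjugation plus submultiplicativity argument you wrote out.
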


Using the above, we bootstrap for an even better norm bound for the Operator Fourier Transform.
\begin{cor}[Norm decay for large energy difference]\label{cor:norm_decay}
For any $\beta ,\omega\in \BR$ and operator $\vA$, we have that
\begin{align}
    \norm{\hat{\vA}(\omega)} \le \frac{\e^{-\beta \omega + \sigma^2 \beta^2}}{\sqrt{{\sigma}\sqrt{2\pi}}} \norm{\e^{\beta \vH}\vA\e^{-\beta \vH}}.
\end{align}
\end{cor}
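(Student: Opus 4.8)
The statement is, up to re-indexing, the \emph{inverse} of \autoref{lem:bounds_imaginary_conjugation}: that lemma rewrites $\e^{\beta\vH}\hat{\vA}(\omega)\e^{-\beta\vH}$ as a scalar times $\hat{\vA}(\omega+2\sigma^2\beta)$, whereas here I want to bound $\norm{\hat{\vA}(\omega)}$ by a scalar times $\norm{\e^{\beta\vH}\vA\e^{-\beta\vH}}$. The plan is to run \autoref{lem:bounds_imaginary_conjugation} backwards.

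First I would set $\vB:=\e^{\beta\vH}\vA\e^{-\beta\vH}$ and observe that, since $\e^{\beta\vH}$ and $\e^{\ri\vH t}$ are both functions of $\vH$, the conjugation $\vO\mapsto \e^{\beta\vH}\vO\e^{-\beta\vH}$ commutes with Heisenberg evolution and hence passes through the integral defining the operator Fourier transform~\eqref{eq:OFT}; therefore $\e^{\beta\vH}\hat{\vA}(\omega')\e^{-\beta\vH}=\hat{\vB}(\omega')$ for every $\omega'$. (At the level of Bohr components this is just $\vB_\nu=\e^{\beta\nu}\vA_\nu$.)

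Next I would solve the identity of \autoref{lem:bounds_imaginary_conjugation} for $\hat{\vA}$ and shift the frequency by $-2\sigma^2\beta$: from $\e^{\beta\vH}\hat{\vA}(\omega)\e^{-\beta\vH}=\e^{\beta\omega}\e^{\sigma^2\beta^2}\hat{\vA}(\omega+2\sigma^2\beta)$, replacing $\omega$ by $\omega-2\sigma^2\beta$ gives
\[
\hat{\vA}(\omega)=\e^{-\beta\omega+\sigma^2\beta^2}\,\e^{\beta\vH}\hat{\vA}(\omega-2\sigma^2\beta)\e^{-\beta\vH}=\e^{-\beta\omega+\sigma^2\beta^2}\,\hat{\vB}(\omega-2\sigma^2\beta),
\]
where the last equality uses the previous step. (Equivalently, one can obtain this line directly by shifting the integration contour $t\mapsto t-\ri\beta$ in~\eqref{eq:OFT}, which is legitimate because the integrand is entire and Gaussian-damped in $t$; or by writing $\hat{\vA}(\omega)=\sum_\nu \vA_\nu\hat f(\omega-\nu)=\sum_\nu \e^{-\beta\nu}\vB_\nu\hat f(\omega-\nu)$ and completing the square in the Gaussian~\eqref{eq:fwft}, which converts $\e^{-\beta\nu}\hat f(\omega-\nu)$ into $\e^{-\beta\omega+\sigma^2\beta^2}\hat f((\omega-2\sigma^2\beta)-\nu)$.)

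Finally I would take operator norms and apply the crude triangle-inequality bound $\norm{\hat{\vB}(\omega')}\le \tfrac{1}{\sqrt{2\pi}}\int_{-\infty}^{\infty}\labs{f(t)}\,\rd t\cdot\norm{\vB}=\tfrac{1}{\sqrt{\sigma\sqrt{2\pi}}}\norm{\vB}$, exactly as at the end of the proof of \autoref{lem:bounds_imaginary_conjugation}; this yields $\norm{\hat{\vA}(\omega)}\le \tfrac{\e^{-\beta\omega+\sigma^2\beta^2}}{\sqrt{\sigma\sqrt{2\pi}}}\norm{\e^{\beta\vH}\vA\e^{-\beta\vH}}$, as claimed. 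There is no genuine obstacle; the only care needed is bookkeeping the sign and magnitude of the frequency shift and checking that completing the square (equivalently, shifting the contour) produces $+\sigma^2\beta^2$ in the exponent rather than $-\sigma^2\beta^2$ — this is precisely where the \emph{fixed} Gaussian width $\sigma$ of the operator Fourier transform absorbs the cost of the imaginary-time conjugation.
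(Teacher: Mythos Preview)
Your proof is correct and follows essentially the same route as the paper: set $\vB=\e^{\beta\vH}\vA\e^{-\beta\vH}$, use that the operator Fourier transform commutes with imaginary-time conjugation so that $\e^{\beta\vH}\hat{\vA}(\omega')\e^{-\beta\vH}=\hat{\vB}(\omega')$, and then invoke \autoref{lem:bounds_imaginary_conjugation}. The only cosmetic difference is that the paper writes $\hat{\vA}(\omega)=\e^{-\beta\vH}\hat{\vB}(\omega)\e^{\beta\vH}$ and applies the norm bound of \autoref{lem:bounds_imaginary_conjugation} directly to $\vB$ with the sign of $\beta$ flipped, whereas you first solve the identity of \autoref{lem:bounds_imaginary_conjugation} for $\hat{\vA}(\omega)$ (picking up the explicit frequency shift $\omega-2\sigma^2\beta$) and then bound $\norm{\hat{\vB}(\omega-2\sigma^2\beta)}$; these two presentations are algebraically identical.
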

\begin{proof} ``Borrow'' cancelling factors of $\e^{\beta \vH}$ on the left and right
    \begin{align}
\hat{\vA}(\omega)  &= \e^{-\beta \vH}\cdot (\e^{\beta \vH}\hat{\vA}(\omega)\e^{-\beta \vH}) \cdot\e^{\beta \vH}\\
&=\e^{-\beta \vH}\cdot( \widehat{[\e^{\beta \vH}\vA\e^{-\beta \vH}]}(\omega) )\cdot \e^{\beta \vH}\tag*{(For any $\vH$, operator FT commutes with imaginary time conjugation)}
\end{align}    
and apply~\autoref{lem:bounds_imaginary_conjugation} for $\vA'= \e^{\beta \vH}\vA\e^{-\beta \vH}$ to conclude the proof.
\end{proof}
This will allow us to truncate the Bohr frequencies with an exponentially small error.

\subsection{Controlling commutators within weighted norms}

The main goal here is to control the effect of taking commutators\textit{inside} the weighted norm. A direct H\:{o}lder's inequality can give a loose bound.
\begin{lem}[Loose bounds for high temperature]\label{lem:loose_AO}
For any operator $\vA,\vO$, and full rank $\vrho,$
\begin{align}
\lnormp{[\vA,\vO]}{\vrho} \le \L(\norm{\vrho^{1/4}\vA\vrho^{-1/4}}+\norm{\vrho^{-1/4}\vA\vrho^{1/4}}\R) \lnormp{\vO}{\vrho}.
\end{align}     
\end{lem}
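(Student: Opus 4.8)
The plan is to prove the claimed inequality by a direct application of H\"older's inequality for the KMS inner product, after a telescoping split of the commutator. First I would write out the definition of the weighted norm, $\lnormp{[\vA,\vO]}{\vrho}^2 = \tr\big[[\vA,\vO]^\dagger \vrho^{1/2}[\vA,\vO]\vrho^{1/2}\big]$, and rewrite this as $\lnormp{[\vA,\vO]}{\vrho} = \norm{\vrho^{1/4}[\vA,\vO]\vrho^{1/4}}_2$, using that for any operator $\vM$ one has $\norm{\vM}_{\vrho} = \norm{\vrho^{1/4}\vM\vrho^{1/4}}_2$ (the Hilbert--Schmidt / Schatten-$2$ norm). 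This is the natural move because the KMS inner product with weight $\vrho^{1/2}$ on each side symmetrizes nicely into a single $2$-norm with $\vrho^{1/4}$ conjugation.

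Next I would expand the commutator and insert resolutions of the identity $\vrho^{1/4}\vrho^{-1/4}$ to shift the Gibbs weights past $\vA$:
\begin{align}
\vrho^{1/4}[\vA,\vO]\vrho^{1/4} &= \vrho^{1/4}\vA\vO\vrho^{1/4} - \vrho^{1/4}\vO\vA\vrho^{1/4}\\
&= \big(\vrho^{1/4}\vA\vrho^{-1/4}\big)\big(\vrho^{1/4}\vO\vrho^{1/4}\big) - \big(\vrho^{1/4}\vO\vrho^{1/4}\big)\big(\vrho^{-1/4}\vA\vrho^{1/4}\big).
\end{align}
Then I would apply the triangle inequality for $\norm{\cdot}_2$ followed by the sub-multiplicativity bound $\norm{\vM\vN}_2 \le \norm{\vM}\,\norm{\vN}_2$ (and $\norm{\vN\vM}_2 \le \norm{\vN}_2\,\norm{\vM}$) to each of the two terms, pulling the operator-norm factors $\norm{\vrho^{1/4}\vA\vrho^{-1/4}}$ and $\norm{\vrho^{-1/4}\vA\vrho^{1/4}}$ out front and leaving $\norm{\vrho^{1/4}\vO\vrho^{1/4}}_2 = \lnormp{\vO}{\vrho}$. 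Collecting terms gives exactly the stated bound.

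There is essentially no obstacle here — this is a routine ``loose'' estimate, as the lemma name advertises; the only mild care needed is the identity $\norm{\vM}_{\vrho} = \norm{\vrho^{1/4}\vM\vrho^{1/4}}_2$ and making sure the two conjugations $\vrho^{\pm 1/4}\vA\vrho^{\mp 1/4}$ land on the correct sides of each term so that the remaining factor is precisely $\vrho^{1/4}\vO\vrho^{1/4}$ in both. The point of stating it is contrast: this bound is useless at low temperature because $\norm{\vrho^{1/4}\vA\vrho^{-1/4}}$ can diverge with $n$, which is exactly what the operator-Fourier-transform machinery of \autoref{sec.oftprop} is designed to circumvent; but at high temperature (within the Taylor radius, via \autoref{lem:convergence_imaginary}) it is perfectly adequate and avoids the extra exponent losses incurred by the sharper \autoref{lem:weight_norm_comm}.
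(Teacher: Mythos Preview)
Your proposal is correct and matches the paper's own proof essentially line for line: both use the identity $\norm{\vM}_{\vrho} = \norm{\vrho^{1/4}\vM\vrho^{1/4}}_2$, expand the commutator, insert $\vrho^{1/4}\vrho^{-1/4}$ to separate the $\vA$- and $\vO$-factors, and apply the triangle inequality together with the H\"older bound $\norm{\vM\vN}_2 \le \norm{\vM}\norm{\vN}_2$. The only cosmetic difference is ordering—the paper applies the triangle inequality at the level of $\norm{\cdot}_{\vrho}$ first and then converts each term to a Schatten-$2$ norm, whereas you convert first and then split—but the content is identical.
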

\begin{proof} Expand the commutator
    \begin{align}
        \lnormp{[\vA,\vO]}{\vrho} \le \lnormp{\vA\vO}{\vrho} + \lnormp{\vO\vA}{\vrho}
    \end{align}
    and rewrite the weighted norm in the Frobenius norm 
    \begin{align}
        \norm{\vA\vO}_{\vrho} = \norm{ \vrho^{1/4}\vA\vO\vrho^{1/4}}_2 =\norm{ \vrho^{1/4}\vA\vrho^{-1/4}\cdot\vrho^{1/4}\vO\vrho^{1/4}}_2&\le  \norm{ \vrho^{1/4}\vA\vrho^{-1/4}}\cdot\norm{\vrho^{1/4}\vO\vrho^{1/4}}_2 \\
        &= \norm{ \vrho^{1/4}\vA\vrho^{-1/4}}\norm{\vO}_{\vrho}.
    \end{align}
Repeat for $\norm{\vO\vA}_{\vrho}$ to conclude the proof. 
\end{proof}

\begin{rmk}
    If $\vA$ is Hermitian, then $\norm{\vrho^{1/4}\vA\vrho^{-1/4}}$ = $\norm{\vrho^{-1/4}\vA\vrho^{1/4}}$. Otherwise, they may differ.
\end{rmk}

Of course, the above naive bound may diverge at low temperatures due to the imaginary time conjugation $\norm{\vrho^{1/4}\vA\vrho^{-1/4}}$, which may grow poorly with the system size (especially when the Hamiltonian is noncommuting).
To give a convergent bound at low temperatures, the operator Fourier transforms will become very handy for regularizing divergences.

\begin{lem}[Bounds on multiplication]\label{lem:weight_norm_comm}
For any operators normalized by $\norm{\vO}\le 1$, $\norm{\vA}\le 1$, and any pair of inverse temperatures $\beta_0, \beta$ such that $\beta >4 \beta_0>0$,
\begin{align}
\norm{\vA\vO}_{\vrho_\beta},\norm{\vO\vA}_{\vrho_\beta}\lesssim 
    \Big(\frac{\e^{\sigma^2\beta^{'2}}}{\beta'{\sigma}}+\frac{\e^{\sigma^2\beta^{2}_0}}{\beta_0{\sigma}}\Big)\norm{\vO}_{\vrho_\beta}^{\frac{4\beta_0}{\beta}}\L(\norm{\vrho_{\beta_0}\vA\vrho_{\beta_0}^{-1}}+\norm{\vrho_{\beta_0}^{-1}\vA\vrho_{\beta_0}}\R) 
\end{align}
where $\beta' := \beta /4 -\beta_0.$ 
\end{lem}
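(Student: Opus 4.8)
The plan is to interpolate between the KMS-weighted norm and the operator norm by splitting $\vA$ into its operator Fourier transform pieces and truncating the Bohr frequencies at a well-chosen threshold. Concretely, I would start from the identity $\norm{\vA\vO}_{\vrho_\beta} = \norm{\vrho_\beta^{1/4}\vA\vO\vrho_\beta^{1/4}}_2$ and insert $\vrho_\beta^{-1/4}\vrho_\beta^{1/4}$ between $\vA$ and $\vO$, as in \autoref{lem:loose_AO}, so that the task reduces to bounding $\norm{\vrho_\beta^{1/4}\vA\vrho_\beta^{-1/4}}$ (and its mirror image) but \emph{now only against $\norm{\vO}_{\vrho_\beta}$ raised to a fractional power}. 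The point is that a naive bound here would be $\norm{\vrho_\beta^{1/4}\vA\vrho_\beta^{-1/4}}\cdot\norm{\vO}_{\vrho_\beta}$, which diverges, so instead I would not conjugate all of $\vA$ by the full $\vrho_\beta^{1/4}$; rather I would use \autoref{lem:sumoverenergies} to write $\vA = \frac{1}{\sqrt{2\sigma\sqrt{2\pi}}}\int \hat{\vA}(\omega)\,\rd\omega$ and treat each frequency slice separately.

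The key mechanism is the competition between two bounds on $\hat{\vA}(\omega)$. On one hand, conjugating a single frequency slice costs only $\e^{\beta\omega}$ up to the fixed Gaussian factor $\e^{\sigma^2\beta^2}/\sqrt{\sigma\sqrt{2\pi}}$ by \autoref{lem:bounds_imaginary_conjugation} — this is good for $\omega<0$ but bad for $\omega>0$. On the other hand, \autoref{cor:norm_decay} at inverse temperature $\beta_0$ gives $\norm{\hat{\vA}(\omega)}\le \frac{\e^{-\beta_0\omega+\sigma^2\beta_0^2}}{\sqrt{\sigma\sqrt{2\pi}}}\norm{\vrho_{\beta_0}\vA\vrho_{\beta_0}^{-1}}$, which is good for $\omega>0$. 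So I would split the frequency integral at some cutoff: for $\omega$ below the cutoff, bound $\norm{\vrho_\beta^{1/4}\hat{\vA}(\omega)\vO\vrho_\beta^{1/4}}_2$ by moving a $\vrho_\beta^{1/4}$ onto $\hat{\vA}(\omega)$ (incurring $\e^{\beta\omega/4}$ via \autoref{lem:bounds_imaginary_conjugation} with $\beta\to\beta/4$, i.e.\ with $\beta' = \beta/4-\beta_0$ after the regularization) times $\norm{\vO}_{\vrho_\beta}$; for $\omega$ above the cutoff, use the operator-norm bound $\norm{\hat{\vA}(\omega)}\cdot\norm{\vrho_\beta^{1/4}\vO\vrho_\beta^{1/4}}_2 = \norm{\hat{\vA}(\omega)}\cdot\norm{\vO}_{\vrho_\beta}$ but then further trade $\norm{\vO}_{\vrho_\beta}$ (which is $\le 1$ since $\norm{\vO}\le 1$) against the operator norm $\norm{\vO}\le 1$, so the high-frequency tail is controlled purely by the super-exponentially decaying $\e^{-\beta_0\omega}$ from \autoref{cor:norm_decay}. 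Optimizing the cutoff — which should scale like $\frac{1}{\beta}\log(1/\norm{\vO}_{\vrho_\beta})$ up to constants — is what produces the fractional power $\norm{\vO}_{\vrho_\beta}^{4\beta_0/\beta}$: the low-frequency part contributes $\e^{\beta\Omega/4}\norm{\vO}_{\vrho_\beta}$ and the high-frequency part contributes $\e^{-\beta_0\Omega}$, and balancing $\Omega$ against $\log(1/\norm{\vO}_{\vrho_\beta})$ gives the exponent $4\beta_0/\beta$ with the constraint $\beta>4\beta_0$ ensuring $\beta'>0$ so that the Gaussian prefactors $\e^{\sigma^2\beta'^2}/(\beta'\sigma)$ and $\e^{\sigma^2\beta_0^2}/(\beta_0\sigma)$ are finite. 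The $\vrho_{\beta_0}\vA\vrho_{\beta_0}^{-1}$ and $\vrho_{\beta_0}^{-1}\vA\vrho_{\beta_0}$ factors appear precisely because \autoref{cor:norm_decay} applied with both signs of $\beta_0$ (or to $\vA$ versus its adjoint, equivalently for the two orderings $\vA\vO$ and $\vO\vA$) brings in both conjugations.

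For the $\norm{\vO\vA}_{\vrho_\beta}$ case I would run the symmetric argument, moving $\vrho_\beta^{1/4}$ onto $\hat{\vA}(\omega)$ from the right, which flips the roles of $\omega>0$ and $\omega<0$ and hence swaps which of $\norm{\vrho_{\beta_0}\vA\vrho_{\beta_0}^{-1}}$, $\norm{\vrho_{\beta_0}^{-1}\vA\vrho_{\beta_0}}$ is the dominant one — so taking the sum of both conjugation norms, as in the statement, covers both orderings uniformly. The main obstacle I anticipate is not any single estimate but bookkeeping the constants cleanly through the frequency split: one has to make sure the Gaussian prefactor $\e^{\sigma^2\beta^2}$ from \autoref{lem:bounds_imaginary_conjugation} (which, crucially, is a \emph{fixed} constant independent of $\omega$ and $n$ — this is exactly the point emphasized around \autoref{fig:regularize}) does not get amplified when integrated over $\omega$, and that the $\omega$-integral of $\e^{-\beta_0\omega}$-type tails converges and contributes only the stated $\e^{\sigma^2\beta_0^2}/(\beta_0\sigma)$ factor; the absolute constant hidden in $\lesssim$ absorbs the $\sqrt{2\sigma\sqrt{2\pi}}$ normalization from \autoref{lem:sumoverenergies} and the numerical factors from the Gaussian integrals. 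A secondary subtlety is that $\hat{\vA}(\omega)$ is obtained from $\vA$ with $\norm{\vA}\le 1$, so one should double-check that the operator-FT slices inherit the right normalization, i.e.\ $\norm{\hat{\vA}(\omega)}\le 1/\sqrt{\sigma\sqrt{2\pi}}$, which is already recorded in the proof of \autoref{lem:bounds_imaginary_conjugation}.
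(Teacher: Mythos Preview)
Your proposal is correct and follows essentially the same route as the paper: decompose $\vA$ via the operator Fourier transform (\autoref{lem:sumoverenergies}), split at a symmetric cutoff $|\omega|\le\Omega$ versus $|\omega|>\Omega$, handle the low-frequency piece by \autoref{lem:loose_AO} together with \autoref{cor:norm_decay} at $\beta'=\beta/4-\beta_0$ (so the low-frequency cost is $\e^{\beta'\Omega}\norm{\vO}_{\vrho_\beta}$, not $\e^{\beta\Omega/4}\norm{\vO}_{\vrho_\beta}$ as you wrote), handle the tail by dropping to the operator norm $\norm{\vO}\le 1$ and applying \autoref{cor:norm_decay} at $\pm\beta_0$, and then set $\e^{\Omega}=\norm{\vO}_{\vrho_\beta}^{-4/\beta}$ to balance. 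The paper's proof is exactly this, including the observation that both signs of $\beta_0$ in the tail estimate produce the sum $\norm{\vrho_{\beta_0}\vA\vrho_{\beta_0}^{-1}}+\norm{\vrho_{\beta_0}^{-1}\vA\vrho_{\beta_0}}$.
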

\begin{rmk}
    As $\beta \rightarrow 4 \beta_0,$ we see that the exponent approaches unity $\frac{4\beta_0}{\beta}\rightarrow 1$ and almost recovers the loose bound (\autoref{lem:loose_AO}). 
\end{rmk}
For the RHS to be useful, one should take the largest possible $\beta_0$ according to the available bounds on the imaginary time conjugation~\autoref{lem:convergence_imaginary}.
\begin{proof}

Introducing a decomposition of operator $\vA$ by the Bohr frequencies and a tunable truncation parameter $\Omega>0$, we get
    \begin{align}
        c\lnormp{\vA\vO}{\vrho_\beta} &= \lnormp{\int_{-\infty}^{\infty}\hat{\vA}(\omega)\rd \omega\vO}{\vrho_\beta} \tag*{(\autoref{lem:sumoverenergies}, $c = \sqrt{2\sigma\sqrt{2\pi}}$)}\\
        &\le \lnormp{\int_{\labs{\omega}\le \Omega}\hat{\vA}(\omega)\rd \omega\vO}{\vrho_\beta } + \lnormp{\int_{\labs{\omega}> \Omega}\hat{\vA}(\omega)\rd \omega\vO}{\vrho_\beta}.
\end{align}

We bound the two terms using different bounds tailored to different regimes. 
For $\labs{\omega}\le \Omega$, we would like to express in terms of the weighted norm $\norm{\vO}_{\vrho_\beta}$
\begin{align}
        \lnormp{\int_{\labs{\omega}\le \Omega}\hat{\vA}(\omega)\rd \omega\vO}{\vrho_\beta} 
        &\le \lnormp{\vO}{\vrho_\beta}\int_{\labs{\omega} \le \Omega } \norm{\vrho_\beta^{1/4}\hat{\vA}(\omega)\vrho_\beta^{-1/4}} \rd \omega \tag*{(\autoref{lem:loose_AO})} \\
        & \lesssim \lnormp{\vO}{\vrho_\beta} \int_{-\Omega}^{\Omega} \frac{1}{\sqrt{\sigma}} \e^{-\beta'\omega+\sigma^2\beta^{'2}} \norm{\vrho_{\beta_0}\hat{\vA}(\omega)\vrho_{\beta_0}^{-1}}\rd \omega  \tag*{(\autoref{cor:norm_decay}, $\beta' := \beta/4 -\beta_0$)}\,.
\end{align}
 We see that we paid a price of $\e^{\beta' \Omega}$ for inverting the Gibbs state, so we cannot choose arbitrarily large $\Omega$.
For $\labs{\omega} \ge \Omega$, we cannot afford to invert the Gibbs state anymore, so we drop the weighted norm 
\begin{align}
    \lnormp{\int_{\labs{\omega}> \Omega}\hat{\vA}(\omega)\rd \omega\vO}{\vrho_\beta}
    &\lesssim \int_{\labs{\omega}> \Omega} \norm{\hat{\vA}(\omega)} \rd \omega\tag*{(Using \autoref{lem:operatornorm} and $\norm{\vO} \le 1$)}\\
    &\lesssim \int_{\labs{\omega}> \Omega} \frac{\e^{-\beta_0 \labs{\omega} + \sigma^2 \beta_0^2}}{{\sqrt{\sigma}}} \L(\norm{\e^{\beta_0 \vH}\vA\e^{-\beta_0 \vH}}+\norm{\e^{-\beta_0 \vH}\vA\e^{\beta_0 \vH}}\R)  \rd \omega \tag*{(\autoref{cor:norm_decay})}\\
    &\lesssim \e^{-\beta_0 \Omega} \frac{\e^{\sigma^2 \beta_0^2}}{\beta_0\sqrt{\sigma}}\L(\norm{\vrho_{\beta_0}^{-1}\vA\vrho_{\beta_0}}+\norm{\vrho_{\beta_0}\vA\vrho^{-1}_{\beta_0}}\R) \label{eq:greater_Omega}.
\end{align}
Balance both terms by setting 
\begin{align}
    \e^{\Omega} = \L(\frac{1}{\norm{\vO}_{\vrho_\beta}}\R)^{\frac{4}{\beta}},
\end{align}
and rearrange the factor $1/c$ to conclude the proof.
\end{proof}

\section{Dirichlet forms}\label{sec:Dirichlet}
A central object in the analysis of classical Markov chains is the Dirichlet form. In this section, we write down useful equivalent versions of the Dirichlet forms for the exactly detailed balanced Lindbladian; our argument would not be possible without the explicit forms provided in~\cite{rouze2024efficient}. 

\begin{lem}[{\cite[Lemma C.2]{rouze2024efficient}}]\label{lem:alphabar_Dirichlet}
Suppose the transition $\CT$ part of a $\vrho_{\beta}$-detailed balance Lindbladian at inverse temperature $\beta$ can be written as bilinear combination of $\vA_{\nu_1}, \vA_{\nu_2}^{\dagger}$
\begin{align}
    \CT [\cdot]:= \sum_{\nu_1,\nu_2} \alpha_{\nu_1,\nu_2} \vA^a_{\nu_1}[\cdot]\vA^{a\dagger}_{\nu_2}, \quad \text{and}\quad h_{\nu_1,\nu_2} = \alpha_{\nu_1,\nu_2}\e^{\beta(\nu_1+\nu_2)/4}.
\end{align}
Then, the Dirichlet form $\CE(\vX,\vY)$ for any two operator $\vX,\vY$ can be written as
\begin{align}
\CE(\vX,\vY):=-\braket{\vX,\CL^{\dagger}[\vY]}_{\vrho_\beta} = \sum_{a}\sum_{\nu_1,\nu_2\in B} \bar{\alpha}_{\nu_1,\nu_2} \tr\L[\sqrt{\vrho_{\beta}}[\vA^a_{\nu_1},\vX ]^{\dagger}\sqrt{\vrho_{\beta}}[\vA^a_{\nu_2},\vY ] \R] =:\sum_{a} \CE_a(\vX,\vY)
\end{align}
with
\begin{align}
    \bar{\alpha}_{\nu_1,\nu_2} := h_{\nu_1,\nu_2} \frac{1}{2\cosh(\beta(\nu_1-\nu_2)/4)} = \bar{\alpha}_{-\nu_1,-\nu_2}.
\end{align}
\end{lem}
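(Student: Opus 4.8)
The plan is to establish the identity by rewriting the Dirichlet form $\CE(\vX,\vY) = -\braket{\vX,\CL^\dagger[\vY]}_{\vrho_\beta}$ directly from the KMS-detailed-balanced structure of $\CL$ in~\eqref{eq:exact_DB_L}. Since $\CL$ is KMS-$\vrho_\beta$-detailed balanced, its dual $\CL^\dagger$ (w.r.t.\ Hilbert-Schmidt) satisfies the detailed balance symmetry, and only the dissipative (transition + decay) part contributes to the Dirichlet form — the coherent term $-\ri[\vB,\cdot]$ drops out because it is anti-symmetric in the KMS inner product. So first I would reduce to the dissipative generator and, for a single jump index $a$, write the transition part in the eigenoperator basis: using the operator Fourier transform~\eqref{eq:OFT}, $\hat{\vA}^a(\omega) = \sum_\nu \vA^a_\nu \hat f(\omega-\nu)$, and $\gamma(\omega)$ integrated against the Gaussians produces a bilinear kernel, giving the form $\CT[\cdot] = \sum_{\nu_1,\nu_2}\alpha_{\nu_1,\nu_2}\vA^a_{\nu_1}[\cdot]\vA^{a\dagger}_{\nu_2}$ as hypothesized; the decay part is $-\tfrac12\{\cdot,\cdot\}$ built from the same pieces.

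Next I would compute $-\braket{\vX,\CL^\dagger[\vY]}_{\vrho_\beta} = -\tr[\vX^\dagger\vrho_\beta^{1/2}\,\CL^\dagger[\vY]\,\vrho_\beta^{1/2}]$ by substituting the explicit transition/decay terms and pushing the $\vrho_\beta^{1/2}$ factors through using the eigenoperator identity $\vrho_\beta^{1/2}\vA^a_\nu = \e^{-\beta\nu/2}\vA^a_\nu\vrho_\beta^{1/2}$ (equivalently $\vA^a_\nu\vrho_\beta^{1/2} = \e^{\beta\nu/2}\vrho_\beta^{1/2}\vA^a_\nu$). The standard manipulation — which is exactly the GNS-to-commutator-square completion used throughout the quantum detailed balance literature — recombines the transition term and the two halves of the anticommutator into a single manifestly symmetric expression involving $[\vA^a_{\nu_1},\vX]^\dagger$ and $[\vA^a_{\nu_2},\vY]$ sandwiched by $\sqrt{\vrho_\beta}$. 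Tracking the exponential prefactors, the transition coefficient $\alpha_{\nu_1,\nu_2}$ gets multiplied by $\e^{\beta(\nu_1+\nu_2)/4}$ to form $h_{\nu_1,\nu_2}$, and the symmetrization against the anticommutator contributes the $\tfrac{1}{2\cosh(\beta(\nu_1-\nu_2)/4)}$ factor, yielding $\bar\alpha_{\nu_1,\nu_2} = h_{\nu_1,\nu_2}/(2\cosh(\beta(\nu_1-\nu_2)/4))$. The stated symmetry $\bar\alpha_{\nu_1,\nu_2} = \bar\alpha_{-\nu_1,-\nu_2}$ then follows from the detailed-balance symmetry of $h$ (or directly from $\gamma(\omega)$ satisfying $\gamma(-\omega) = \e^{-\beta\omega}\gamma(\omega)$ together with the Gaussian weights, which forces $\alpha_{\nu_1,\nu_2}\e^{\beta(\nu_1+\nu_2)/2}$ to be symmetric under sign flip).

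The main obstacle I anticipate is purely bookkeeping: correctly tracking the four places where $\vrho_\beta^{1/2}$ must be commuted past eigenoperators and collecting the resulting $\e^{\pm\beta\nu_i/4}$ factors so that the cross terms from expanding $[\vA^a_{\nu_1},\vX]^\dagger\cdots[\vA^a_{\nu_2},\vY]$ reproduce exactly the transition term, $-\vX^\dagger\vrho^{1/2}(\cdot)\vrho^{1/2}$-type terms, and the two anticommutator halves — with all signs consistent. There is no conceptual difficulty here since this is a known computation (and the lemma is cited from~\cite{rouze2024efficient}), so the cleanest route is to cite that reference for the derivation and simply verify that the hypotheses — that the transition part has the stated bilinear form with coefficients $\alpha_{\nu_1,\nu_2}$ — are met by~\eqref{eq:exact_DB_L} with the Metropolis or Gaussian weight, which is immediate from the operator Fourier transform expansion~\eqref{eq:OFT}.
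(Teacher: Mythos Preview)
The paper does not prove this lemma at all: it is stated with a citation to \cite[Lemma C.2]{rouze2024efficient} and no proof is given in the present paper. Your proposal correctly identifies this (you even suggest citing the reference as the cleanest route), and your sketch of the computation --- expanding the dissipative part in eigenoperators, commuting $\vrho_\beta^{1/2}$ using $\vrho_\beta^{1/2}\vA_\nu = \e^{-\beta\nu/2}\vA_\nu\vrho_\beta^{1/2}$, and completing the square into commutators --- is the standard derivation and is consistent with what the cited reference contains.
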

In this paper, we sometimes denote the Dirichlet form evaluated on the same two operators $(\vX,\vY)=(\vX,\vX)$ as 
\begin{align}
    \CE(\vX)\equiv\CE(\vX,\vX).
\end{align}
In the above, the statement holds for any detailed-balanced Gibbs sampler whose transition part takes the form of $\sum_{\nu_1,\nu_2} \alpha_{\nu_1,\nu_2} \vA^a_{\nu_1}[\cdot]\vA^{a\dagger}_{\nu_2}$~\cite{ding2024efficient,gilyen2024quantum}, but for concreteness, we recall the explicit forms for $h_{\nu_1,\nu_2}$
\begin{align}
h_{\nu_1,\nu_2}=\int_{\frac{\beta\sigma^2}{2}}^\infty\, \frac{1}{{2}}\sqrt{\frac{\beta}{\pi x}}\e^{-\frac{\beta(\nu_1+\nu_2)^2}{16 x}-\frac{\beta x}{4}}dx\,\e^{-\frac{(\nu_1-\nu_2)^2}{8\sigma^2}} \tag*{(the Metropolis weight~\eqref{eq:Metropolis})},
\end{align}
\begin{align*}
h^G_{\nu_1,\nu_2}=\frac{\sigma_\gamma}{{\sqrt{\sigma^2+\sigma_\gamma^2}}}\e^{-\frac{(\nu_1+\nu_2)^2+(2\omega_\gamma)^2}{8(\sigma^2+\sigma_\gamma^2)}}\e^{-\frac{(\nu_1-\nu_2)^2}{8\sigma^2}}
\tag*{(the Gaussian weight \eqref{eq:Gaussianweight})}.
\end{align*}
However, for our usage, instead of eigenoperators $\vA_{\nu}$'s, we need to further rewrite the Dirichlet form in terms of the more physical operator Fourier transform $\hat{\vA}(\omega)$'s. We begin with the auxiliary calculation for the Gaussian weight.

\begin{lem}[Dirichlet form with explicit operator Fourier transforms]\label{lem:Dirichlet_Gaussian}
    The Dirichlet form for the Lindbladian~\eqref{eq:exact_DB_L} with Gaussian weight $\gamma^{G}(\omega)$~\eqref{eq:Gaussianweight} can be rewritten as
\begin{align}
    \CE(\vX,\vY) &= \sum_a \int_{-\infty}^{\infty}\int_{-\infty}^{\infty} g(t) h^G(\omega)\tr\L[\sqrt{\vrho_{\beta}}[\hat{\vA}^a(\omega,t),\vX ]^{\dagger}\sqrt{\vrho_{\beta}}[\hat{\vA}^a(\omega,t),\vY ] \R]\rd t \rd \omega,
\end{align}
where $\hat{\vA}(\omega,t):=\e^{i\vH t}\hat{\vA}(\omega)\e^{-i\vH t}$, $h^G(\omega)=\e^{-{\frac{\beta\omega_\gamma}{4}}}\e^{-{\omega^2/2\sigma_\gamma^2}} \ge 0$ and $g(t) = \frac{1}{\beta\cosh(2\pi t/ \beta)}\ge 0.$
\end{lem}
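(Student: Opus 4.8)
The plan is to start from the eigenoperator form of the Dirichlet form already established in \autoref{lem:alphabar_Dirichlet}, specialized to the Gaussian transition weight, and then re-express the sum over Bohr frequencies $\nu_1,\nu_2$ as a double integral over a ``frequency'' variable $\omega$ and a ``time'' variable $t$. The key algebraic fact is that the operator Fourier transform $\hat{\vA}(\omega) = \sum_\nu \vA_\nu \hat f(\omega-\nu)$ packages the eigenoperators with Gaussian weights, and its Heisenberg conjugate $\hat{\vA}(\omega,t) = \e^{\ri \vH t}\hat{\vA}(\omega)\e^{-\ri \vH t} = \sum_\nu \e^{\ri\nu t}\vA_\nu \hat f(\omega-\nu)$ additionally carries the oscillatory phase $\e^{\ri\nu t}$. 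So a product like $\tr[\sqrt{\vrho_\beta}[\hat{\vA}^a(\omega,t),\vX]^\dagger \sqrt{\vrho_\beta}[\hat{\vA}^a(\omega,t),\vY]]$ expands into $\sum_{\nu_1,\nu_2} \e^{\ri(\nu_2-\nu_1)t}\hat f(\omega-\nu_1)\hat f(\omega-\nu_2)\,\tr[\sqrt{\vrho_\beta}[\vA_{\nu_1}^a,\vX]^\dagger\sqrt{\vrho_\beta}[\vA_{\nu_2}^a,\vY]]$. The goal is to choose $g(t), h^G(\omega)$ so that integrating this against $g(t)h^G(\omega)$ reproduces exactly the coefficient $\bar\alpha_{\nu_1,\nu_2}$.

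Concretely, I would carry out the following steps. First, compute $\int_{-\infty}^\infty g(t)\e^{\ri(\nu_2-\nu_1)t}\,\rd t$: with $g(t) = \frac{1}{\beta\cosh(2\pi t/\beta)}$, this is a standard Fourier transform of $\operatorname{sech}$, giving $\tfrac{1}{2}\operatorname{sech}(\beta(\nu_1-\nu_2)/4)$ — which is precisely the $\frac{1}{2\cosh(\beta(\nu_1-\nu_2)/4)}$ factor appearing in $\bar\alpha_{\nu_1,\nu_2}$. Second, compute $\int_{-\infty}^\infty h^G(\omega)\hat f(\omega-\nu_1)\hat f(\omega-\nu_2)\,\rd \omega$. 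Since $\hat f(\omega-\nu_j) \propto \exp(-(\omega-\nu_j)^2/4\sigma^2)$, the product of the two Gaussians is a single Gaussian in $\omega$ centered at $(\nu_1+\nu_2)/2$ times $\exp(-(\nu_1-\nu_2)^2/8\sigma^2)$; multiplying by $h^G(\omega) = \e^{-\beta\omega_\gamma/4}\e^{-\omega^2/2\sigma_\gamma^2}$ and doing the Gaussian integral in $\omega$ produces, after completing the square, a factor proportional to $\exp(-((\nu_1+\nu_2)^2 + (2\omega_\gamma)^2)/8(\sigma^2+\sigma_\gamma^2))\exp(-(\nu_1-\nu_2)^2/8\sigma^2)$ — matching the stated $h^G_{\nu_1,\nu_2}$ up to the normalization $\sigma_\gamma/\sqrt{\sigma^2+\sigma_\gamma^2}$, using the constraint $\beta(\sigma_\gamma^2+\sigma^2) = 2\omega_\gamma$ to convert the $\e^{-\beta\omega_\gamma/4}$ prefactor into the $\e^{\beta(\nu_1+\nu_2)/4}$ that relates $\alpha_{\nu_1,\nu_2}$ to $h_{\nu_1,\nu_2}$. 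Third, assemble: the double integral factorizes into (time integral)$\times$(frequency integral), which reproduces $\bar\alpha_{\nu_1,\nu_2} = h^G_{\nu_1,\nu_2}\cdot\frac{1}{2\cosh(\beta(\nu_1-\nu_2)/4)}$ term-by-term, and then invoke \autoref{lem:alphabar_Dirichlet}. I should also check the exchange of sum and integral is legitimate (finite-dimensional $\vH$, so only finitely many Bohr frequencies; $g,h^G$ are integrable against the Gaussian tails), and track the bookkeeping of the $\alpha$ vs. $\bar\alpha$ conversion together with the $\e^{\beta(\nu_1+\nu_2)/4}$ weight implicit in passing between $\vA_\nu[\cdot]\vA_\nu^\dagger$ and the KMS-conjugated form.

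The main obstacle I expect is purely a careful matching of Gaussian normalization constants and the exponential prefactors: the definitions of $f$, $\hat f$, $\gamma^G$, $\sigma_\gamma$, and the constraint $\beta(\sigma_\gamma^2+\sigma^2) = 2\omega_\gamma$ must all be threaded through so that the claimed clean forms $h^G(\omega) = \e^{-\beta\omega_\gamma/4}\e^{-\omega^2/2\sigma_\gamma^2}$ and $g(t) = \frac{1}{\beta\cosh(2\pi t/\beta)}$ emerge with no stray factors. In particular, the step where completing the square in $\omega$ produces a term $+\beta(\nu_1+\nu_2)/4$ in the exponent (needed to cancel/produce the $\e^{\beta(\nu_1+\nu_2)/4}$ in $h_{\nu_1,\nu_2} = \alpha_{\nu_1,\nu_2}\e^{\beta(\nu_1+\nu_2)/4}$) relies crucially on the variance constraint and is the place where a sign or factor-of-two slip would be easy to make. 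Everything else — the $\operatorname{sech}$ Fourier transform, positivity of $g$ and $h^G$, and the final invocation of \autoref{lem:alphabar_Dirichlet} — is routine.
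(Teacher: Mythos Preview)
Your approach is correct and essentially the same as the paper's: both start from \autoref{lem:alphabar_Dirichlet}, use the $\operatorname{sech}$ Fourier transform to produce the $\frac{1}{2\cosh(\beta(\nu_1-\nu_2)/4)}$ factor from the $t$-integral, and handle the $\omega$-integral by Gaussian algebra with the constraint $\beta(\sigma^2+\sigma_\gamma^2)=2\omega_\gamma$. The only execution difference is that the paper routes the $\omega$-computation through the imaginary-time shift formula (\autoref{lem:bounds_imaginary_conjugation}), writing $h^G_{\nu_1,\nu_2}\vA_{\nu_1}(\cdot)\vA_{\nu_2}^\dagger = \e^{\beta(\nu_1+\nu_2)/4}\alpha^G_{\nu_1,\nu_2}\vA_{\nu_1}(\cdot)\vA_{\nu_2}^\dagger = \int \gamma^G(\omega)\,\vrho_\beta^{-1/4}\hat\vA(\omega)\vrho_\beta^{1/4}(\cdot)\vrho_\beta^{1/4}\hat\vA(\omega)^\dagger\vrho_\beta^{-1/4}\,\rd\omega$ and then shifting $\omega\mapsto\omega+\sigma^2\beta/2$, whereas you compute $\int h^G(\omega)\hat f(\omega-\nu_1)\hat f(\omega-\nu_2)\,\rd\omega$ directly---both yield $h^G_{\nu_1,\nu_2}$ after the same completion of the square. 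One small correction: your sentence about the constraint ``converting $\e^{-\beta\omega_\gamma/4}$ into the $\e^{\beta(\nu_1+\nu_2)/4}$'' is misstated---the constraint converts $\e^{-\beta\omega_\gamma/4}$ into the $(2\omega_\gamma)^2$ term in the exponent of $h^G_{\nu_1,\nu_2}$, and no $\e^{\beta(\nu_1+\nu_2)/4}$ factor appears in your direct route since you are matching $\bar\alpha$ (not $\alpha$) directly.
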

When evaluated on the same operator $\CE(\vX,\vX)$, the nice feature of the above is that it is an integral of \textit{non-negative} summand; if the Dirichlet form is small, we must have that the integrand is also small.
\begin{proof}

By linearity, it suffices to prove for single jump $\vA$. Consider the time-domain expression for 
\begin{align}
    \frac{1}{2\cosh(\beta(\nu_1-\nu_2)/4)} = \int_{-\infty}^{\infty}g(t)\e^{-\ri (\nu_1-\nu_2) t} \rd t\quad \text{where} \quad g(t) = \frac{1}{\beta\cosh(2\pi t/ \beta)}\ge 0.
\end{align}
Then, we may rewrite the expression in~\autoref{lem:alphabar_Dirichlet} as
\begin{align}
    \CE(\vX,\vY) &= \int_{-\infty}^{\infty}\sum_{\nu_1,\nu_2\in B}g(t) h^G_{\nu_1,\nu_2} \tr\L[\sqrt{\vrho_{\beta}}[\vA_{\nu_1}\e^{\ri \nu_1t},\vX ]^{\dagger}\sqrt{\vrho_{\beta}}[\vA_{\nu_2}\e^{\ri \nu_2t},\vY ] \R]\rd t.
\end{align}
We rewrite the bilinear sums in terms of operator Fourier transforms: 

\begin{align}
\sum_{\nu_1,\nu_2\in B} h^G_{\nu_1,\nu_2}\vA_{\nu_1}(\cdot) \vA^\dagger_{\nu_2}
&=\sum_{\nu_1,\nu_2\in B}\e^{\frac{\beta(\nu_1+\nu_2)}{4}} \alpha^G_{\nu_1,\nu_2}\vA_{\nu_1}(\cdot) \vA^\dagger_{\nu_2}\\
&=\int_{-\infty}^\infty\, \gamma(\omega)\vrho_\beta^{-\frac{1}{4}}\hat{\vA}(\omega)\vrho_\beta^{\frac{1}{4}}(\cdot)\vrho_\beta^{\frac{1}{4}}\hat{\vA}(\omega)^\dagger\vrho_\beta^{-\frac{1}{4}}\, \rd\omega\\
&=\int_{-\infty}^{\infty}  \e^{- \frac{(\omega+\omega_{\gamma})^2}{2\sigma_\gamma^2}}\e^{\beta\vH/4}\hat{\vA}(\omega)\e^{-\beta\vH/4}(\cdot) \e^{-\beta\vH/4}\hat{\vA}(\omega)^{\dagger}\e^{\beta\vH/4}\rd\omega
\end{align}
\begin{align}
    (cont.)&\qquad\qquad\qquad=\e^{\beta^2\sigma^2/8}\int_{-\infty}^{\infty}  \e^{- \frac{(\omega+\omega_{\gamma})^2}{2\sigma_\gamma^2}+\beta\omega/2}\hat{\vA}(\omega+\sigma^2\beta/2)(\cdot) \hat{\vA}(\omega+\sigma^2\beta/2)^{\dagger}\rd\omega\tag*{(By~\autoref{lem:bounds_imaginary_conjugation})}\\
&\qquad\qquad\qquad= \e^{\beta^2\sigma^2/8+\beta^2\sigma^4/8\sigma_\gamma^2-\omega_{\gamma}^2/2\sigma_\gamma^2}\int_{-\infty}^{\infty}  \e^{- \frac{(\omega+\beta\sigma^2/2)^2}{2\sigma_\gamma^2}}\hat{\vA}(\omega+\sigma^2\beta/2)(\cdot) \hat{\vA}(\omega+\sigma^2\beta/2)^{\dagger}\rd\omega\\
&\qquad\qquad\qquad= \e^{-{\frac{\beta\omega_{\gamma}}{4}}}\cdot  \int_{-\infty}^{\infty}  \e^{- \frac{\omega^2}{2\sigma_\gamma^2}}\hat{\vA}(\omega)(\cdot) \hat{\vA}(\omega)^{\dagger}\rd\omega \tag*{(Shift of integration range)}.
\end{align}

The last two lines are a simplification due to the identity
$\beta (\sigma^2+\sigma_\gamma^2) :=2 \omega_\gamma$. The result simply follows after observing that $\vA_\nu \e^{i\nu t}$ is the Fourier coefficient of $\vA(t):=\e^{i\vH t}\vA\e^{-i\vH t}$ corresponding to the Bohr frequency $\nu$.

\end{proof}

The Gaussian calculation also informs the Metropolis case by taking  a weighted linear combination~\cite[Proposition II.4]{chen2023efficient} 
\begin{align}
      \exp\L(-\beta\max\left(\omega +\frac{\beta \sigma^2}{2},0\right)\R) &= \gamma(\omega) = \int_{\frac{\beta\sigma^2} {2}}^{\infty}g_x \gamma^G_{x}(\omega) \rd x \\
      \quad \text{where} \quad (\omega_{\gamma}(x),\sigma_{\gamma}(x)) &= (x,\sqrt{2x/\beta - \sigma^2})\quad \text{and}\quad g_x:= \frac{1}{\sqrt{2\pi}}\frac{1}{\sigma_{\gamma}}\label{eq:omega_sigma_g}.
\end{align}
That is, the weight is inversely proportional to the width $\sigma_{\gamma}$ of the transition weight. Thus,
\begin{align}
     \sum_{\nu_1,\nu_2\in B} h_{\nu_1,\nu_2}\vA_{\nu_1}(\cdot) \vA^\dagger_{\nu_2} 
    &= \int_{\frac{\beta\sigma^2}{2}}^{\infty} g_x \e^{-\frac{\omega_{\gamma}^2}{2(\sigma_{\gamma}^2+\sigma^2)}}\L( \int_{-\infty}^{\infty}  \e^{- \frac{\omega^2}{2\sigma_\gamma^2}}\hat{\vA}(\omega)\cdot \hat{\vA}(\omega)^{\dagger}\rd\omega\R) \rd x\\
    &= \int_{-\infty}^{\infty} \undersetbrace{=:h(\omega)}{\L(\int_{\frac{\beta\sigma^2}{2}}^{\infty} g_x \e^{-\frac{\omega_{\gamma}^2}{2(\sigma_{\gamma}^2+\sigma^2)}}\e^{- \frac{\omega^2}{2\sigma_\gamma^2}}  \rd x \R)} \hat{\vA}(\omega)\cdot \hat{\vA}(\omega)^{\dagger}\rd\omega.
\end{align}
In the next Lemma, we compute the function $h(\omega)$ explicitly. Later, we will use the fact that the function decays exponentially with $\omega$.
\begin{lem}[Dirichlet form with the Metropolis weight] \label{lem:Dirichlet_Gaussianmetropolis} The Dirichlet form for the Metropolis weight is the same as Gaussian (\autoref{lem:Dirichlet_Gaussian}) up to replacing the function $h^G(\omega)$ by   
    \begin{align}
        h(\omega): = \e^{-\frac{\sigma^2\beta^2}{8}} \e^{-\labs{\omega}\beta/2}\ge 0.
    \end{align}
\end{lem}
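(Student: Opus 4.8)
The plan is to compute the function
\begin{align*}
h(\omega) = \int_{\frac{\beta\sigma^2}{2}}^{\infty} g_x\, \e^{-\frac{\omega_{\gamma}^2}{2(\sigma_{\gamma}^2+\sigma^2)}}\, \e^{- \frac{\omega^2}{2\sigma_\gamma^2}}\, \rd x
\end{align*}
in closed form, using the parametrization $(\omega_\gamma,\sigma_\gamma) = (x,\sqrt{2x/\beta-\sigma^2})$ and $g_x = \frac{1}{\sqrt{2\pi}\,\sigma_\gamma}$ from~\eqref{eq:omega_sigma_g}, and to identify the result with $\e^{-\sigma^2\beta^2/8}\,\e^{-|\omega|\beta/2}$. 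The first step is to substitute these choices. Using $\sigma_\gamma^2+\sigma^2 = 2x/\beta$, the Gaussian prefactor becomes $\e^{-\omega_\gamma^2/(2(\sigma_\gamma^2+\sigma^2))} = \e^{-x^2/(4x/\beta)} = \e^{-\beta x/4}$, which is exactly the exponential weight appearing in the displayed formula for $h_{\nu_1,\nu_2}$ with the Metropolis weight. Likewise $\e^{-\omega^2/2\sigma_\gamma^2} = \e^{-\omega^2/(2(2x/\beta-\sigma^2))} = \e^{-\beta\omega^2/(4x-2\beta\sigma^2)}$, and $g_x = \frac{1}{\sqrt{2\pi}}\,(2x/\beta-\sigma^2)^{-1/2}$. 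So the integral becomes
\begin{align*}
h(\omega) = \frac{1}{\sqrt{2\pi}} \int_{\frac{\beta\sigma^2}{2}}^{\infty} \Bigl(\tfrac{2x}{\beta}-\sigma^2\Bigr)^{-1/2} \e^{-\frac{\beta x}{4}} \, \e^{-\frac{\beta\omega^2}{4x-2\beta\sigma^2}}\, \rd x.
\end{align*}

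Next I would make the shift $u = x - \frac{\beta\sigma^2}{2}$, so that $u\in[0,\infty)$ and $\frac{2x}{\beta}-\sigma^2 = \frac{2u}{\beta}$, while $\e^{-\beta x/4} = \e^{-\beta^2\sigma^2/8}\,\e^{-\beta u/4}$ and $4x - 2\beta\sigma^2 = 4u$. This pulls out the constant $\e^{-\beta^2\sigma^2/8}$ — matching the claimed prefactor — and leaves
\begin{align*}
h(\omega) = \frac{\e^{-\beta^2\sigma^2/8}}{\sqrt{2\pi}} \int_{0}^{\infty} \Bigl(\tfrac{2u}{\beta}\Bigr)^{-1/2} \e^{-\frac{\beta u}{4}-\frac{\beta\omega^2}{4u}}\, \rd u = \frac{\e^{-\beta^2\sigma^2/8}}{\sqrt{2\pi}}\sqrt{\tfrac{\beta}{2}} \int_{0}^{\infty} u^{-1/2}\, \e^{-\frac{\beta}{4}\bigl(u + \frac{\omega^2}{u}\bigr)}\, \rd u.
\end{align*}
The remaining integral is a standard one: substituting $u = \tfrac{4v}{\beta}$ reduces it to $\int_0^\infty v^{-1/2}\e^{-v - \frac{\beta^2\omega^2}{16 v}}\rd v$, which is the integral representation of a modified Bessel function $K_{1/2}$, and $K_{1/2}(z) = \sqrt{\pi/2z}\,\e^{-z}$. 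Concretely, the Gaussian-type identity $\int_0^\infty u^{-1/2}\e^{-a u - b/u}\rd u = \sqrt{\pi/a}\,\e^{-2\sqrt{ab}}$ (valid for $a,b>0$, and extending to $b=0$ i.e.\ $\omega=0$ by direct evaluation) applied with $a = b\cdot$-analogues $a = \beta/4$, $b = \beta\omega^2/4$ gives $2\sqrt{ab} = \beta|\omega|/2$ and $\sqrt{\pi/a} = 2\sqrt{\pi/\beta}$.

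Putting the pieces together, $h(\omega) = \frac{\e^{-\beta^2\sigma^2/8}}{\sqrt{2\pi}}\sqrt{\tfrac{\beta}{2}}\cdot 2\sqrt{\pi/\beta}\,\e^{-\beta|\omega|/2} = \e^{-\beta^2\sigma^2/8}\,\e^{-\beta|\omega|/2}$, which is exactly the claimed expression; nonnegativity is manifest. Finally, since all the other ingredients of the Dirichlet form (the time-kernel $g(t)$, the passage from eigenoperators $\vA_\nu\e^{\ri\nu t}$ to $\hat{\vA}(\omega,t)$, and the conjugation manipulations via~\autoref{lem:bounds_imaginary_conjugation}) are identical to the Gaussian computation of~\autoref{lem:Dirichlet_Gaussian} and only depend on $h$ through the combination $\int h(\omega)\,\hat{\vA}(\omega)(\cdot)\hat{\vA}(\omega)^\dagger\rd\omega$, the statement follows by substituting this new $h(\omega)$ in place of $h^G(\omega)$. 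The only mild subtlety — the main thing to be careful about — is justifying the interchange of the $x$-integral (defining the Metropolis weight as a mixture of Gaussians) with the $\omega$-integral and the operator expressions; this is handled by absolute convergence, since $h(\omega)$ is finite and integrable and $\|\hat{\vA}(\omega)\|$ is uniformly bounded, so Fubini applies.
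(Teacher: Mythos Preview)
Your proof is correct and follows essentially the same route as the paper: both substitute the Metropolis parametrization, pull out the constant $\e^{-\sigma^2\beta^2/8}$ via a shift of the integration variable, and evaluate the remaining integral using the standard identity $\int_0^\infty \e^{-s^2-a^2/s^2}\rd s = \tfrac{\sqrt{\pi}}{2}\e^{-2|a|}$ (equivalently, your $K_{1/2}$ formula $\int_0^\infty u^{-1/2}\e^{-au-b/u}\rd u = \sqrt{\pi/a}\,\e^{-2\sqrt{ab}}$). The only cosmetic difference is that the paper uses the combined substitution $y=\beta x/4-\sigma^2\beta^2/8$ followed by $s=\sqrt{y}$, whereas you shift first to $u=x-\beta\sigma^2/2$ and then invoke the Bessel form directly.
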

\begin{proof}
    \begin{align}
    \int_{\frac{\beta\sigma^2}{2}}^{\infty} g_x \exp\L( -\frac{\omega_{\gamma}^2}{2(\sigma_{\gamma}^2+\sigma^2)} - \frac{\omega^2}{2\sigma_\gamma^2} \R)\rd x
    & = \int_{\frac{\beta\sigma^2}{2}}^{\infty} \frac{1}{\sqrt{2\pi} \sqrt{\frac{2x}{\beta}-\sigma^2}} \exp\L(-\frac{\beta x}{4} - \frac{\omega^2}{2(2x/\beta-\sigma^2)}\R) 
 \rd x\tag*{(Simplify)}\\
    & = \int_{0}^{\infty} \frac{1}{\sqrt{\pi} \sqrt{y}} \exp\L(-y - \frac{\sigma^2\beta^2}{8} -\frac{\omega^2\beta^2}{16y} \R) \rd y \tag*{(Let $y:=\beta x/4 - \sigma^2\beta^2/8$)}\\
    &=  \frac{2\e^{-\frac{\sigma^2\beta^2}{8}}}{\sqrt{\pi}} \int_{0}^{\infty} \exp\L(-s^2 - \frac{\omega^2\beta^2}{16s^2}\R) \rd s \tag*{(Let $s = \sqrt{y}$)}\\
    & = \e^{-\frac{\sigma^2\beta^2}{8}} \cdot \e^{-\labs{\omega}\beta/2}\tag*{(Since $\int_{0}^{\infty} \e^{-s^2-a^2/s^2} \rd s = \frac{\sqrt{\pi}}{2}\e^{-2\labs{a}}$)},
\end{align}
 as advertised.
\end{proof}

\subsection{Small Dirichlet form implies small commutators}
Here, we exploit the ``integral of squares'' structure of the Dirichlet form to control the commutator. Technically, the Dirichlet forms only tells us about $[\hat{\vA}^a(\omega,t),\vX ]$ for almost all $t,\omega,$ so we also need some continuity argument to control $[\hat{\vA}^a(\omega),\vX ]=[\hat{\vA}^a(\omega,0),\vX ]$ by sending $t\rightarrow 0$.

\begin{lem}[Bounding commutators by Dirichlet forms]\label{lem:commutator_Dirichlet}
For operators $\vA, \vO$ normalized by $\norm{\vA}, \norm{\vO}\le 1$, and any $\beta, \beta_0 >0$, 
\begin{align}
    \lnorm{[\vA,\vO]}_{\vrho_\beta} &\lesssim d^2\labs{A}\L(\frac{\e^{\sigma^2 \beta_0^2}}{\beta_0\sigma}+ \frac{\e^{\sigma^2\beta^2/16}}{\sqrt{g(1)\beta\sigma}} \R)^{\frac{\beta+4\beta_0}{\beta+5\beta_0}} \L(\norm{\vrho_{\beta_0}\vA\vrho_{\beta_0}^{-1}}+\norm{\vrho_{\beta_0}^{-1}\vA\vrho_{\beta_0}}\R)^{\frac{\beta}{\beta+5\beta_0}} \CE(\vO) ^{\frac{2\beta_0}{\beta+5\beta_0}}
\end{align}    
where $\labs{A}$ is the size of the support of $\vA$, $d$ is the interaction degree of the Hamiltonian $\vH$, and $\CE$ is the Dirichlet form associated with Linbladian~\eqref{eq:exact_DB_L} with Metropolis weight~\eqref{eq:Metropolis} and unique jump $\vA$ at the inverse temperature $\beta$.  
\end{lem}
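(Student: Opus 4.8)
The plan is to transfer information from the static quantity $\lnorm{[\vA,\vO]}_{\vrho_\beta}$ to the dynamic Dirichlet form $\CE(\vO)$ through the operator Fourier transform, exploiting the manifestly PSD commutator-square representation of $\CE$ from~\autoref{lem:Dirichlet_Gaussianmetropolis}, at the cost of two tunable parameters: a Bohr-frequency cutoff $\Omega$ and a short-time averaging window $\delta_0\in(0,1]$. First, conjugating~\autoref{lem:sumoverenergies} by $\e^{\ri\vH t}$ gives $\e^{\ri\vH t}\vA\e^{-\ri\vH t}=\frac1{\sqrt{2\sigma\sqrt{2\pi}}}\int_{\BR}\hat{\vA}(\omega,t)\,\rd\omega$ with $\hat{\vA}(\omega,t):=\e^{\ri\vH t}\hat{\vA}(\omega)\e^{-\ri\vH t}$. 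Averaging over $\labs{t}\le\delta_0$ and using the Duhamel bound $\lnorm{\vA-\e^{\ri\vH t}\vA\e^{-\ri\vH t}}\le\labs{t}\lnorm{[\vH,\vA]}$ together with the elementary counting estimate $\lnorm{[\vH,\vA]}\lesssim d\labs{A}$, one writes $\vA=\frac1{\sqrt{2\sigma\sqrt{2\pi}}}\,\frac1{2\delta_0}\int_{-\delta_0}^{\delta_0}\int_{\BR}\hat{\vA}(\omega,t)\,\rd\omega\,\rd t+\vE$ with $\lnorm{\vE}\lesssim\delta_0\,d\labs{A}$; taking the commutator with $\vO$ and using $\lnorm{\cdot}_{\vrho_\beta}\le\lnorm{\cdot}$ (\autoref{lem:operatornorm}) on the $\vE$-part,
\begin{align*}
\lnorm{[\vA,\vO]}_{\vrho_\beta}\;\lesssim\; \frac1{\sqrt\sigma}\,\frac1{2\delta_0}\int_{-\delta_0}^{\delta_0}\!\!\int_{\BR}\lnorm{[\hat{\vA}(\omega,t),\vO]}_{\vrho_\beta}\,\rd\omega\,\rd t\;+\;\delta_0\,d\labs{A}.
\end{align*}

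Next, I split the $\omega$-integral at $\labs{\omega}=\Omega$. For $\labs{\omega}>\Omega$ I discard the weighted norm, use unitarity of $\e^{\ri\vH t}$, and apply~\autoref{cor:norm_decay} with $\pm\beta_0$ to get $\lnorm{\hat{\vA}(\omega)}\le\frac{\e^{-\beta_0\labs{\omega}+\sigma^2\beta_0^2}}{\sqrt{\sigma\sqrt{2\pi}}}\,M$ where $M:=\lnorm{\vrho_{\beta_0}\vA\vrho_{\beta_0}^{-1}}+\lnorm{\vrho_{\beta_0}^{-1}\vA\vrho_{\beta_0}}$; integrating $\e^{-\beta_0\labs{\omega}}$ over $\labs{\omega}>\Omega$ this tail contributes $\lesssim\frac{\e^{\sigma^2\beta_0^2}}{\sigma\beta_0}\,M\,\e^{-\beta_0\Omega}$. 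For $\labs{\omega}\le\Omega$ I bring in the explicit Dirichlet form of the single jump $\vA$ (\autoref{lem:Dirichlet_Gaussianmetropolis}), $\CE(\vO)=\int_{\BR}\int_{\BR}g(t)h(\omega)\lnorm{[\hat{\vA}(\omega,t),\vO]}_{\vrho_\beta}^2\,\rd t\,\rd\omega$, in which $g\ge0$ is even and decreasing in $\labs{t}$ and $h(\omega)=\e^{-\sigma^2\beta^2/8}\e^{-\labs{\omega}\beta/2}\ge0$. Restricting the domain to the box $\labs{t}\le\delta_0$, $\labs{\omega}\le\Omega$ and lower-bounding the weights by their infimum there ($g(\delta_0)\ge g(1)$ since $\delta_0\le1$, and $h(\omega)\ge\e^{-\sigma^2\beta^2/8-\Omega\beta/2}$) gives $\int_{\labs{\omega}\le\Omega}\int_{-\delta_0}^{\delta_0}\lnorm{[\hat{\vA}(\omega,t),\vO]}_{\vrho_\beta}^2\le\frac{\e^{\sigma^2\beta^2/8}\e^{\Omega\beta/2}}{g(1)}\CE(\vO)$, and Cauchy--Schwarz over that box converts this into a bound on the $L^1$-average, so the bulk contributes $\lesssim\frac1{\sqrt\sigma}\sqrt{\tfrac{\Omega}{\delta_0}}\cdot\frac{\e^{\sigma^2\beta^2/16}\e^{\Omega\beta/4}}{\sqrt{g(1)}}\,\CE(\vO)^{1/2}$.

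Collecting the pieces yields $\lnorm{[\vA,\vO]}_{\vrho_\beta}\lesssim (\mathrm{I})(\delta_0,\Omega)+(\mathrm{II})(\Omega)+(\mathrm{III})(\delta_0)$ with $(\mathrm{I})\propto \delta_0^{-1/2}\e^{\Omega\beta/4}\CE(\vO)^{1/2}$, $(\mathrm{II})\propto \e^{-\beta_0\Omega}M$, $(\mathrm{III})\propto \delta_0\, d\labs{A}$. I first choose $\delta_0\le1$ to balance $(\mathrm{I})$ against $(\mathrm{III})$ (if the resulting value exceeds $1$ the trivial bound $\lnorm{[\vA,\vO]}_{\vrho_\beta}\le 2$ already suffices), fusing them into a single term $\propto (d\labs{A})^{1/3}\big(\e^{\sigma^2\beta^2/16}\e^{\Omega\beta/4}/\sqrt{\sigma g(1)}\big)^{2/3}\CE(\vO)^{1/3}$; I then choose $\Omega$ to balance this against $(\mathrm{II})$, which—since one side grows like $\e^{\Theta(\Omega\beta)}$ and the other decays like $\e^{-\beta_0\Omega}$—takes $\Omega=\Theta\!\big(\tfrac1{\beta+\beta_0}\log(\cdots)\big)$ and produces a weighted-geometric-mean bound in which $\CE(\vO)$ carries exponent $\tfrac{2\beta_0}{\beta+5\beta_0}$, $M$ carries $\tfrac{\beta}{\beta+5\beta_0}$, and the combined prefactor $\tfrac{\e^{\sigma^2\beta_0^2}}{\beta_0\sigma}+\tfrac{\e^{\sigma^2\beta^2/16}}{\sqrt{g(1)\beta\sigma}}$ carries $\tfrac{\beta+4\beta_0}{\beta+5\beta_0}$; the leftover $d\labs{A}$ and the now-logarithmic $\Omega$ powers are crudely absorbed into the stated $d^2\labs{A}$ prefactor.

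The delicate point—the main obstacle—is the passage from the Dirichlet variables back to $\vA$ itself: $\CE(\vO)$ only controls a $g(t)h(\omega)$-weighted time-average of $\lnorm{[\hat{\vA}(\omega,t),\vO]}_{\vrho_\beta}^2$ and never its value at $t=0$, so one is forced to average $\vA$ over a small time window and pay a Lieb--Robinson-type error linear in $\delta_0$, and it is precisely the resulting three-way trade-off between this error, the frequency-tail error $\e^{-\beta_0\Omega}$, and the $\delta_0^{-1/2}\e^{\Omega\beta/4}$ growth of the Dirichlet piece that is responsible for the non-trivial interpolation exponents. Keeping the imaginary-time-conjugation prefactors ($\e^{\sigma^2\beta^2}$ from~\autoref{lem:bounds_imaginary_conjugation}, $\e^{\sigma^2\beta_0^2}$) and the weight $g(1)^{-1}$ honest through both optimizations is the remaining piece of bookkeeping.
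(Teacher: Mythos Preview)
Your strategy coincides with the paper's---time-average over a short symmetric window, split the operator Fourier transform at a cutoff $\Omega$, bound the low-frequency bulk by Cauchy--Schwarz against the Dirichlet integrand, and then optimize the two free parameters---but one step is too crude and the exponents you claim do not actually follow from it. You estimate the time-averaging error by the first-order Duhamel bound $\lnorm{\vE}\lesssim\delta_0\,d\labs{A}$; however, your window $[-\delta_0,\delta_0]$ is symmetric, so the linear term $\frac{1}{2\delta_0}\int_{-\delta_0}^{\delta_0}(-\ri t)[\vH,\vA]\,\rd t$ vanishes by oddness and the true error is second order:
\begin{align*}
\lnorm{\frac{1}{2\delta_0}\int_{-\delta_0}^{\delta_0}\big(\vA-\vA(t)\big)\,\rd t}\;\lesssim\;\delta_0^{2}\,\lnorm{[\vH,[\vH,\vA]]}\;\lesssim\;\delta_0^{2}\,d^{2}\labs{A}.
\end{align*}
This is precisely the paper's~\eqref{eq:eps3}, and it is where the $d^2$ in the prefactor actually comes from (not from ``absorbing logarithmic $\Omega$ factors''). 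With your first-order term $(\mathrm{III})\propto\delta_0$, balancing against $(\mathrm{I})\propto\delta_0^{-1/2}\e^{\Omega\beta/4}\CE^{1/2}$ and then against $(\mathrm{II})\propto\e^{-\beta_0\Omega}M$ yields exponent $\tfrac{2\beta_0}{\beta+6\beta_0}$ on $\CE(\vO)$ and $\tfrac{\beta}{\beta+6\beta_0}$ on $M$, not the stated $\tfrac{2\beta_0}{\beta+5\beta_0}$ and $\tfrac{\beta}{\beta+5\beta_0}$; only with the quadratic error $(\mathrm{III})\propto\delta_0^2$ does the three-way optimization produce the lemma's exponents.

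A smaller bookkeeping point: to recover the clean $\sqrt{g(1)\beta\sigma}$ in the prefactor, evaluate $\int_{\labs{\omega}\le\Omega}h(\omega)^{-1}\rd\omega=\e^{\sigma^2\beta^2/8}\,\tfrac{4}{\beta}\big(\e^{\Omega\beta/2}-1\big)$ rather than bounding $h$ below by $h(\Omega)$ and multiplying by the box volume $2\Omega$; the latter leaves a stray $\sqrt{\Omega}$ factor with no home in the stated bound.
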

\begin{proof}
 The strategy is to rewrite the commutator $[\vA,\vO]$ in terms of $[\vA(\omega,t),\vO]$ to relate to the Dirichlet form. To ease the notation, we write $\vrho\equiv \vrho_\beta$.

\textbf{STEP 1: extend to finite time interval.}
Since the time integral in the Dirichlet form at $t=0$ has measure zero, we control a closely related quantity $\vA(t):=\e^{i\vH t}\vA \e^{-i\vH t}$ for small tunable $\labs{t}\le \epsilon$
\begin{align}
    2\epsilon \lnorm{[\vA,\vO]}_{\vrho} =\lnorm{\int_{-\epsilon}^{\epsilon}[\vA,\vO]\rd t}_{\vrho} &\le \lnorm{\int_{-\epsilon}^{\epsilon}[\vA-\vA(t),\vO]\rd t}_{\vrho} + \lnorm{\int_{-\epsilon}^{\epsilon}[\vA(t),\vO]\rd t}_{\vrho}.
\end{align}
The first term can be bounded by
\begin{align}
    \lnormp{ \int_{-\epsilon}^{\epsilon}(\vA-\vA(t))\rd t}{\vrho} = \lnormp{\int_{-\epsilon}^{\epsilon}\L(\int_{0}^{t} \int_{0}^{t_1}[\vH,[\vH,\vA]](t_2)\R)\rd t_2\rd t_1\rd t}{\vrho} \le \frac{\epsilon^3}{3} \norm{[\vH,[\vH,\vA]]} \lesssim \epsilon^3 d^2 \labs{A}\label{eq:eps3}
\end{align}
where the first-order Taylor series vanishes due to the symmetry of the integral, and the last inequality uses the degree of the interaction graph to bound the commutator. 

\textbf{STEP 2: truncate the frequencies.}
Next, we move on to control the second term by splitting the integral (\autoref{lem:sumoverenergies}, $c = \sqrt{2\sigma\sqrt{2\pi}}$)
\begin{align}
         c\lnorm{\int_{-\epsilon}^{\epsilon}[\vA(t),\vO]\rd t}_{\vrho} &= \lnorm{\int_{-\epsilon}^{\epsilon} \int_{-\infty}^{\infty}[\hat{\vA}(\omega,t),\vO]\rd \omega\rd t}_{\vrho}\\
        &\le \lnorm{\int_{-\epsilon}^{\epsilon} \int_{\labs{\omega}\le \Omega}[\hat{\vA}(\omega,t),\vO]\rd \omega \rd t}_{\vrho} + \lnorm{\int_{-\epsilon}^{\epsilon}\int_{\labs{\omega}> \Omega}[\hat{\vA}(\omega,t),\vO] \rd \omega\rd t}_{\vrho}.
    \end{align}
We use the Cauchy-Schwarz inequality to bound the first term by the Dirichlet form
\begin{align}
    \lnorm{\int_{-\epsilon}^{\epsilon} \int_{\labs{\omega}\le \Omega}[\hat{\vA}(\omega,t),\vO]\rd \omega \rd t}_{\vrho} &\le \int_{-\epsilon}^{\epsilon} \int_{\labs{\omega}\le \Omega} \lnorm{[\hat{\vA}(\omega,t),\vO]}_{\vrho} \rd \omega \rd t\\
    &\le \int_{-\infty}^{\infty} \int_{-\infty}^{\infty} \frac{\indicator(\labs{t}\le \epsilon, \labs{\omega}\le \Omega)}{\sqrt{g(t)h(\omega)}} \cdot \sqrt{g(t)h(\omega)} \lnorm{[\hat{\vA}(\omega,t),\vO]}_{\vrho} \rd \omega \rd t\\
    &\le  \sqrt{\int_{-\infty}^{\infty} \int_{-\infty}^{\infty} \frac{\indicator(\labs{t}\le \epsilon, \labs{\omega}\le \Omega)}{g(t)h(\omega)} \rd t\rd \omega} \cdot \sqrt{\CE(\vO)},\tag*{(Cauchy-Schwarz)}\\
    &\le \frac{2\sqrt{2\epsilon}}{\sqrt{g(\epsilon)\beta}} \e^{\sigma^2\beta^2/16} \e^{\Omega\beta/4} \cdot \sqrt{\CE(\vO)}.\label{eq:sqrtE}
\end{align}
In the above, the diverging reciprocal $1/h(\omega) = \e^{\sigma^2\beta^2/8} \e^{\labs{\omega}\beta/2}$ is the reason why we had to truncate the frequency integral $\labs{\omega}\le \Omega.$ The second term ($\int_{\labs{\omega}>\Omega}$) is controlled in~\eqref{eq:greater_Omega}, leading to
\begin{align}
 \lnorm{\int_{-\epsilon}^{\epsilon}\int_{\labs{\omega}> \Omega}[\hat{\vA}(\omega,t),\vO] \rd \omega\rd t}_{\vrho}\lesssim\epsilon \e^{-\beta_0 \Omega} \frac{\e^{\sigma^2 \beta_0^2}}{\beta_0\sqrt{\sigma}}\L(\norm{\vrho_{\beta_0}\vA\vrho_{\beta_0}^{-1}}+\norm{\vrho_{\beta_0}^{-1}\vA\vrho_{\beta_0}}\R)\,. \label{eq:sqrtEprime}
\end{align}

\textbf{STEP 3: Optimize parameters.}  We collect the estimates 
\begin{align}
    \lnorm{[\vA,\vO]}_{\vrho} &\lesssim \epsilon^2 {d^2|A|}+ \frac{1}{c}\L( \e^{-\beta_0 \Omega} \frac{\e^{\sigma^2 \beta_0^2}}{\beta_0\sqrt{\sigma}}\L(\norm{\vrho_{\beta_0}\vA\vrho_{\beta_0}^{-1}}+\norm{\vrho_{\beta_0}^{-1}\vA\vrho_{\beta_0}}\R) + \frac{\e^{\sigma^2\beta^2/16}}{\sqrt{\epsilon}\sqrt{g(\epsilon)\beta}}\e^{\Omega\beta/4} \sqrt{\CE(\vO,\vO)}\R)\tag*{(By~\eqref{eq:eps3},\eqref{eq:sqrtE},\eqref{eq:sqrtEprime})}\\
    \Rightarrow  \lnorm{[\vA,\vO]}_{\vrho}&\lesssim \epsilon^2{d^2|A|} + \L(\frac{\e^{\sigma^2 \beta_0^2}}{\sigma\beta_0}+ \frac{\e^{\sigma^2\beta^2/16}}{\sqrt{g(1)\beta \sigma}} \R)\L(\norm{\vrho_{\beta_0}\vA\vrho_{\beta_0}^{-1}}+\norm{\vrho_{\beta_0}^{-1}\vA\vrho_{\beta_0}}\R)^{\frac{\beta}{\beta+4\beta_0}} \L(\frac{\sqrt{\CE(\vO,\vO)}}{\sqrt{\epsilon}} \R)^{\frac{4\beta_0}{\beta+4\beta_0}}\tag*{(Optimizing $\Omega$)}\\
    \Rightarrow  \lnorm{[\vA,\vO]}_{\vrho} &\lesssim \labs{A}d^2\L(\frac{\e^{\sigma^2 \beta_0^2}}{\beta_0\sigma}+ \frac{\e^{\sigma^2\beta^2/16}}{\sqrt{g(1)\beta\sigma}} \R)^{\frac{\beta+4\beta_0}{\beta+5\beta_0}} \L(\norm{\vrho_{\beta_0}\vA\vrho_{\beta_0}^{-1}}+\norm{\vrho_{\beta_0}^{-1}\vA\vrho_{\beta_0}}\R)^{\frac{\beta}{\beta+5\beta_0}}  \L(\sqrt{\CE(\vO,\vO)} \R)^{\frac{4\beta_0}{\beta+5\beta_0}} \tag*{(optimizing $\epsilon^2{c}+a/\epsilon^{b}\lesssim {c}\,a^{2/(2+b)}$,  {for $b/2c<1$})}.
\end{align}
The second line sets
\begin{align}
    \e^{\Omega} = \L(\sqrt{\epsilon}\,\,\frac{\norm{\vrho_{\beta_0}\vA\vrho_{\beta_0}^{-1}}+\norm{\vrho_{\beta_0}^{-1}\vA\vrho_{\beta_0}}}{\sqrt{\CE(\vO,\vO)} }\R)^{\frac{1}{\beta/4+\beta_0}},
\end{align}
and sets $\epsilon\le 1/(d\sqrt{|A|})\le 1$ otherwise the bound is vacuous.

\end{proof}

\section{Proof of main results}\label{sec:proofsthms}
We put the lemmas together for a streamlined proof of the main result.
\subsection{Proof of~\autoref{thm:main}}\label{sec:proofmainthm}
\begin{proof}[Proof of~\autoref{thm:main}]

We rewrite the nontrivial component on $A$ by commutators with nontrivial Pauli strings $\vS\in P_A$  
\begin{align}
    \vrho_\beta-\vrho_{\beta,-A}=
  \frac{1}{2^{2\labs{A}+1}} \sum_{\vS\in P_A} [\vS,[\vS,\vrho_\beta]].
\end{align}
Therefore, for any global operator $\vX$ such that $\norm{\vX}\le 1$, we have that 
\begin{align}
    \labs{\tr[\vX\CR_{A,t}[\vrho_\beta-\vrho_{\beta,-A}]]}&= \frac{1}{2^{2\labs{A}+1}} \labs{\sum_{\vS\in P_A}\tr\L[ \vrho_\beta [\vS,[\vS,\CR^{\dagger}_{A,t}[\vX]]]\R]}\tag*{($\tr[\vB[\vA,[\vA,\vC]]= \tr[\vC[\vA,[\vA,\vB]]$)}\\ 
&\le \frac{1}{2^{2\labs{A}+1}}\sum_{\vS\in P_A} \norm{\vI}_{\vrho_\beta}\cdot \norm{[\vS,[\vS,\CR^{\dagger}_{A,t}[\vX]]]}_{\vrho_\beta}\tag*{(Cauchy-Schwarz)}.
\end{align}
From now on, we denote by $r$ a coefficient which may depend on $\beta,\beta_0, d,\sigma$ and which may change from line to line. For the low-temperature case $\beta > 4\beta_0$, we begin with peeling off the outer-most product with $\vS=\prod_{j=1}^{w(\vS)}\vA^j$ of weight $w(\vS)$:
\begin{align}
    (\operatorname{RHS})&\lesssim \frac{1}{2^{2\labs{A}}}\sum_{\vS\in P_A} {2^{\labs{A}}}\,\L(r\norm{[\vS,\CR^{\dagger}_{A,t}[\vX]]}_{\vrho_\beta}\R)^{\frac{4\beta_0}{\beta}}\tag*{(\autoref{lem:weight_norm_comm},\autoref{cor:highweight})}\\
& \lesssim \frac{r}{2^{{\labs{A}}}}\sum_{\vS\in P_A} \L(2^{w(\vS)} \sum_{j=1}^{w(\vS)} \lnorm{[\vA^j,\CR^{\dagger}_{A,t}[\vX]]}_{\vrho_\beta}^{16\beta^2_0/\beta^2}
 \R)^{\frac{4\beta_0}{\beta}}\tag*{(\autoref{cor:global_local_comm})}\\
 & \lesssim \frac{r}{2^{\labs{A}}} 4^{\labs{A}} \L(\labs{A} 2^{\labs{A}}\sum_{a \in P^1_A }\lnorm{[\vA^a,\CR^{\dagger}_{A,t}[\vX]]}_{\vrho_\beta}^{16\beta^2_0/\beta^2}
 \R)^{\frac{4\beta_0}{\beta}}\tag*{($w(\vS)\le \labs{A}$)}
\end{align}
The first line uses that at $\beta_0 = 1/{4d}$, we have $\norm{\vrho_{\beta_0}\vA\vrho_{\beta_0}^{-1}},\norm{\vrho_{\beta_0}^{-1}\vA\vrho_{\beta_0}} \le 2^{\labs{A}}$ (\autoref{cor:highweight}). In the second and third lines, we reduce global jumps to local jumps. Next, we reduce commutators to Dirichlet forms.
\begin{align}
    (cont.)&\lesssim \labs{A} 2^{{(1+4\beta_0/\beta)}\labs{A}} r\L(\sum_{a \in P^1_A }\CE_a(\CR^{\dagger}_{A,t}[\vX])^{\frac{32\beta^3_0}{\beta^2(\beta+5\beta_0)}}
 \R)^{\frac{4\beta_0}{\beta}} \tag*{(\autoref{lem:commutator_Dirichlet})}\\ 
 & \lesssim \labs{A}^2 2^{2|A|} r \L(\frac{\sum_{a \in P^1_A } \CE_a(\CR^{\dagger}_{A,t}[\vX])}{3\labs{A}}\R)^{\frac{128\beta^4_0}{\beta^3(\beta+5\beta_0)}}\\
 &\le \labs{A}^2 2^{2|A|} r \L(\CE_A(\CR^{\dagger}_{A,t}[\vX])\R)^{\frac{128\beta^4_0}{\beta^3(\beta+5\beta_0)}}\\
&\lesssim r \cdot \labs{A}^{2}2^{2|A|}  \cdot t^{-\frac{128\beta^4_0}{\beta^3(\beta+5\beta_0)}}\tag*{(\autoref{cor:dirichlet})}.
\end{align}
The second line restore the normalization $1/3\labs{A}$ using that $\BE[\labs{x}^{\alpha}]\le \BE[\labs{x}]^{\alpha} $ for $0\le\alpha\le 1$. 

Next, at high temperatures $\beta \le 4\beta_0$, 
\begin{align}
    (\operatorname{RHS})
&\lesssim \frac{1}{2^{2\labs{A}+1}}\sum_{\vS\in P_A} 2^{\labs{A}}\norm{[\vS,\CR^{\dagger}_{A,t}[\vX]]}_{\vrho_\beta}\tag*{(\autoref{lem:loose_AO},\autoref{cor:highweight})}\\
&\lesssim\frac{1}{2^{|A|}}\, \sum_{\vS\in P_A}2^{w(\vS)}\sum_{j=1}^{w(\vS)}\|[\vA^j,\CR^{\dagger}_{A,t}[\vX]]\|_{\vrho_\beta}
\tag*{(\autoref{cor:global_local_comm})}\\
&\lesssim r \cdot \labs{A}^2 2^{2|A|}\cdot \L({\CE_A(\CR^{\dagger}_{A,t}[\vX])}\R)^{\frac{2\beta_0}{\beta+5\beta_0}}\tag*{(\autoref{lem:commutator_Dirichlet})}\\
&\lesssim r \cdot \labs{A}^22^{2\labs{A}} \cdot t^{-\frac{2\beta_0}{\beta+5\beta_0}}\tag*{(\autoref{cor:dirichlet})}
\end{align}
for some explicit constant $r$ depending on $d$, $\beta$ and $\sigma$. Optimizing over all operators $\vX$ such that $\norm{\vX}\le1$, using stationary property $\CR_{A,t}[\vrho_\beta] = \vrho_\beta$, and setting $\sigma = 1/\beta$, we conclude the proof.
\end{proof}
\subsection{Proof of~\autoref{cor:main_qlocal} \label{sec:pf_qlocal_decay}and~\autoref{cor:CMI}}
Let us derive the quasi-locality of the recovery map and decay of CMI using standard arguments.
\begin{proof}[Proof of~\autoref{cor:main_qlocal}]
For simplicity, we drop the inverse temperature $\beta$ in $\vrho_{\beta}\equiv \vrho$. By the main recovery guarantee of~\autoref{thm:main}, and quasi-locality estimates of~\autoref{lem:truncation} and \autoref{lem:quasilocal}, we have 
\begin{align*}
2\Delta:=\lnorm{\vrho-\mathcal{R}_{A,t,\ell}[\vec{\tau}_A \otimes \vrho_{BC}]}_1 &\lesssim \left\| \vrho-\mathcal{R}_{A,t}[\vrho_{-A}]\right\|_1+ \left\| (\mathcal{R}_{A,t}-\mathcal{R}_{A,t,\ell})[\vrho_{-A}]\right\|_1\\
&\lesssim r\e^{\mu|A|}t^{-\lambda} + t|A|\left(\e^{-c'\frac{\ell}{d\beta}}+2^{-\ell}\right)
\end{align*} 
for every $\ell \ge 4\e^2\beta d$ and some absolute constant $c'$. From this, it suffices to choose 
\begin{align}
 t = \e^{(\mu \labs{A}+m\ell)/{\lambda+1}}
 \quad \text{with}\quad m=\min\L(\ln(2),\frac{c'}{d\beta}\R)
\end{align}
so that
\begin{align*}
\Delta\lesssim r|A| \exp\L(\frac{\mu\labs{A} -m\lambda\ell}{1+\lambda} \R) \le r|A| \exp\L(\frac{\mu\labs{A} -m\lambda \ell}{2} \R)\le r \exp\L(\frac{(\mu+2)\labs{A} -m\lambda \ell}{2} \R) \,
\end{align*}
where the second inequality assumes that $0<\lambda<1$ and uses that the bound is vacuous unless the exponent is negative. 
\end{proof}

\begin{proof}[Proof of~\autoref{cor:CMI}]
Let $\Lambda=A\sqcup B\sqcup C$ be a partitioning of the system with $A$ shielded from $C$. Consider the recovery map $\mathcal{R}_{A,t,\ell}$ associated with the Hamiltonian $\vH_\ell$, where $\ell=\operatorname{dist}(A,C)-1$ is chosen so that $\vH_\ell$ is supported on $B$ and does not overlap with $C$. Then, the CMI can be bounded by the approximate recovery map~\cite[Theorem 11.10.5]{wilde2011classical}
\begin{align}
I(A:C|B)_{\vrho} \le \Delta \log(\operatorname{dim}({C})) + h_2(\Delta) \lesssim \log(\operatorname{dim}({C}))\sqrt{\Delta} 
\end{align}
where $h_2(x) =  - x\log_2(x) - (1-x)\log_2(1-x)\lesssim \sqrt{x}$ for $x \in [0,1]$ and
\begin{align}
\Delta:= \frac{1}{2}\lnorm{\vrho-\mathcal{R}_{A,t,\ell}[\vec{\tau}_A \otimes \vrho_{BC}]}_1.    
\end{align}
Plug the bounds for $\Delta$ from~\autoref{cor:main_qlocal} and update $\mu',\lambda', r'$ to conclude the proof. 
\end{proof}

\subsection{Quasi-local Gibbs preparation guarantee}
This section follows closely the patching argument of~\cite{brandao2019finite_prepare} and upgrades their nonconstructive recovery map with our time-averaged map. This removes the uniform Markov asssumption, but our map has worse locality (due to the $\e^{\mu \labs{A}}$ prefactor ), leading to a polylogarithmic circuit depth overhead. 

\begin{figure}[h!]
\includegraphics[width=0.8\textwidth]{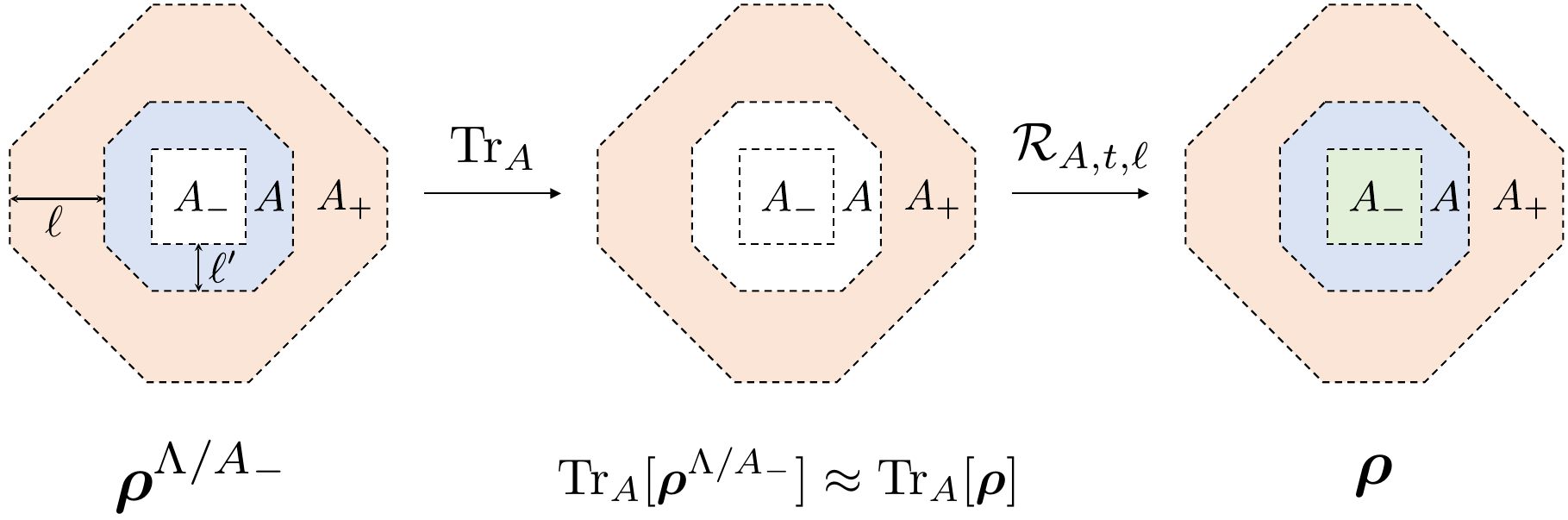}\caption{Combining local indistinguishability (\autoref{thm:localindistinguishability}) and local Markov property (\autoref{thm:main}) to recover the Gibbs state from a restricted Hamiltonian. There are two length scales: $\ell'$ as the correlation length for local indistinguishability and $\ell$ as the quasi-locality of the recovery map.
}\label{fig:zoomin}
\end{figure}

\begin{proof}[Proof of~\autoref{cor:samplingresult}]

For simplicity, we drop once again the inverse temperature $\beta$ in $\vrho_{\beta}\equiv \vrho.$ We restrict ourselves to the two dimensional setting $D=2$, since the proof readily extends to higher dimensions: we consider a tiling of the lattice with patches of three types: 
\begin{align}
A_-^{h,j}\subset A^{h,j}\subset A^{h,j}_+,\quad &\text{for}\quad j=1,\dots, N_A\quad \text{and}\quad  \,h=1\dots,h_0
\end{align} 
where $A^{h,j}_+$ denotes the region of sites at distance at most $\ell$ away from $A^{h,j}$, and such that $\Lambda=\bigsqcup_{h,j}A^{h,j}_-$ and $A^{h,j}_+\cap A_+^{h,k}=\emptyset$ for $j\ne k$ (see Figure \ref{figlattice}). Next, we denote the quasi-local maps (trace-out-and-recovery)
\begin{align}
\mathbb{F}_{A^{h,j}}:=\mathcal{R}_{A^{h,j},t,\ell}\circ (\vec{\tau}_{A^{h,j}} \otimes \tr_{A^{h,j}})\quad \text{where}\quad \mathcal{R}_{A^{h,j},t,\ell}:=\frac{1}{t}\int_0^{t} \operatorname{exp}\Big(s \mathcal{L}_{A,\ell}\Big)\,\rd s
\end{align}
with $\mathcal{L}_{A,\ell}$ the generator corresponding to single qubit jumps $\vA^a\subset P^1_A$ and Gibbs state $\vrho_{A^{h,j}_+}$. That is, $\mathcal{R}_{A^{h,j},t,\ell}$ acts only on $A^{h,j}_+.$
\smallskip 

\noindent \textbf{First step.} 
By~\autoref{cor:main_qlocal}, we have that there is a good time $t^*$ depending on $\ell$ such that

\begin{align*}
\|\mathbb{F}_{A^{1,j}}[\vrho]-\vrho\|_1=\|\mathcal{R}_{A^{1,j},t^*,\ell}[\vrho_{-A^{1,j}}]-\vrho\|_1
\lesssim r|\exp\L(\frac{\mu'\labs{A^{1,j}} -m\lambda \ell}{2} \R)\equiv  \Delta(\labs{A^{1,j}},\ell).
\end{align*}
Hence, denoting $\mathbb{F}_{A^1}=\bigotimes_{j=1}^{N_A} \mathbb{F}_{A^{1,j}}$ (Note that the channels $\mathbb{F}_{A^{1,j}}$ act on disjoint regions), applying triangle inequality with the telescoping sum yields
\begin{align*}
\lnorm{\mathbb{F}_{A^1}[\vrho]-\vrho}_1 = \lnorm{\bigotimes_{j=1}^{N_A} \mathbb{F}_{A^{1,j}}[\vrho]-\vrho}_1  
\lesssim N_A\,\Delta(\labs{A^{1,j}},\ell)\,.
\end{align*}
Next, by \autoref{thm:localindistinguishability}, for any state $\vsigma_{A_-}$ on $A^1_-=\bigsqcup_{j=1}^{N_A}A_-^{1,j}$,
\begin{align}
\|\tr_{A^1}[\vrho-\vrho^{\Lambda\backslash A^1_-}\otimes\vsigma_{A^{1}_-}]\|_1\lesssim N_A\,\poly(\labs{A},\ell'^{D})\,|\partial A_-^{\max}|\,\e^{-\frac{\ell'}{\xi'}}
\end{align}
for some modified correlation length $\xi'>0$, $\ell'\le\operatorname{dist}(A^1_-,A^1_+)$, and where $|\partial A_-^{\max}|$ stands for the maximal size that any boundary $A^{h,j}_-$ can take. Thus, combining the last two equations, we arrive at
\begin{align*}
\|\mathbb{F}_{A^1}[\vrho^{\Lambda\backslash A^1_-}\otimes \vsigma_{A^1_-}]-\vrho\|_1\lesssim N_A\,\poly(\labs{A},\ell'^{D})|\partial A_-^{\max}|\,\e^{-\frac{\ell'}{\xi'}}+N_A\Delta(|A^{1,j}|,\ell)\,.
\end{align*}
\begin{figure}[h!]
\includegraphics[width=0.9\textwidth]{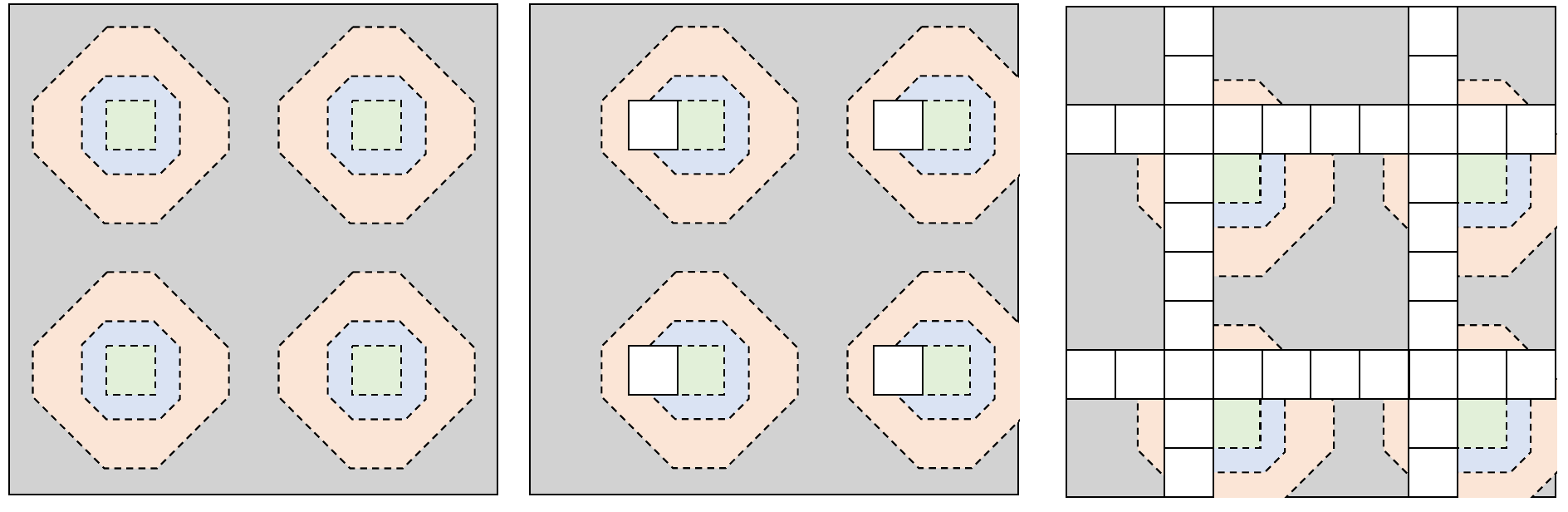}\caption{Building the global Gibbs states from quasi-local patches (see~\autoref{fig:zoomin} for the specification of each patch) in parallel. The patching argument proceeds by punching point-like holes (white squares) until the whole lattice is covered. Note that we assumed local indistinguishability holds for all intermediate Hamiltonians (supported on colored regions), which may have lots of punched holes and a nontrivial topology.
}\label{figlattice}
\end{figure}
\textbf{Repeat.}
Next, we repeat the previous scheme with the regions of the $A^{h}$ type, $h=2,\dots, h_0$: at each round, defining the quantum channel $\mathbb{F}_{A^h}:=\bigotimes_{j=1}^{N_{A}}\mathbb{F}_{A^{h,j}}$, as well as $A_-^{[h]}=\sqcup_{h'=1}^h A_-^{h'}$, we get that for any state $\vsigma_{A^{[h]}}$, 
\begin{align*}
\|\mathbb{F}_{A^{h}}[\vrho^{\Lambda\backslash A_-^{[h]}}\otimes \vsigma_{A_-^{[h]}}]-\vrho^{\Lambda\backslash A^{[h-1]}_-}\otimes \vsigma_{A^{[h-1]}_-}\|_1\lesssim N_{A}\,\poly(\labs{A},\ell'^{D})|\partial A_-^{\max}|\,\e^{-\frac{\ell'}{\xi'}}+N_{A}\Delta(|A^{1,j}|,\ell)\,,
\end{align*}
where once again we chose $\ell'=\operatorname{dist}(A^{h}_-,A^{h}_+)$.
Combining with all the previous rounds, we hence get that, for $\mathcal{C}_n\equiv \mathbb{F}_{A^{[h]}}=\bigcirc_{h'=1}^h \mathbb{F}_{A^{h'}}$ (note that the $\mathbb{F}_{A^{h'}}$'s overlap with each other so may not commute), 
\begin{align*}
\|\mathbb{F}_{A^{[h]}}[\vrho^{\Lambda\backslash A^{[h]}_-}\otimes \vsigma_{A^{[h]}_-}]-\vrho\|_1\lesssim h\,N_A\poly(\labs{A},\ell'^{D})|\partial A_-^{\max}|\e^{-\frac{\ell'}{\xi'}}+h\, N_A\Delta(|A^{1,j}|,\ell)\,.
\end{align*}
In the final step $h=h_0$, we hence have that for any $\vsigma_{\Lambda}$, 
\begin{align*}
\|\mathbb{F}_{A^{[h_0]}}[ \vsigma_{\Lambda}]-\vrho\|_1\lesssim  h_0\,N_A\poly(\labs{A},\ell'^{D})|\partial A_-^{\max}|\e^{-\frac{\ell'}{\xi'}}+h_0\, N_A\Delta(|A^{1,j}|,\ell)\,.
\end{align*}

\textbf{Altogether.}
Now, since $N_A\le |\Lambda|=n$, we can choose $A_-^{h,j}$ of side-length $\log(n)$ so that $|\partial A_-^{\max}|\lesssim \log(n)$. Then choosing $\ell=\Theta(\log^2(n/\epsilon))$, $t^*=\e^{\Theta(\log^2(n{/\epsilon}))}$, and $\ell'=\mathcal{O}(\log(n))$
ensures that $\|\mathbb{F}_{A^{[h_0]}}(\vsigma_\Lambda)-\vrho\|_1\le \epsilon$. Finally, we use \autoref{circuitimplementation} to control the number of gates needed to implement the map $\mathbb{F}_{A^{[h_0]}}$ within the circuit model. Similarly, for dimension $D$, the parameters choices would then be $t^*=\e^{\Theta(\log^D(n{/\epsilon}))}$, $\ell=\Theta(\log^D(n/\epsilon))$ and $\ell'=\mathcal{O}(\log(n)).$

\end{proof}

\appendix
\section{Proof of quasi-locality (Lemma \ref{lem:quasilocal})}
\label{sec:LRestimates}
In what follows, we localize the detailed balanced Lindbladian~\eqref{eq:exact_DB_L} by truncating the Hamiltonian. We split the Lindbladian into the coherent part and the dissipative part
\begin{align}
    \CL = -\ri[\vB,\cdot] + \CD\label{eqdecompdissip}
\end{align}
which will be treated slightly differently.

\subsection{Review of Lieb-Robinson bounds}
We instantiate the most standard Lieb-Robinson argument \cite{Lieb1972,haah2020quantum, chen2023speed}:
\begin{lem}[Local Hamiltonian patch]\label{lem:LRbounds}
For a Hamiltonian $\vH = \sum_{\gamma}\vH_{\gamma}$ with bounded interaction degree $d$ (\autoref{sec:Ham}) and an operator $\vA$ supported on region $A\subset \Lambda$, let $\vH_{\ell}$ contain all terms $\vH_{\gamma}$ such that $\operatorname{dist}(\gamma,A) < \ell-1$ for an integer $\ell$. Then,
\begin{align}
\norm{\e^{i\vH_{\ell} t}\vA \e^{-i\vH_{\ell} t}-\e^{i\vH t}\vA \e^{-i\vH t}} \lesssim \norm{\vA} \labs{A}\frac{(2d\labs{t})^{\ell}}{\ell!}.
\end{align}    
\end{lem}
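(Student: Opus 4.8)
The plan is to run the textbook Lieb--Robinson argument, specialized to a bounded-degree interaction graph. Write $\vA(t):=\e^{i\vH t}\vA\e^{-i\vH t}$ and $\vA_\ell(t):=\e^{i\vH_\ell t}\vA\e^{-i\vH_\ell t}$, and expand both in their convergent nested-commutator series $\vA(t)=\sum_{k\ge 0}\frac{(it)^k}{k!}\operatorname{ad}_{\vH}^{k}(\vA)$ and $\vA_\ell(t)=\sum_{k\ge 0}\frac{(it)^k}{k!}\operatorname{ad}_{\vH_\ell}^{k}(\vA)$. The first observation is that these two series agree term by term up to order $\ell-1$: in any nonzero word $[\vH_{\gamma_k},[\cdots,[\vH_{\gamma_1},\vA]\cdots]]$ one needs $\gamma_1$ to overlap $\operatorname{Supp}(\vA)\subseteq A$ and each $\gamma_{j+1}$ to overlap $A\cup\bigcup_{i\le j}\operatorname{Supp}(\vH_{\gamma_i})$, so every $\gamma_j$ sits at interaction-graph distance at most $j-1$ from $A$; since $\vH_\ell$ contains \emph{all} terms strictly closer than $\ell-1$ to $A$, every term that can appear at order $k\le\ell-1$ already belongs to $\vH_\ell$, whence $\operatorname{ad}_{\vH}^{k}(\vA)=\operatorname{ad}_{\vH_\ell}^{k}(\vA)$ there. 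Thus the difference is carried entirely by the tail,
\begin{align}
\vA(t)-\vA_\ell(t)=\sum_{k\ge\ell}\frac{(it)^k}{k!}\big(\operatorname{ad}_{\vH}^{k}(\vA)-\operatorname{ad}_{\vH_\ell}^{k}(\vA)\big).
\end{align}

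The second step is to estimate $\big\|\operatorname{ad}_{\vH}^{k}(\vA)-\operatorname{ad}_{\vH_\ell}^{k}(\vA)\big\|$ for $k\ge\ell$, and here a naive term-by-term bound is not enough — summing absolute norms of the $k$ nested commutators regenerates a spurious factor $k!$ and a series that converges only for $\labs{t}$ below a constant. Instead I would use the standard Lieb--Robinson bookkeeping: group the nonzero words of length $k$ by the connected cluster of Hamiltonian terms they involve, which must be anchored on $A$ and must reach interaction-graph distance $\ge\ell-1$ (otherwise the word cancels in the difference). In a graph of degree at most $d$ the number of connected clusters of $k$ terms touching $A$ is at most $\labs{A}(cd)^k$ for an absolute constant $c$; the sum over the admissible orderings of a fixed cluster's nested commutator does not regenerate a factorial, because $\gamma_{j+1}$ is forced to touch the support assembled so far (so most orderings vanish) and the surviving orderings are resummed by an internal Duhamel expansion. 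This yields $\big\|\operatorname{ad}_{\vH}^{k}(\vA)-\operatorname{ad}_{\vH_\ell}^{k}(\vA)\big\|\lesssim \labs{A}(2d)^k\norm{\vA}$, the careful accounting (which pins the effective rate at $2d$) being carried out in the cited works; I would import it rather than reprove it. This is the one genuinely non-trivial point — beating the $k!$ — and it is exactly the classical Lieb--Robinson bound for bounded-degree interaction graphs~\cite{Lieb1972,haah2020quantum,chen2023speed}.

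Granting that estimate, I would finish by summing the tail:
\begin{align}
\norm{\vA(t)-\vA_\ell(t)}\ \lesssim\ \norm{\vA}\,\labs{A}\sum_{k\ge\ell}\frac{(2d\labs{t})^k}{k!}\ \lesssim\ \norm{\vA}\,\labs{A}\,\frac{(2d\labs{t})^{\ell}}{\ell!}\,,
\end{align}
using that the claimed inequality is vacuous once $2d\labs{t}\gtrsim\ell$ (the left-hand side is always $\le 2\norm{\vA}$), while for $2d\labs{t}\lesssim\ell$ the successive terms of the exponential tail decrease geometrically and are dominated by the first. The $\labs{A}$ prefactor comes from the number of Hamiltonian terms meeting $A$. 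In short, the whole proof is a reduction to the Lieb--Robinson combinatorics, and the only obstacle is the standard one, which is quoted from the literature.
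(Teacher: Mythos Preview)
Your proposal is correct and takes essentially the same approach as the paper: both reduce the difference to the exponential tail $\sum_{k\ge\ell}\labs{A}(2d\labs{t})^k/k!$ by importing the Lieb--Robinson combinatorics from the cited references (the paper invokes \cite[Proposition 4.3]{chen2023speed} and counts self-avoiding paths as $\labs{A}d^p$), then bound the tail by its leading term using that the inequality is vacuous once $2d\labs{t}\gtrsim\ell$. Your narrative about why the first $\ell-1$ orders cancel and your ``connected cluster'' phrasing are a slightly more detailed retelling of the same argument, but the substance and the one nontrivial imported step are identical.
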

 We largely reproduce~\cite[Proposition 4.3]{chen2023speed} as follows.
\begin{proof}
Without loss of generality, let $\norm{\vA}\le1,$ and apply~\cite[Proposition 4.3]{chen2023speed}, where the number of self-avoiding paths of length $p\ge 1$ is bounded by $\labs{A}d\cdot (d-1)^{p-1}\le \labs{A}d^{p}$
\begin{align}
    \norm{\e^{i\vH_{\ell} t}\vA \e^{-i\vH_{\ell} t}-\e^{i\vH t}\vA \e^{-i\vH t}} &\le  \labs{A}\sum_{p=\ell}^{\infty} \frac{(2d\labs{t})^p}{p!}\\
    &\le \labs{A}\frac{(2d\labs{t})^{\ell}}{\ell!} \sum_{q=0}^{\infty} \frac{(2d\labs{t})^q}{(q+\ell)!},\\
    \operatorname{LHS}&\le \min\L(2,\labs{A}\frac{(2d\labs{t})^{\ell}}{\ell!} \frac{1}{1-2/e}\R). 
\end{align}
The last bound uses that the RHS is meaningful only if $2d\labs{t}/r\le 2$ and sums the decaying series.
\end{proof}

\subsection{Dissipative part}
Now, we apply the Lieb-Robinson bounds to the Heisenberg dynamics in the Lindbladian. This section treats the dissipative part as follows, with some generality of the transition weight.
\begin{lem}[Quasi-locality of the dissipative part]
Consider the Lindbladian~\eqref{eq:exact_DB_L} with any transition weight $0\le \gamma(\omega)\le 1$ for a single jump operator $\vA$. Then, the dissipative part can be approximated by
\begin{align}\label{lem.dissipaticepartboundl}
    \norm{\CD-\CD_{\ell}}_{\diamond}\lesssim \norm{\vA} \left( \e^{-\sigma^2\ell^2/2d^2}+\labs{A}\e^{-\ell}\right)
\end{align} 
where $\CL_{\ell}$ is defined by replacing $\vH$ with the localized Hamiltonian $\vH_{\ell}$ as in~\autoref{lem:LRbounds}.
\end{lem}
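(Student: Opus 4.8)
By linearity it suffices to take a single jump $\vA$, $\norm{\vA}\le1$. Write $\hat{\vA}_\ell(\omega)$ for the operator Fourier transform built from $\vH_\ell$, i.e.\ $\hat{\vA}_\ell(\omega)=\frac{1}{\sqrt{2\pi}}\int\e^{-\ri\omega t}f(t)\,\e^{\ri\vH_\ell t}\vA\e^{-\ri\vH_\ell t}\,\rd t$, and $\delta\hat{\vA}(\omega):=\hat{\vA}(\omega)-\hat{\vA}_\ell(\omega)$, which is the operator Fourier transform of $t\mapsto f(t)\,(\vA(t)-\vA_\ell(t))$ with $\vA_\ell(t):=\e^{\ri\vH_\ell t}\vA\e^{-\ri\vH_\ell t}$. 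I would split $\CD$ into a transition part $\CT[\cdot]=\int\gamma(\omega)\hat{\vA}(\omega)(\cdot)\hat{\vA}(\omega)^\dagger\,\rd\omega$ and a decay part $-\tfrac12\{\vG,\cdot\}$ with $\vG:=\int\gamma(\omega)\hat{\vA}(\omega)^\dagger\hat{\vA}(\omega)\,\rd\omega$, write $\CT_\ell,\vG_\ell$ for the analogues built from $\vH_\ell$, and use $\norm{\CD-\CD_\ell}_\diamond\le\norm{\CT-\CT_\ell}_\diamond+\norm{\vG-\vG_\ell}$ (because $X\mapsto\tfrac12\{\vM,X\}$ has diamond norm $\le\norm{\vM}$).

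For the two differences I would telescope: $\hat{\vA}(\omega)(\cdot)\hat{\vA}(\omega)^\dagger-\hat{\vA}_\ell(\omega)(\cdot)\hat{\vA}_\ell(\omega)^\dagger=\delta\hat{\vA}(\omega)(\cdot)\hat{\vA}(\omega)^\dagger+\hat{\vA}_\ell(\omega)(\cdot)\delta\hat{\vA}(\omega)^\dagger$ and $\vG-\vG_\ell=\int\gamma(\omega)\big(\delta\hat{\vA}(\omega)^\dagger\hat{\vA}(\omega)+\hat{\vA}_\ell(\omega)^\dagger\delta\hat{\vA}(\omega)\big)\,\rd\omega$. To each resulting bilinear superoperator $X\mapsto\int\gamma(\omega)\,\vP(\omega)\,X\,\vQ(\omega)\,\rd\omega$ I would apply the standard Cauchy--Schwarz bound $\norm{X\mapsto\int\gamma(\omega)\,\vP(\omega)X\vQ(\omega)\,\rd\omega}_\diamond\le\norm{\int\gamma(\omega)\vP(\omega)^\dagger\vP(\omega)\,\rd\omega}^{1/2}\norm{\int\gamma(\omega)\vQ(\omega)\vQ(\omega)^\dagger\,\rd\omega}^{1/2}$ (valid since $\gamma\ge0$), and to the two summands of $\vG-\vG_\ell$ the operator version $\norm{\int\gamma\,\vP^\dagger\vQ}\le\norm{\int\gamma\,\vP^\dagger\vP}^{1/2}\norm{\int\gamma\,\vQ^\dagger\vQ}^{1/2}$. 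This reduces everything to two kinds of quantities: $\norm{\int\gamma(\omega)\,\hat{\vA}_\sharp(\omega)^\dagger\hat{\vA}_\sharp(\omega)\,\rd\omega}$ for $\hat{\vA}_\sharp\in\{\hat{\vA},\hat{\vA}_\ell\}$, and $\norm{\int\gamma(\omega)\,\delta\hat{\vA}(\omega)^\dagger\delta\hat{\vA}(\omega)\,\rd\omega}$.

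The key point is that since $0\le\gamma(\omega)\le1$ and the integrands are positive semidefinite, dropping $\gamma$ only enlarges these operators, and Plancherel (the operator Fourier transform in $\omega$ of $t\mapsto f(t)\vA_\sharp(t)$, resp.\ $t\mapsto f(t)(\vA(t)-\vA_\ell(t))$) turns each $\omega$-integral into a time integral against the fixed-variance Gaussian $f(t)^2$, with $\int f(t)^2\,\rd t=1$. Hence $\norm{\int\gamma\,\hat{\vA}_\sharp^\dagger\hat{\vA}_\sharp\,\rd\omega}\le\norm{\int f(t)^2\vA_\sharp(t)^\dagger\vA_\sharp(t)\,\rd t}\le\norm{\vA}^2$ and $\norm{\int\gamma\,\delta\hat{\vA}^\dagger\delta\hat{\vA}\,\rd\omega}\le\int f(t)^2\norm{\vA(t)-\vA_\ell(t)}^2\,\rd t=:I_\ell$, so that $\norm{\CD-\CD_\ell}_\diamond\lesssim\norm{\vA}\,I_\ell^{1/2}$. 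It remains to estimate $I_\ell$: \autoref{lem:LRbounds} gives $\norm{\vA(t)-\vA_\ell(t)}\lesssim\min\!\big(1,\,|A|\,(2d|t|)^\ell/\ell!\big)$, so splitting $\int\rd t$ at a cutoff $|t|=T=\Theta(\ell/d)$ and using $\ell!\ge(\ell/\e)^\ell$ makes the near part contribute $\lesssim|A|^2\e^{-2\ell}$ and the far part $\lesssim\int_{|t|>T}f(t)^2\,\rd t\lesssim\e^{-\Theta(\sigma^2T^2)}$; optimizing $T$ yields $I_\ell\lesssim|A|^2\e^{-2\ell}+\e^{-\sigma^2\ell^2/2d^2}$, whence $\norm{\CD-\CD_\ell}_\diamond\lesssim\norm{\vA}\big(\e^{-\sigma^2\ell^2/2d^2}+|A|\e^{-\ell}\big)$.

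The main obstacle is organizational rather than analytic: one must arrange that every $\omega$-integral is paired into a $\hat{\vA}^\dagger\hat{\vA}$-type quadratic form, never appearing as $\int\gamma(\omega)\,\rd\omega$ (which diverges for the Metropolis weight), so that Plancherel can trade the weight $\gamma$ for the fixed Gaussian $f^2$; that same Gaussian then simultaneously suppresses the large-$|\omega|$ region and the large-$|t|$ Lieb--Robinson tail. The only genuinely quantitative choice is the cutoff $T=\Theta(\ell/d)$, whose optimization produces the constant $1/2d^2$ in the quadratic exponent.
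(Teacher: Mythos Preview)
Your proposal is correct and follows the same core route as the paper: reduce the $\omega$-integrals to time-domain via Plancherel, truncate at $T=\Theta(\ell/d)$, and apply the Lieb--Robinson bound of Lemma~\ref{lem:LRbounds}. The only difference is packaging. The paper introduces a Stinespring-type dilation $\vV=\int\hat{\vA}_{\vH}(\omega)\otimes\ket{\omega}\,\rd\omega$ and a multiplication filter $\vF=\int\gamma(\omega)\ket{\omega}\bra{\omega}\,\rd\omega$ on an $L^2(\BR)$ ancilla, so that $\CD[\vrho]=\tr_\omega\!\big[\vF\vV\vrho\vV^\dagger-\tfrac12\{\vV^\dagger\vF\vV,\vrho\}\big]$ and $\norm{\CD-\CD_\ell}_\diamond\le 4\norm{\vV-\vV'}$ follows from a single telescope with $\norm{\vF}\le 1$ replacing your $\gamma\le 1$; the time-domain representation $\vV=\int f(t)\vA(t)\otimes\vU_{FT}\ket{t}\,\rd t$ then gives $\norm{\vV-\vV'}$ directly without ever forming the quadratic $\hat{\vA}^\dagger\hat{\vA}$ integrals. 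Your Haagerup/Cauchy--Schwarz route reaches the same endpoint (indeed $\norm{\vV-\vV'}^2=\norm{\int(\delta\hat{\vA})^\dagger\delta\hat{\vA}\,\rd\omega}\le I_\ell$), but the dilation makes the ``organizational obstacle'' you flag disappear: the weight $\gamma$ sits in $\vF$ and never touches the Plancherel step.
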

\begin{proof}
Without loss of generality, let $\norm{\vA}\le1.$ It will be helpful to consider the ``purified'' jump (which is an isometry if $\vA^{\dagger}\vA=\vI$): denoting by $\hat{\vA}_{\vH}$, resp. by $\hat{\vA}_{\vH_\ell}$, the operator Fourier transform of ${\vA}$ associated to Hamiltonian $\vH$, resp. associated to $\vH_\ell$,
\begin{align}
    \vV := \int_{-\infty}^{\infty} \hat{\vA}_{\vH}(\omega) \otimes \ket{\omega}\rd \omega\quad \text{and}\quad \vV' := \int_{-\infty}^{\infty} \hat{\vA}_{\vH_{\ell}}(\omega) \otimes \ket{\omega}\rd \omega,
\end{align}
and the filter as an operator
\begin{align}
    \vF = \int_{-\infty}^{\infty} \gamma(\omega) \vI\otimes \ket{\omega}\bra{\omega} \rd \omega
\end{align}
where $\braket{\omega'|\omega} = \delta(\omega'-\omega)$\footnote{More formally, $\vV$ can be defined as the map $\mathbb{C}^{2^n}\to \mathbb{C}^{2^n}\otimes L^2(\mathbb{R})$, $|\psi\rangle\mapsto (\omega\mapsto \vA_{\vH}(\omega)|\psi\rangle)$, with norm $\|\vV\|:=\sup_{\||\psi\rangle\|\le 1}\sqrt{\int_{\mathbb{R}}\|\hat{\vA}_{\vH}(\omega)|\psi\rangle\|^2\,\rd\omega}$. Whenever $\vA^\dagger \vA=\vI$, the previous integral simply evaluates to $\||\psi\rangle\|^2$, giving $\|\vV\|=1$.
Furthermore, the map $\vF$ is defined as the point-wise multiplication by the function $\gamma\in L^\infty(\mathbb{R})$ on $L^2(\mathbb{R})$, so that $\|\vF\|\le \|\gamma\|_\infty$. Above, we also denote by $\tr_\omega$ the integral over $\omega\in\mathbb{R}$.}. 
Then, for any input $\vrho$ (which may be entangled with ancillas)
\begin{align}
\CD[\vrho] &= \tr_{\omega}\Big[ \vF\vV \vrho \vV^{\dagger} - \frac{1}{2}\{\vV^{\dagger}\vF\vV,\vrho\}\Big] \\
\CD_{\ell}[\vrho] &= \tr_{\omega}\Big[ \vF\vV' \vrho \vV^{'\dagger} - \frac{1}{2}\{\vV^{'\dagger}\vF\vV',\vrho\}\Big],
\end{align}
and clearly
\begin{align}
    \norm{\CD-\CD_{\ell}}_{\diamond} &\le 2 \norm{\vV-\vV'}(\norm{\vV}+\norm{\vV'})\norm{\vF}\\
    &\le 4\norm{\vV-\vV'}    
\end{align}
since $\norm{\vF}\le \sup_{\omega}\labs{\gamma(\omega)}\le 1$ and $\norm{\vV},\norm{\vV'}\le \norm{\vA} \le 1.$
Now we compute the difference
\begin{align}
    \vV = \int_{-\infty}^{\infty} \hat{\vA}_{\vH}(\omega) \otimes\ket{\omega}\rd \omega
&= \frac{1}{\sqrt{2\pi}}\int_{-\infty}^{\infty}\int_{-\infty}^{\infty} 
f(t) \e^{-i\omega t} 
\e^{i\vH t}\vA \e^{-i\vH t} \rd t \otimes \ket{\omega}\rd \omega\\
&= \int_{-\infty}^{\infty} 
f(t)
\e^{i\vH t}\vA \e^{-i\vH t}  \otimes \vU_{FT}\ket{t} \rd t\\
&\approx \int_{-T}^{T} 
f(t)
\e^{i\vH t}\vA \e^{-i\vH t}  \otimes \vU_{FT}\ket{t} \rd t\\
&\approx \int_{-T}^{T} 
f(t)
\e^{i\vH_{\ell} t}\vA \e^{-i\vH_{\ell} t}  \otimes \vU_{FT}\ket{t} \rd t\\
&\approx \int_{-\infty}^{\infty} 
f(t)
\e^{i\vH_{\ell} t}\vA \e^{-i\vH_{\ell} t}  \otimes \vU_{FT}\ket{t} \rd t
 = \vV'.
\end{align}
with the Fourier transform unitary on $L^2(\mathbb{R})$, written as $\vU_{FT} := \frac{1}{\sqrt{2\pi}}\int_{-\infty}^{\infty}\int_{-\infty}^{\infty}\e^{-i \omega t}\ket{\omega}\bra{t}\rd \omega \rd t$ in bracket notations.
It remains to quantify the errors made in the last three approximations:
\begin{align}
    \norm{\vV-\vV'} \le 2\norm{f(t)\indicator(\labs{t}\ge T)}_2 + \sup_{\labs{t}\le T}\lnorm{\e^{i\vH_{\ell} t}\vA \e^{-i\vH_{\ell} t}-\e^{i\vH t}\vA \e^{-i\vH t}}
\end{align}    
where we used that $\|f\|_2=1$. We bound
\begin{align}
    \norm{f(t)\indicator(\labs{t}\ge T)}^2_2 \lesssim \int_{T}^{\infty} \e^{-2\sigma^2t^2}\sigma \rd t \lesssim \int_{\sqrt{2}\sigma T}^{\infty} \e^{-x^2}\rd x \lesssim \frac{\e^{-2\sigma^2 T^2}}{\sigma T }.
\end{align}
Since the LHS is bounded by one, we can further simplify 
\begin{align}
    \norm{f(t)\indicator(\labs{t}\ge T)}^2_2 \lesssim \min(1, \e^{-2\sigma^2 T^2}).
\end{align}
Also, by~\autoref{lem:LRbounds},
\begin{align}
    \norm{\e^{i\vH_{\ell} t}\vA \e^{-i\vH_{\ell} t}-\e^{i\vH t}\vA \e^{-i\vH t}} \lesssim \labs{A}\frac{(2dT)^{\ell}}{\ell!} \lesssim \labs{A} \Big(\frac{2ed T}{\ell}\Big)^{\ell},
\end{align}
using Stirling's approximation $1/\ell! \lesssim (e/\ell)^{\ell}$. 
Let us choose a convenient (slightly suboptimal)
\begin{align}
    T = \frac{\ell}{2d\e^2}
\end{align}
to arrive at the advertised bounds. 
\end{proof}

\subsection{The coherent part}
Recall the coherent term~\cite{chen2023efficient} for the Metropolis weight~\eqref{eq:Metropolis} associated with each jump $\vA^a$
\begin{align}\label{eq:coherent}
\vB^a:=\int_{-\infty}^\infty \,b_1(t)\, \,\e^{-i\beta \vH t}\lim_{\eta\rightarrow 0_+}\left(\int_{-\infty}^\infty\,b^\eta_2(t')\,{\vA}_{\vH}^{a}(\beta t'){\vA}^a_{\vH}(-\beta t')dt'+\frac{1}{{8}\sqrt{2}\,\pi}\,\vA^{a\dagger}\vA^{a}\right)\e^{i\beta \vH t} dt
\end{align}
where we denote $\vA_{\vH}(t):=\e^{i\vH t}\vA \e^{-i\vH t}$, 
\begin{align}
&b_1(t):=2\sqrt{\pi}\,\e^{\frac{1}{8}}\,\left(\frac{1}{\operatorname{cosh}(2\pi t)}\ast \sin(-t)\operatorname{exp}(-2t^2)\right)\qquad \text{ with }\quad \|b_1\|_{1}<1\label{eq:b1}\\
&b^\eta_2(t):=\mathsf{1}(|t|\ge \eta)\frac{1}{{2}\sqrt{2}\,\pi}\,\frac{\operatorname{exp}(-2t^2-it)}{t(2t+i)}\equiv \mathsf{1}(|t|\ge \eta)\,b_2^M(t)
\end{align}
where we take the $\eta\rightarrow\infty$ limit in a slightly more convenient way than~\cite{chen2023efficient}.
The most generic bound from~\cite{chen2023efficient} has a logarithmic dependence on $\beta\norm{\vH}$, which depends on the global system size. Here, we want to use the locality of jump $A$ to obtain a norm bound on $\vB^{a}$ that is \textit{independent} of the system size. While this is not exactly what we want, this first generic argument will teach us how to truncate the Hamiltonian.
\begin{cor}[Bounds on the coherent term for local jumps]\label{cor:boundsCoherentLocal}
Suppose that $\nrm{\sum_{a\in P_A}\vA^{a\dagg}\vA^a}\le 1.$ Then,	
    \begin{align}
		\nrm{\sum_{a\in P_A} \vB^a}\leq \frac{1}{\sqrt{2}\pi}\L({\frac{9}{8}}+\frac{\sqrt{2\pi}}{{4}} + {2} \beta\lnorm{\sum_{a\in P_A} [\vA^{a\dagger},\vH][\vA^{a},\vH]}^{1/2}\R).
	\end{align}
\end{cor}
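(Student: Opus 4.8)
The plan is to bound $\|\sum_{a\in P_A}\vB^a\|$ by pushing the triangle inequality through the explicit integral representation \eqref{eq:coherent}, exploiting that each factor appearing there is either a unitary conjugation (which does not change the operator norm) or a kernel with a controlled $L^1$ norm. First I would observe that $\e^{-i\beta\vH t}(\cdot)\e^{i\beta\vH t}$ is a $*$-automorphism, so it drops out of any operator-norm estimate, and likewise $\vA^a_{\vH}(\pm\beta t')$ has the same norm as $\vA^a$. Using $\|b_1\|_1<1$, the outer integral over $t$ contributes at most a factor $1$, and I am left to bound, uniformly in $t$,
\begin{align}
\Big\|\sum_{a\in P_A}\lim_{\eta\to0_+}\Big(\int_{-\infty}^\infty b_2^\eta(t')\,\vA^a_{\vH}(\beta t')\vA^a_{\vH}(-\beta t')\,\rd t' + \frac{1}{8\sqrt2\,\pi}\vA^{a\dagger}\vA^a\Big)\Big\|.
\end{align}
The constant term is immediate: $\|\sum_a \vA^{a\dagger}\vA^a\|\le 1$ gives the $\frac{1}{8\sqrt2\,\pi}$ contribution, i.e.\ the $\frac{9}{8}$ piece will combine the $\frac18$ here with a $1$ coming from elsewhere.

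The heart of the matter is the principal-value integral $\int b_2^\eta(t')\,\vA^a_{\vH}(\beta t')\vA^a_{\vH}(-\beta t')\,\rd t'$. Here a naive triangle inequality fails because $b_2^M(t')=\frac{1}{2\sqrt2\,\pi}\frac{\exp(-2t'^2-it')}{t'(2t'+i)}$ is not absolutely integrable near $t'=0$ (it behaves like $1/t'$). The standard fix, which I would carry out, is to split the integrand at $t'=0$ by writing $\vA^a_{\vH}(\beta t')\vA^a_{\vH}(-\beta t') = \vA^{a\dagger}\vA^a + \big(\vA^a_{\vH}(\beta t')\vA^a_{\vH}(-\beta t')-\vA^{a\dagger}\vA^a\big)$ (using $(\vA^a)^\dagger\in P_A$ and the normalization so that at $t'=0$ the product is $\vA^{a\dagger}\vA^a$ up to relabeling); the first, constant piece is integrated against $b_2^M$ exactly, and since $b_2^M$ is odd in its singular part the principal value of $\int b_2^M(t')\,\rd t'$ is finite — evaluating it yields the $\frac{\sqrt{2\pi}}{4}$ term and a harmless bounded constant (absorbed into the $\frac98$). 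The remaining difference vanishes linearly at $t'=0$, killing the $1/t'$ singularity, so it is integrable against $b_2^M$; bounding $\|\vA^a_{\vH}(\beta t')\vA^a_{\vH}(-\beta t')-\vA^{a\dagger}\vA^a\|$ by $\beta|t'|\cdot\|[\vH,\vA^{a\dagger}]\vA^a + \vA^{a\dagger}[\vH,\vA^a]\|$-type derivative estimates, and more carefully by a Cauchy–Schwarz step across the sum over $a$ to produce $\|\sum_a[\vA^{a\dagger},\vH][\vA^a,\vH]\|^{1/2}$ rather than $\sum_a\|\cdots\|$, yields the $2\beta\|\sum_{a\in P_A}[\vA^{a\dagger},\vH][\vA^a,\vH]\|^{1/2}$ term, with the explicit constant $2$ coming from $\int |t'|\,|b_2^M(t')|\,\rd t'$ together with the $\frac{1}{\sqrt2\,\pi}$ prefactor.

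The main obstacle I anticipate is handling the $\eta\to0_+$ principal-value limit rigorously while simultaneously keeping the sum over $a\in P_A$ inside the norm (so that the final bound scales like the operator norm of $\sum_a[\vA^{a\dagger},\vH][\vA^a,\vH]$, not its trace or a sum of norms). The trick is to do the Cauchy–Schwarz / operator-Jensen step at the level of the vector-valued integrand — writing $\sum_a (\text{diff})_a = \sum_a \vX_a^\dagger \vY_a$ with $\vX_a,\vY_a$ built from $[\vH,\vA^a]$ and a bounded remainder, then applying $\|\sum_a\vX_a^\dagger\vY_a\|\le \|\sum_a\vX_a^\dagger\vX_a\|^{1/2}\|\sum_a\vY_a^\dagger\vY_a\|^{1/2}$ — and only afterwards integrating the scalar kernel $|b_2^M(t')|$. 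Once that is set up, collecting the three contributions $\tfrac98$, $\tfrac{\sqrt{2\pi}}{4}$, and $2\beta\|\sum_{a}[\vA^{a\dagger},\vH][\vA^a,\vH]\|^{1/2}$ and pulling out the common $\tfrac{1}{\sqrt2\,\pi}$ gives exactly the claimed inequality. Everything else — the $L^1$ bound on $b_1$, the Gaussian tails making all integrals converge, the automorphism invariance — is routine and stated or implicit in the earlier material.
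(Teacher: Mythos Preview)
Your strategy is correct and shares the paper's core ideas: kill the $1/t'$ singularity of $b_2^M$ by subtracting the value at $t'=0$, bound the resulting difference via the derivative (which produces the commutators $[\vA^a,\vH]$), and apply the operator Cauchy--Schwarz inequality $\lnorm{\sum_a \vX_a^\dagger \vY_a}\le \lnorm{\sum_a\vX_a^\dagger\vX_a}^{1/2}\lnorm{\sum_a\vY_a^\dagger\vY_a}^{1/2}$ to keep the sum over $a$ inside the norm.

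The decomposition differs slightly from the paper's. You subtract $\vA^{a\dagger}\vA^a$ globally and propose to evaluate the principal value $\text{p.v.}\int b_2^M(t')\,\rd t'$ for the constant piece. The paper instead splits the integration range as $\indicator(|t'|\ge\eta)=\indicator(1\ge|t'|\ge\eta)+\indicator(|t'|>1)$; on $|t'|\le 1$ it writes $F(t')=t'b_2^M(t')\vA^{a\dagger}(\beta t')\vA^a(-\beta t')$ and uses $\int\frac{\indicator}{t'}F(t')\,\rd t'=\int\frac{\indicator}{t'}\int_0^{t'}F'(s)\,\rd s\,\rd t'$ (the constant $F(0)$ drops out by odd symmetry), then bounds $\sup_{|t'|\le 1}\|F'(t')\|$ via the product rule; on $|t'|>1$ it bounds $\|\sum_a\vA^{a\dagger}\vA^a\|\cdot\|b_2^M\indicator(|t'|>1)\|_1$ directly. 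The paper's split avoids having to compute the principal-value integral of $b_2^M$ explicitly and yields the stated constants directly: $\tfrac{9}{8}=1+\tfrac{1}{8}$ comes from the $|t'|\le 1$ chain-rule term $\sup|\tfrac{\rd}{\rd t'}(t'b_2^M)|\le\tfrac{1}{2\sqrt{2}\pi}$ integrated over $[-1,1]$ plus the explicit $\tfrac{1}{8\sqrt{2}\pi}\vA^{a\dagger}\vA^a$, the $\tfrac{\sqrt{2\pi}}{4}$ comes from the $|t'|>1$ tail, and the $2\beta$ from the two commutator terms in the chain rule times $2\sup|t'b_2^M|\le\tfrac{1}{\sqrt{2}\pi}$. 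Your attribution of these numbers to pieces of your own decomposition is off (e.g.\ your $\int|t'||b_2^M(t')|\,\rd t'\le\tfrac{1}{4\sqrt{\pi}}$ would not give the factor $2$), but this is bookkeeping: your method produces a bound of the same form with comparable constants, which is all that is needed downstream.
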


\begin{proof}
Let us focus on the seemingly divergent term in the inner integral~\eqref{eq:coherent}, and further split it into (similarly to~\cite{gilyen2024quantum})
\begin{align}
    \indicator(\labs{t'} \ge\eta) = \indicator(1\ge \labs{t'} \ge\eta)+\indicator(\labs{t'}> 1).
\end{align}
For ease of notations, we simply denote $\vA_{\vH}(t)=\vA(t)$. For the first interval,
\begin{align}
    &\lim_{\eta\rightarrow 0_+} \sum_{a\in P_A} \int_{-\infty}^{\infty} \indicator(1\ge \labs{t'} \ge\eta) b_2^{M}(t')\vA^{a\dagger}(\beta t')\vA^a(-\beta t')\rd t' \\
    &=  \lim_{\eta\rightarrow 0_+}\sum_{a\in P_A} \int_{-\infty}^{\infty} \frac{\indicator(1\ge \labs{t'} \ge\eta)}{t'} t'b_2^{M}(t')\vA^{a\dagger}(\beta t')\vA^a(-\beta t')\rd t' \\
    & = \lim_{\eta\rightarrow 0_+} \sum_{a\in P_A} \int_{-\infty}^{\infty} \frac{\indicator(1\ge \labs{t'} \ge\eta)}{t'} \int_0^{t'} \frac{\rd}{\rd t'} \L(t'b_2^{M}(t')\vA^{a\dagger}(\beta t')\vA^a(-\beta t')\R)\rd t' \tag*{(Fundamental theorem of Calculus)}
\end{align}
where we use the symmetry of the integral to remove the leading order expansion in $t',$ hence explicitly removing the singularity. Note that we do not need a convergence of Taylor series for $t'b_2^{M}(t')\vA^{a\dagger}(\beta t')\vA^a(-\beta t')$, but only that it is differentiable on the interval $[0,1]$. Therefore,
\begin{align}
    &\lnorm{\lim_{\eta\rightarrow 0_+} \sum_{a\in P_A} \int_{-\infty}^{\infty} \frac{\indicator(1\ge \labs{t'} \ge\eta)}{t'} \int_0^{t'} \frac{\rd}{\rd t'} \L(t'b_2^{M}(t')\vA^{a\dagger}(\beta t')\vA^a(-\beta t')\R)\rd t'} \\
    &\le \int_{-1}^{1} \rd t' \cdot   \sup_{\labs{t'}\le 1}\lnorm{\frac{\rd}{\rd t'} \L(t'b_2^{M}(t')\sum_{a\in P_A}\vA^{a\dagger}(\beta t')\vA^a(-\beta t')\R)}.
\end{align}

Therefore, it remains to bound the derivative
\begin{align}
    \lnorm{\sum_{a\in P_A} \frac{\rd}{\rd t'} \L(t'b_2^{M}(t')\vA^{a\dagger}(\beta t')\vA^a(-\beta t')\R)} &\le \lnorm{\sum_{a\in P_A} \frac{\rd}{\rd t'} \L(t'b_2^{M}(t')\R) \vA^{a\dagger}(\beta t')\vA^a(-\beta t')} \\
    &+ \lnorm{\sum_{a\in P_A}  \L(t'b_2^{M}(t')\R) \beta [\vA^{a\dagger},\vH](\beta t')\vA^a(-\beta t')}\\
    &+ \lnorm{\sum_{a\in P_A}  \L(t'b_2^{M}(t')\R) \vA^{a\dagger}(\beta t')\beta[\vA^a,\vH](-\beta t')}\label{eq:chainrule}\\
    &\le \labs{ \frac{\rd}{\rd t'} \L(t'b_2^{M}(t')\R)} \lnorm{\sum_{a\in P_A} \vA^{a\dagger}\vA^{a}}\\
    &+ 2\beta\labs{t'b_2^{M}(t')} \cdot\lnorm{\sum_{a\in P_A} \vA^{a\dagger}\vA^{a}}^{1/2}\lnorm{\sum_{a\in P_A} [\vA^{a\dagger},\vH][\vA^{a},\vH]}^{1/2}
\end{align}
where we also used standard norm inequalities (see e.g. \cite[Lemma K.1]{chen2024local}). Next, we evaluate the scalar bounds
\begin{align}
    \labs{ \frac{\rd}{\rd t} \L(tb_2^{M}(t)\R)} &\le \frac{1}{{2}\sqrt{2}\pi}\labs{-(4t+i)\frac{\exp(-2t^2-it)}{2t+i} -2\frac{\exp(-2t^2-it)}{(2t+i)^2}} \le \frac{\e^{-2t^2}}{{2}\sqrt{2}\pi}\le \frac{1}{{2}\sqrt{2}\pi}\\
    \labs{tb_2^{M}(t)}& \le \frac{\e^{-2t^2}}{{2}\sqrt{2}\pi}\le \frac{1}{{2}\sqrt{2}\pi}.
\end{align}
For the second interval,
\begin{align}
    \lnorm{ \sum_{a\in P_A} \int_{-\infty}^{\infty} \indicator(\labs{t'}>1) b_2^{M}(t') \vA^{a\dagger}(\beta t')\vA^a(-\beta t')\rd t'} &\le \lnorm{\sum_{a\in P_A} \vA^{a\dagger}\vA^{a}}\,\, \lnorm{t'\mapsto \indicator(\labs{t'}>1) b_2^{M}(t')}_1\\
    &\le  \frac{1}{{2}\sqrt{2}\pi} \lnorm{\exp\L(-2t^2\R)}_1 \le \frac{1}{{4}\sqrt{\pi}}\,.
\end{align}
Collect the constants and the contribution from $\frac{1}{8\sqrt{2}\,\pi}\,\vA^{a\dagger}\vA^{a}$ to conclude the proof.
\end{proof}
Based on the above way to tame the log-singularity, we obtain a local approximation of the coherent term without reference to the global system size.
\begin{cor}\label{cor:localjumps}
For a jump $\vA^a$ supported on region $A$ with $\norm{\vA^a}\le1$ and integer $\ell \ge 4 \e^2d\beta$, 
    \begin{align}
		\nrm{\vB^a-\vB^a_{\ell}}\lesssim \e^{-c'\frac{\ell}{d\beta}} +\labs{A} 2^{-\ell}\,,
	\end{align}
{for some absolute constant $c'>0$.}
\end{cor}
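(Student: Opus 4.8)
The plan is to expand $\vB^a$ and $\vB^a_\ell$ from Equation~\eqref{eq:coherent}, noting that $\vH$ (respectively $\vH_\ell$) enters only through Heisenberg-evolved factors $\vA^a_{\vH}(\beta s)$ and through the outer conjugation $\e^{-\ri\beta\vH t}(\cdot)\e^{\ri\beta\vH t}$. I would absorb the outer conjugation into the inner evolutions via $\e^{-\ri\beta\vH t}\vA^a_{\vH}(\beta t')\e^{\ri\beta\vH t}=\vA^a_{\vH}(\beta(t'-t))$, so that $\vB^a-\vB^a_\ell$ becomes a double time integral against $b_1(t)\,b^\eta_2(t')$ of differences of products $\vA^{a\dagger}_{\vH}(\beta s_1)\vA^{a}_{\vH}(\beta s_2)-\vA^{a\dagger}_{\vH_\ell}(\beta s_1)\vA^{a}_{\vH_\ell}(\beta s_2)$, where $s_1,s_2$ are linear in $(t,t')$ with $\labs{s_i}\le\labs{t}+\labs{t'}$, plus the analogous single integral from the $\tfrac{1}{8\sqrt 2\,\pi}\vA^{a\dagger}\vA^a$ term. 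The $1/t'$ singularity of $b_2^M$ at $t'=0$ is neutralized exactly as in the proof of~\autoref{cor:boundsCoherentLocal}: on $1\ge\labs{t'}\ge\eta$ the symmetry of the integral lets one replace the integrand by $\tfrac{1}{t'}\int_0^{t'}\tfrac{\rd}{\rd s}\big(s\,b_2^M(s)[\cdots]\big)\rd s$ and bound its norm by $\sup_{\labs{s}\le1}\norm{\tfrac{\rd}{\rd s}\big(s\,b_2^M(s)[\cdots]\big)}$; the derivative either lands on the bounded scalar $s\,b_2^M(s)$ or produces an extra factor $\beta[\vH,\vA^a]$, and since $[\vH_\ell,\vA^a]=[\vH,\vA^a]$ for $\ell\ge2$ this manipulation commutes with taking the $\vH$-versus-$\vH_\ell$ difference.

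Next I would truncate all time integrals to $\labs{t},\labs{t'}\le T$ with $T:=\ell/(8\e d\beta)$, which satisfies $T>1$ precisely because $\ell\ge4\e^2 d\beta$. Using the exponential decay $\labs{b_1(t)}\lesssim\e^{-c\labs{t}}$ (inherited from the $1/\cosh$ factor in the convolution defining $b_1$, with $\norm{b_1}_1<1$) and the Gaussian decay $\labs{b_2^M(t')}\lesssim\e^{-2t'^2}$ for $\labs{t'}\ge1$, and using $\norm{\vA^a}\le1$ so each truncated operator has norm $O(1)$, the discarded tails contribute at most $\lesssim\e^{-cT}+\e^{-cT^2}\lesssim\e^{-c'\ell/(d\beta)}$.

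On the truncated domain every Heisenberg-evolved factor that appears ($\vA^a$, $\vA^{a\dagger}\vA^a$, or $[\vH,\vA^a]$, each supported within $O(1)$ of $A$) is evolved for a time of magnitude at most $2\beta T=\ell/(4\e d)$, so Lieb--Robinson (\autoref{lem:LRbounds}) bounds its $\vH$-versus-$\vH_\ell$ difference by $\lesssim\labs{A}\,\tfrac{(2d\cdot2\beta T)^{\ell}}{\ell!}\lesssim\labs{A}\,2^{-\ell}$ (the commutator factors lose at most an additive $O(1)$ in the Lieb--Robinson order, absorbed into the constant). Telescoping each product, integrating the differences against the now bounded weights over a domain of $O(1)$ measure, and absorbing the scalar and commutator prefactors — which are $\mathrm{poly}(\beta,d,\labs{A})\lesssim\mathrm{poly}(\ell)\,\labs{A}^{O(1)}$ since $\beta,d\lesssim\ell$ by the hypothesis $\ell\ge4\e^2 d\beta$, and hence harmless after a marginal adjustment of the exponent — gives a contribution $\lesssim\labs{A}\,2^{-\ell}$. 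Combining the two sources yields $\norm{\vB^a-\vB^a_\ell}\lesssim\e^{-c'\ell/(d\beta)}+\labs{A}\,2^{-\ell}$.

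The main obstacle — and the reason one must absorb the outer conjugation rather than first collapsing the inner $t'$-integral into a single operator $\vM^a_\ell$ — is that such an operator is supported all the way out to the boundary of $\vH_\ell$, leaving no Lieb--Robinson room for the outer $\vH$-versus-$\vH_\ell$ comparison. Keeping all time integrals uncollapsed and truncated, so that every Heisenberg factor has a light cone of radius $2\beta T\ll\ell$, is precisely what makes the estimate work; the remaining bookkeeping (handling the $1/t'$ singularity inside a derivative, and tracking the $\mathrm{poly}(\beta,d,\labs{A})$ prefactors against $\ell\gtrsim d\beta$) is routine.
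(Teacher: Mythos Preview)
Your proposal follows essentially the same route as the paper: absorb the outer conjugation into the inner Heisenberg evolutions, truncate both time integrals at $T\sim\ell/(d\beta)$, tame the $1/t'$ singularity via the derivative-and-symmetry trick of \autoref{cor:boundsCoherentLocal}, and apply Lieb--Robinson on the truncated domain. The paper's only refinements are a slightly smaller $T=\ell/(4\e^2 d\beta)-1$ (so the Lieb--Robinson decay is $\e^{-\ell}$, leaving genuine room to absorb the $\mathrm{poly}(\ell)$ prefactors into $2^{-\ell}$ rather than relying on an unspecified ``marginal adjustment'') and tracking that the $|A|$-dependence stays linear, not $|A|^{O(1)}$.
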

\begin{proof}
We drop the superscript $a$ for ease of notation. Truncating the double integration range, we get
\begin{align}\label{eq:MetropolisDimLessTime}
  \vB  &\!=\! \int_{-\infty}^{\infty}b_1(t)\e^{-\ri\beta \vH t}\left( \lim_{\eta \rightarrow 0}\int_{-\infty}^{\infty}b_2^{\eta}(t')\vA^{\dagger}(\beta t')\vA(-\beta t')\rd t'+\frac{1}{16\sqrt{2}\pi}\vA^{\dagger}\vA\right) \e^{\ri \beta\vH t}\rd t\\
    &\! \approx\! \int_{-T}^{T} b_1(t)\e^{-\ri\beta \vH t}\left( \lim_{\eta \rightarrow 0}\int_{-\infty}^{\infty}b_2^{\eta}(t')\vA^{\dagger}(\beta t')\vA(-\beta t')\rd t'+\frac{1}{16\sqrt{2}\pi}\vA^{\dagger}\vA\right) \e^{\ri \beta\vH t}\rd t\\
    &\! \approx\! \int_{-T}^{T} b_1(t)\e^{-\ri\beta \vH t}\left( \lim_{\eta \rightarrow 0}\int_{-T}^{T}b_2^{\eta}(t') \vA^{\dagger}(\beta t')\vA(-\beta t')\rd t'+\frac{1}{16\sqrt{2}\pi}\vA^{\dagger}\vA\right) \e^{\ri \beta\vH t}\rd t\\
    & \!=\!  \int_{-T}^{T}\!\!\! \!b_1(t)\left( \!\lim_{\eta \rightarrow 0}\int_{-T}^{T}\!\!\! \L( \indicator(\labs{t}\ge1) \!+\!\indicator(1\ge \labs{t}\ge \eta) \R) b_2^{\eta}(t') \vA^{\dagger}(\beta (t'-t))\vA(-\beta (t'+t))\rd t'\!+\frac{(\vA^{\dagger}\vA)(-\beta t)}{16\sqrt{2}\pi}\!\right)\rd t.
\end{align}
We then compare the above with that of the localized Hamiltonian $\vH\rightarrow \vH_{\ell}.$ We recall Lieb-Robinson bounds (\autoref{lem:LRbounds}) for various operators and times, and using that $\norm{b_1}_1,\norm{b_2\indicator(\labs{t}\ge 1)}_1 \lesssim 1$
\begin{align}
(\vA^{\dagger}\vA\quad \text{term})&\lesssim \labs{A} \frac{(2d\beta T)^{\ell}}{\ell!}\\
   (\indicator(\labs{t}\ge1)\quad \text{term}) &\lesssim \labs{A} \frac{(4d\beta T)^{\ell}}{\ell!}.
\end{align}
For the term $\indicator(1\ge \labs{t}\ge \eta)$, due to the $1/t$ divergence in $b_2^M(t),$ we have to revisit the proof of~\autoref{cor:boundsCoherentLocal}, and particularly compare $\vH$ vs. $\vH_{\ell}$ at~\eqref{eq:chainrule} to get
\begin{align}
      ( \indicator(1\ge \labs{t}\ge \eta)\quad \text{term}) &\lesssim \beta \sup_{\labs{t}\le \beta(T+1)}\lnorm{[\vA,\vH]_{\vH}(t)-[\vA,\vH]_{\vH_{\ell}}(t)}+\sup_{\labs{t}\le \beta(T+1)} \lnorm{\vA_{\vH}(t)-\vA_{\vH_{\ell}}(t)}\\
       &\lesssim \beta d\labs{A} \frac{(2d\beta (T+1))^{\ell-1}}{(\ell-1)!} +\labs{A}\frac{(2d\beta (T+1))^{\ell}}{\ell!}\\
       &\lesssim \labs{A}  \ell\frac{(2d\beta (T+1))^{\ell}}{\ell!} \tag*{(Simplification and $T\ge0$)}
\end{align}
where we used the notation $\vO_{\vH}(t)$, resp. $\vO_{\vH_\ell}(t)$, to indicate that the time evolution generated by $\vH$ resp. by $\vH_\ell$. We may use $\labs{A}\ell(4d\beta (T+1))^{\ell}/\ell! \le \labs{A}\ell(\frac{4ed\beta (T+1)}{\ell})^{\ell}$ to upper bound all the above truncation errors and arrive at
\begin{align}
    \nrm{\vB-\vB_{\ell}} &\lesssim \lnorm{b_1(t)\indicator(\labs{t}\ge T)}_1 + \lnorm{b^{M}_2(t)\indicator(\labs{t}\ge T)}_1 + \labs{A}\ell\left(\frac{4ed\beta (T+1)}{\ell}\right)^{\ell}\\
    &\lesssim \e^{-cT} + \e^{-2T^2} + \labs{A}\ell\left(\frac{4ed\beta (T+1)}{\ell}\right)^{\ell}\\
    &\lesssim \e^{-c'\frac{\ell}{d\beta}} +\labs{A} \ell \e^{-\ell} \lesssim \e^{-c'\frac{\ell}{d\beta}} +\labs{A} 2^{-\ell}\,, \end{align} 
{for some absolute constants $c,c'>0$}. The last line sets $T = \frac{\ell}{4\e^2d\beta} -1\ge 0$ and updates the constants to conclude the proof.
\end{proof}

The proof of \autoref{lem:quasilocal} then follows from combining the bounds derived above:

\begin{proof}[Proof of \autoref{lem:quasilocal}]
Using the bounds derived in \autoref{lem.dissipaticepartboundl} and \autoref{cor:localjumps}, we get that 
\begin{align*}
\|\mathcal{L}_{A,\ell}^\dagger-\mathcal{L}_A^\dagger\|_{\infty-\infty}\le \|\mathcal{D}^\dagger-\mathcal{D}^\dagger_{\ell}\|_{\infty-\infty}+2\|\vB-\vB_\ell\|&\lesssim |A|\,\Big(\e^{-\sigma^2\ell^2/2d^2}+\e^{-\ell}+\e^{-c'\frac{\ell}{d\beta}}+2^{-\ell}\Big)\\
&\lesssim |A|\Big(\e^{-c'\frac{\ell}{d\beta}} + 2^{-\ell}\Big)\,.
\end{align*}

\end{proof}

\section{Improving the bounds with a local gap condition}\label{sec:improvegap}

In this appendix, we propose a method to improve over the bounds derived in our main results \autoref{cor:CMI} and \autoref{cor:samplingresult} under an additional \textit{local gap} condition.

\begin{defn}[Local gap] 
We say a $\vrho$-detailed balanced Lindbladian $\CL$ is $\lambda$-locally-gapped if
    \begin{align*}
-\lambda\,\langle \vX,\mathcal{L}^\dagger(\vX)\rangle_{\vrho}\le \langle \vX,\mathcal{L}^{\dagger2}(\vX)\rangle_{\vrho} \quad \text{for each operator}\quad \vX.
    \end{align*} If the above holds for any restricted Gibbs state $\vrho^X$, with $X\subset \Lambda$, we say the pair of Hamiltonian and set of jumps $(\vH,\{\vA^a\})$ is $\lambda$-uniformly-locally-gapped at inverse temperature $\beta$. 
\end{defn}
The above unconditionally holds for $(\vH,\{\vA^a\})$ being a commuting Hamiltonian and local jumps $P_A^1$ on a region A with a local gap independent of the global system size $\lambda \ge f(\labs{A})>0$. For general noncommutative Hamiltonians (with local jumps), we do not know of any a priori bound on the local gap, even assuming high temperature. Nevertheless, we may, for now, play around with the consequences of such strong assumptions, and leave for future work to explore suitable local gap conditions. 
\begin{lem}[exponential decay of Dirichlet form] Suppose a Lindbladian is $\lambda$-gapped and consider the limit
\begin{align}
    \CP:=\lim_{s\to\infty}\e^{s\mathcal{L}}.
\end{align}
Then for any $t>0$,
\begin{align*}
\| \e^{\CL^{\dagger} t}(\vX)-\CP^\dagger(\vX)\|_{\vrho} \le \e^{-t\lambda }\|\vX\|_{\vrho}.
\end{align*}
\end{lem}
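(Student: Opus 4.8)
The plan is to diagonalise the Heisenberg generator in the KMS inner product and to read the decay off the local gap directly. First I would record the structural facts: since $\CL$ is $\vrho$-detailed balanced, the dual generator $\CL^{\dagger}$ is self-adjoint with respect to $\langle\cdot,\cdot\rangle_{\vrho}$ on the finite-dimensional space of matrices (this is exactly what lets one write the Dirichlet form $\CE(\vX,\vY)=-\langle\vX,\CL^{\dagger}(\vY)\rangle_{\vrho}$ as a Hermitian form in \autoref{lem:Dirichlet_Gaussianmetropolis}), and $\CE(\vX)=-\langle\vX,\CL^{\dagger}(\vX)\rangle_{\vrho}\ge 0$, so $G:=-\CL^{\dagger}$ is a positive semidefinite self-adjoint operator. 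Writing its spectral decomposition $G=\sum_i\mu_i\Pi_i$ with $\mu_i\ge 0$ and $\langle\cdot,\cdot\rangle_{\vrho}$-orthogonal projections $\Pi_i$, the Heisenberg semigroup is $\e^{\CL^{\dagger}t}=\e^{-Gt}=\sum_i\e^{-\mu_it}\Pi_i$, which converges as $t\to\infty$ to the spectral projection $\Pi_0$ onto $\ker\CL^{\dagger}$; since $\CP^{\dagger}=\lim_{s\to\infty}\e^{s\CL^{\dagger}}$, this identifies $\CP^{\dagger}=\Pi_0$, the KMS-orthogonal projection onto the fixed-point subspace.

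Second, I would turn the local gap hypothesis into a spectral gap for $G$. The assumed inequality $-\lambda\langle\vX,\CL^{\dagger}(\vX)\rangle_{\vrho}\le\langle\vX,\CL^{\dagger2}(\vX)\rangle_{\vrho}$ for all $\vX$ is precisely $\lambda\langle\vX,G\vX\rangle_{\vrho}\le\langle\vX,G^{2}\vX\rangle_{\vrho}$, i.e. $G(G-\lambda\CI)\succeq 0$. Evaluating this against the spectral measure of $G$ forces $\mu_i(\mu_i-\lambda)\ge 0$ for every $i$, hence every nonzero eigenvalue satisfies $\mu_i\ge\lambda$; in other words the spectrum of $G$ is contained in $\{0\}\cup[\lambda,\infty)$.

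Finally, I would decompose $\vX=\CP^{\dagger}(\vX)+\vX_{\perp}$ with $\vX_{\perp}:=(\CI-\CP^{\dagger})(\vX)$ lying in $(\ker\CL^{\dagger})^{\perp}$. The semigroup fixes $\CP^{\dagger}(\vX)$ and commutes with $\Pi_0$, so $\e^{\CL^{\dagger}t}(\vX)-\CP^{\dagger}(\vX)=\e^{-Gt}(\vX_{\perp})$; since $(\ker\CL^{\dagger})^{\perp}$ is $G$-invariant with $G\ge\lambda$ there, the functional calculus gives $\norm{\e^{-Gt}(\vX_{\perp})}_{\vrho}\le\e^{-\lambda t}\norm{\vX_{\perp}}_{\vrho}$, and because $\CP^{\dagger}$ is a KMS-orthogonal projection the Pythagorean identity gives $\norm{\vX_{\perp}}_{\vrho}\le\norm{\vX}_{\vrho}$. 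Combining these yields $\norm{\e^{\CL^{\dagger}t}(\vX)-\CP^{\dagger}(\vX)}_{\vrho}\le\e^{-\lambda t}\norm{\vX}_{\vrho}$, as desired. I do not expect a genuine obstacle here: the only step deserving care — and the only place where detailed balance is actually used — is the identification of $\CP^{\dagger}$ with the \emph{orthogonal} spectral projection onto $\ker\CL^{\dagger}$, since for a merely primitive, non-reversible semigroup the limit would be an oblique projection and the contraction estimate would fail; everything else is elementary functional calculus on a finite-dimensional inner-product space.
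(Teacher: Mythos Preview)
Your proposal is correct and is the natural spectral argument. The paper actually states this lemma without proof, treating it as a standard consequence of self-adjointness in the KMS inner product plus the spectral gap; your write-up fills in precisely those details (identifying $\CP^\dagger$ with the KMS-orthogonal projection onto $\ker\CL^\dagger$, translating the local gap inequality $\lambda G\preceq G^2$ into $\mathrm{spec}(G)\subset\{0\}\cup[\lambda,\infty)$, and applying functional calculus on the orthogonal complement). There is nothing to compare against, and your remark that detailed balance is what makes the limiting projection orthogonal rather than oblique is the right point of care.
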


\begin{lem} Suppose the Lindbladian is $\lambda$-locally-gapped. Then, the Dirichlet form decays exponentially: for any $t>1$,
\begin{align}
    \CE(\e^{\CL^{\dagger}t}[\vX],\e^{\CL^{\dagger}t}[\vX]) \le \e^{-2\lambda (t-1)}\norm{\vX}_{\vrho}^2.
\end{align}

\end{lem}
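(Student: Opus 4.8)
The plan is to combine the fixed-point convergence already established in the preceding lemma with an elementary functional-calculus estimate for the semigroup, after peeling off the stationary part. Recall that KMS-$\vrho$-detailed balance makes the Heisenberg generator $\CL^\dagger$ self-adjoint and negative semidefinite for $\langle\cdot,\cdot\rangle_{\vrho}$, so $\CP^\dagger:=\lim_{s\to\infty}\e^{s\CL^\dagger}$ is the orthogonal projection onto $\ker\CL^\dagger$, the semigroup $\e^{s\CL^\dagger}$ is a $\norm{\cdot}_{\vrho}$-contraction, $\e^{s\CL^\dagger}\CP^\dagger=\CP^\dagger=\CP^\dagger\e^{s\CL^\dagger}$, and $\CL^\dagger\CP^\dagger=0$; also note that the $\lambda$-local-gap hypothesis is precisely the spectral-gap inequality feeding the preceding lemma.

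First I would set $\vX_{t-1}:=\e^{(t-1)\CL^\dagger}[\vX]$ and split off its stationary part, $\vW:=\vX_{t-1}-\CP^\dagger[\vX]=(\CI-\CP^\dagger)[\vX_{t-1}]$, so that $\e^{t\CL^\dagger}[\vX]=\e^{\CL^\dagger}[\vW]+\CP^\dagger[\vX]$. The preceding lemma, applied at time $t-1\ge 0$, gives $\norm{\vW}_{\vrho}\le \e^{-(t-1)\lambda}\norm{\vX}_{\vrho}$. Since $\CL^\dagger(\CP^\dagger[\vX])=0$ and $\CE(\cdot,\cdot)=-\langle\cdot,\CL^\dagger(\cdot)\rangle_{\vrho}$, bilinearity and self-adjointness kill the cross terms and the diagonal $\CP^\dagger[\vX]$ term, and moving one semigroup factor across yields
\begin{align*}
\CE\big(\e^{t\CL^\dagger}[\vX]\big)&=\CE\big(\e^{\CL^\dagger}[\vW]\big)=\big\langle \vW,\,(-\CL^\dagger)\e^{2\CL^\dagger}[\vW]\big\rangle_{\vrho}\\
&\le \Big(\sup_{x\ge 0}x\e^{-2x}\Big)\,\norm{\vW}_{\vrho}^2=\tfrac{1}{2\e}\,\norm{\vW}_{\vrho}^2
\end{align*}
by functional calculus for the self-adjoint operator $-\CL^\dagger\ge 0$. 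Combining with the decay bound on $\vW$,
\begin{align*}
\CE\big(\e^{t\CL^\dagger}[\vX]\big)\le \tfrac{1}{2\e}\,\e^{-2(t-1)\lambda}\norm{\vX}_{\vrho}^2\le \e^{-2\lambda(t-1)}\norm{\vX}_{\vrho}^2,
\end{align*}
which is the claim (with a spare factor $\tfrac{1}{2\e}$; the hypothesis $t>1$ enters only to keep $t-1\ge 0$).

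Equivalently, one could diagonalize $\CL^\dagger$ in an orthonormal KMS eigenbasis $\CL^\dagger\vX_j=-\mu_j\vX_j$, $\mu_j\ge 0$: plugging $\vX_j$ into the local-gap inequality forces $\lambda\mu_j\le\mu_j^2$, hence $\mu_j\in\{0\}\cup[\lambda,\infty)$; then for $\vX=\sum_j c_j\vX_j$ one gets $\CE(\e^{t\CL^\dagger}[\vX])=\sum_{j:\,\mu_j\ge\lambda}\mu_j|c_j|^2\e^{-2\mu_j t}$ and bounds each term by $\mu_j\e^{-2\mu_j t}=(\mu_j\e^{-2\mu_j})\,\e^{-2\mu_j(t-1)}\le \tfrac{1}{2\e}\,\e^{-2\lambda(t-1)}$. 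I do not anticipate a genuine obstacle here; the only points that need a little care are identifying $\CP^\dagger$ with the spectral projection onto $\ker\CL^\dagger$ (so that the stationary component genuinely drops out of $\CE$) and cleanly invoking the already-proven decay estimate at the shifted time $t-1$ — both routine for a self-adjoint negative-semidefinite generator in finite dimension.
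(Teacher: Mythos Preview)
Your proposal is correct and essentially follows the paper's approach. The paper's proof is exactly your ``equivalently'' diagonalization argument: viewing $-\CL^\dagger$ as a PSD operator $\vV$ in the KMS inner product, one bounds $\langle\psi,\vV\e^{-2\vV t}\psi\rangle$ eigenvalue-by-eigenvalue, using the crude inequality $v\le \e^{2v}$ (rather than your sharper $v\e^{-2v}\le\tfrac{1}{2\e}$) to get $v_i\e^{-2v_i t}\le \e^{-2v_i(t-1)}\le \e^{-2\lambda(t-1)}$ for the nonzero eigenvalues. Your first argument, which peels off the stationary part and invokes the preceding lemma at time $t-1$ before applying the functional-calculus bound on the remaining unit of time, is a mild repackaging of the same spectral computation; it buys you the extra factor $\tfrac{1}{2\e}$ and makes the role of the preceding decay lemma explicit, but is otherwise equivalent.
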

\begin{proof}
    Consider the equivalent linear algebra statement for PSD operator $\vV = \sum_i v_i\ket{v_i}\bra{v_i}$, and for any state $|\psi\rangle$,
    \begin{align}
    \langle \psi|{\vV \e^{-2\vV t}}|\psi\rangle \le \max_{i} v_i \e^{-2v_i t} \le {\max_{i, v_i\ne 0}}\,\e^{2v_i}\e^{-2v_it}   
    \end{align}
 using that $v \le \e^{2v}$ for any $v>0.$  
\end{proof}

Similarly, the cost for the patching scheme \autoref{cor:samplingresult} for preparing Gibbs states over a $D$-dimensional lattice is dominated by the maps $\mathcal{R}_{A,t^*,\ell}$ for some log-size regions $A$ for a quasi-polynomial time $t^*=\e^{\Theta(\log^D(n/\epsilon))}$. Adapting from the patching argument of~\autoref{cor:samplingresult} with the faster decay of Dirichlet form arising from the local gap condition, we arrive at a much improved cost. Roughly, one could boost any polynomially small local gap to a preparation scheme with poly-logarithmic overheads.
\begin{cor}[polynomial local gap and efficient preparation]
    Assume that the Lindbladian with single jumps
\begin{align}
    \CL = \sum_{i\in P_A^1} \CL_i 
\end{align}
is $c\labs{A}^{-c'}$-uniformly-locally-gapped for some constants $c, c'>0.$ Then, the Gibbs state can be prepared with gate complexity
\begin{align}
\CO( n\poly\log(n/\epsilon)).     
\end{align}
\end{cor}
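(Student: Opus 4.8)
The plan is to revisit the proof of \autoref{cor:samplingresult} and track precisely where the quasi-polynomial runtime $t^*=\e^{\Theta(\log^D(n/\epsilon))}$ entered, then replace that bottleneck using the exponential decay of the Dirichlet form guaranteed by the local gap. Recall that in \autoref{cor:samplingresult}, the single source of superpolynomial cost was the choice of averaging time $t^*$ required to make $\Delta(|A^{h,j}|,\ell)\lesssim r|A|\e^{(\mu'|A|-m\lambda\ell)/2}$ small; because the time-averaging only gives the polynomial decay $\CE_A(\CR^\dagger_{A,t}[\vX])\le 2/t$ (\autoref{cor:dirichlet}), driving this down to $\epsilon/\poly(n)$ while also taking $\ell=\Theta(\log^D(n/\epsilon))$ forced $t^*$ to be quasi-polynomial. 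Everything else in the patching scheme — the number of layers $h_0=\CO(1)$, the number of patches $N_A\le n$, the boundary sizes $|\partial A_-^{\max}|\lesssim \poly\log(n)$, the correlation length $\ell'=\CO(\log n)$ from \autoref{thm:localindistinguishability}, and the circuit compilation via \autoref{circuitimplementation} — is already polylogarithmic per patch and polynomial overall.

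First I would replace the time-averaged map $\CR_{A,t,\ell}$ by the (truncated) Lindblad evolution $\e^{\CL^\dagger_{A,\ell}t}$ itself (or keep the time average, whichever is cleaner for \autoref{circuitimplementation}), and re-run the argument of \autoref{thm:main} with the exponential decay of the Dirichlet form in place of \autoref{cor:dirichlet}. Concretely, the chain of inequalities in the proof of \autoref{thm:main} bounds $\|\CR^\dagger_{A,t}[\vX]-(\CR^\dagger_{A,t}[\vX])_{-A}\|$ by $r\e^{\mu|A|}\CE_A(\cdot)^\nu$; feeding in $\CE_A(\e^{\CL^\dagger_{A,\ell}t}[\vX])\le \e^{-2\lambda(t-1)}$ from the $\lambda$-uniformly-locally-gapped hypothesis (with $\lambda=c|A|^{-c'}$) yields, for $A$ of side-length $\log(n)$ so that $|A|=\poly\log(n)$,
\begin{align}
\|\vrho-\mathbb{F}_{A^{h,j}}[\vrho]\|_1 \lesssim r\,\e^{\mu|A|}\,\e^{-2\lambda\nu(t-1)} + t|A|\big(\e^{-c'\ell/d\beta}+2^{-\ell}\big),
\end{align}
where the second term is the Lieb--Robinson truncation error from \autoref{lem:truncation} and \autoref{lem:quasilocal}. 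Since $\mu|A|=\poly\log(n)$ and $\lambda^{-1}=\poly\log(n)$, choosing $t=\poly\log(n/\epsilon)$ and $\ell=\Theta(\log(n/\epsilon))$ makes each $\|\vrho-\mathbb{F}_{A^{h,j}}[\vrho]\|_1\le \epsilon/n^2$. Summing over the $h_0 N_A\le \CO(n)$ patches and invoking \autoref{circuitimplementation} (which costs $\widetilde{\CO}(|P_A^1|\,t\,\beta)=\poly\log(n/\epsilon)$ gates per patch since $|P_A^1|=3|A|=\poly\log(n)$) gives total gate complexity $\CO(n\poly\log(n/\epsilon))$, as claimed.

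The main obstacle I anticipate is bookkeeping the interaction between the three length/time scales — the gap-induced mixing time $t\sim \lambda^{-1}\log(1/\delta)$, the quasi-locality radius $\ell$, and the indistinguishability correlation length $\ell'$ — so that after taking $|A|=\Theta(\log n)$ all of them become polylogarithmic simultaneously without any of the exponential prefactors ($\e^{\mu|A|}$, $2^{2|A|}$, $\e^{c'\beta}$, the $\poly(|A|,\ell'^D)$ factors from \autoref{thm:localindistinguishability}) blowing up. In particular one must check that the exponent $\nu=\frac{2\beta_0}{\beta+5\beta_0}$ (or its low-temperature analogue) appearing in \autoref{lem:commutator_Dirichlet} is a positive constant depending only on $\beta,d$, so that $\e^{-2\lambda\nu(t-1)}$ still decays at a polylog rate, and that the prefactor $r\e^{\mu|A|}=n^{o(1)}$ is dominated by the decaying exponential once $t$ is a large enough polylog. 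A secondary subtlety is that $\e^{\CL^\dagger_{A,\ell}t}$ has fixed point $\CP^\dagger$, not $\vrho_{A_+}$ restricted, but this is exactly the content of the assumed detailed balance plus ergodicity on the patch, and the recovery statement only needs the fixed-point property $\CR_{A,t}[\vrho]=\vrho$ used in \eqref{rhotoX}, which survives verbatim. Modulo these checks, the proof is a direct substitution into the scheme of \autoref{cor:samplingresult}.
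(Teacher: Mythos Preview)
Your proposal is correct and follows exactly the approach the paper sketches (the paper gives no detailed proof of this corollary, only the sentence preceding it): replace the polynomial Dirichlet-form decay $\CE_A\le 2/t$ from \autoref{cor:dirichlet} by the exponential decay $\CE_A\le \e^{-2\lambda(t-1)}$ coming from the local-gap lemma, switch from the time-averaged map to the semigroup $\e^{\CL_A t}$, and re-run the patching scheme of \autoref{cor:samplingresult} with $t=\poly\log(n/\epsilon)$ and $\ell=\Theta(\log(n/\epsilon))$. One minor slip worth correcting: the claim $r\e^{\mu|A|}=n^{o(1)}$ is false for $D\ge 2$ (since $|A|\sim\log^D n$ gives $\e^{\mu|A|}=\e^{\Theta(\log^D n)}$, which is superpolynomial), but this does not break your argument because, as you also say, it is dominated by $\e^{-2\lambda\nu t}$ once $t\gtrsim \mu|A|/(\lambda\nu)=\poly\log(n/\epsilon)$.
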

By the same reasoning, the local Markov property of \autoref{cor:CMI} can be upgraded into a global one assuming the local gap condition:
\begin{cor}
Again, Assume that the Lindbladian with single jumps is $c |A|^{-c'}$-uniformly-locally-gapped for some constants $c,c'>0$. Then, for any tripartition of the system $\Lambda=A\sqcup B\sqcup C$ the quantum conditional mutual information (QCMI) evaluated at $\vrho_\beta$ satisfies
\begin{align}\label{CMIdecayimproved}
    I(A:C|B)_{\vrho_\beta}\le  \poly(|A|,|C|)\,\exp\bigg(-\frac{\operatorname{dist}(A,C)}{\xi}\bigg)\,,
\end{align}
for some polynomial and constant $\xi$ depending on $\beta, d,c,c'$.
\end{cor}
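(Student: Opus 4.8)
The plan is to replay the proof of \autoref{thm:main} essentially verbatim, but with the bare Lindblad semigroup $\e^{t\CL_A^\dagger}$ in place of the time-averaged map $\CR_{A,t}^\dagger$, and with the unconditional Dirichlet bound $\CE_A(\CR_{A,t}^\dagger[\vX])\le 2/t$ of \autoref{cor:dirichlet} replaced by the exponential decay
\begin{align}
\CE_A\big(\e^{t\CL_A^\dagger}[\vX]\big)\le \e^{-2\lambda(t-1)}\norm{\vX}_{\vrho_\beta}^2\le \e^{-2\lambda(t-1)}\qquad (t>1),
\end{align}
which is exactly the exponential Dirichlet-form decay established in \autoref{sec:improvegap}, applied to $\norm{\vX}\le 1$ (using \autoref{lem:operatornorm}). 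Nothing else in the proof of \autoref{thm:main} changes: the duality reduction $\norm{\vrho_\beta-\e^{t\CL_A}[\vrho_{\beta,-A}]}_1=\sup_{\norm{\vX}\le1}\big|\tr[(\e^{t\CL_A^\dagger}[\vX]-(\e^{t\CL_A^\dagger}[\vX])_{-A})\vrho_\beta]\big|$ uses only $\e^{t\CL_A}[\vrho_\beta]=\vrho_\beta$ (true since $\CL_A[\vrho_\beta]=0$), and the Hölder-type peeling (\autoref{lem:weight_norm_comm}, \autoref{cor:highweight}), the global-to-local reduction (\autoref{cor:global_local_comm}) and the commutator-to-Dirichlet estimate (\autoref{lem:commutator_Dirichlet}, which only needs $\norm{\e^{t\CL_A^\dagger}[\vX]}\le1$) do not care which operator is plugged in. This gives, with the same exponent as in \autoref{thm:main} — write $\kappa=\kappa(\beta,d)$ for the ``$\lambda$'' appearing there, i.e.\ $\kappa=\tfrac{128\beta_0^4}{\beta^3(\beta+5\beta_0)}$ for $\beta>4\beta_0$ and $\kappa=\tfrac{2\beta_0}{\beta+5\beta_0}$ otherwise, $\beta_0=1/4d$ —
\begin{align}
\norm{\e^{t\CL_A}[\vrho_{\beta,-A}]-\vrho_\beta}_1\lesssim r(\beta,d)\,\labs{A}^2\,2^{2\labs{A}}\,\e^{-2\kappa\lambda(t-1)}.
\end{align}

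Next I would localize exactly as for \autoref{cor:main_qlocal}: a one-line Duhamel estimate (the analogue of \autoref{lem:truncation} for $\e^{t\CL^\dagger}$ rather than its time average) together with \autoref{lem:quasilocal} shows that for every $\ell\ge 4\e^2\beta d$ the semigroup $\e^{t\CL_A^\dagger}$ is $\lesssim t\labs{A}(\e^{-c'\ell/d\beta}+2^{-\ell})$-close in $\norm{\cdot}_{\infty-\infty}$ to $\e^{t\CL_{A,\ell}^\dagger}$, which acts only within distance $\ell$ of $A$; here $\CL_{A,\ell}$ has jumps $\vA^a\in P_A^1$ and Hamiltonian $\vH_\ell$. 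Since the hypothesis is \emph{uniform} local-gappedness — the gap $\lambda$ holds on every restricted Gibbs state $\vrho^X$, in particular on $\vrho^{\vH_\ell}$ — the displayed bound holds verbatim with $\CL_A$ replaced by $\CL_{A,\ell}$, so adding the two errors,
\begin{align}
\norm{\e^{t\CL_{A,\ell}}[\vrho_{\beta,-A}]-\vrho_\beta}_1\lesssim r\,\labs{A}^2\,2^{2\labs{A}}\,\e^{-2\kappa\lambda(t-1)}+t\,\labs{A}\big(\e^{-c'\ell/d\beta}+2^{-\ell}\big).
\end{align}

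The whole point of the gap assumption now appears: the prefactor $2^{2\labs{A}}$, which in \autoref{cor:CMI} forced $t$ to be exponentially large in $\labs{A}$ (hence the $\e^{\mu'\labs{A}}$ there), is killed by a merely \emph{polynomial} $t$. Plugging $\lambda=c\labs{A}^{-c'}$ and choosing
\begin{align}
t-1=\frac{\labs{A}^{c'}}{2\kappa c}\Big(2\labs{A}+\log r+\tfrac{c'\ell}{d\beta}\Big)=\poly(\labs{A})\cdot(1+\ell),
\end{align}
the first term becomes $\lesssim \labs{A}^2(4/\e^2)^{\labs{A}}\,\e^{-c'\ell/d\beta}\lesssim \e^{-c'\ell/d\beta}$ (using $4/\e^2<1$ and $2^{2\labs{A}}\e^{-2\labs{A}}<1$), while the second is $\poly(\labs{A},\ell)\,(\e^{-c'\ell/d\beta}+2^{-\ell})\lesssim \poly(\labs{A})\,\e^{-\ell/\xi_0}$ after absorbing the polynomial-in-$\ell$ factor into a slightly smaller rate $1/\xi_0:=\min(c'/2d\beta,\tfrac12\ln 2)$. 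Hence $\norm{\e^{t\CL_{A,\ell}}[\vrho_{\beta,-A}]-\vrho_\beta}_1\lesssim \poly(\labs{A})\,\e^{-\ell/\xi_0}$. To finish, mimic the proof of \autoref{cor:CMI}: take $\ell=\operatorname{dist}(A,C)-1$ so that $\vH_\ell$ is supported on $B$ and disjoint from $C$, set $\Delta:=\tfrac12\norm{\vrho_\beta-\e^{t\CL_{A,\ell}}[\vec{\tau}_A\otimes\vrho_{BC}]}_1$, and apply the same entropic continuity bound $I(A:C|B)_{\vrho_\beta}\le \Delta\log\operatorname{dim}(C)+h_2(\Delta)\lesssim \labs{C}\sqrt{\Delta}$ used there, concluding $I(A:C|B)_{\vrho_\beta}\lesssim \poly(\labs{A},\labs{C})\,\e^{-\operatorname{dist}(A,C)/\xi}$ with $\xi=2\xi_0$.

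I do not expect a genuine obstacle: every ingredient — the commutator/Dirichlet machinery of \autoref{sec:Dirichlet}, the Lieb-Robinson localization of \autoref{sec:LRestimates}, and the exponential Dirichlet decay under a gap from \autoref{sec:improvegap} — is already in place. The only points needing care are (i) confirming that the \emph{uniform} local-gap hypothesis transfers to the truncated Hamiltonian $\vH_\ell$, which is immediate from the definition of ``uniform''; (ii) honestly tracking the exponent $\kappa$ and the resulting degree of $\poly(\labs{A},\labs{C})$, which depends on $\kappa,c,c'$ and hence on $\beta,d$; and (iii) the elementary inequality $2^{2\labs{A}}\e^{-2\labs{A}}<1$ that is precisely what makes a polynomial choice of $t$ sufficient where an exponential one was needed before. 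Conceptually, the corollary merely records that a polynomially small spectral gap upgrades the $2/t$ Dirichlet decay to $\e^{-\Omega(\lambda t)}$, trading the exponential-in-$\labs{A}$ prefactor of \autoref{cor:CMI} for a polynomial one.
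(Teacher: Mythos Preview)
Your proposal is correct and follows precisely the route the paper sketches (the paper gives no explicit proof, only ``By the same reasoning'' together with the two lemmas on exponential Dirichlet decay under a gap): replace the time-averaged map by the bare semigroup, swap the $2/t$ bound of \autoref{cor:dirichlet} for the exponential decay $\CE_A(\e^{t\CL_A^\dagger}[\vX])\le \e^{-2\lambda(t-1)}$, then redo the localization and the CMI continuity bound exactly as in \autoref{cor:main_qlocal} and \autoref{cor:CMI}. One small cleanup: your digression about the uniform gap ``transferring to $\vH_\ell$'' is unnecessary and slightly off --- the displayed inequality $\norm{\e^{t\CL_{A,\ell}}[\vrho_{\beta,-A}]-\vrho_\beta}_1\lesssim r\labs{A}^2 2^{2\labs{A}}\e^{-2\kappa\lambda(t-1)}+t\labs{A}(\e^{-c'\ell/d\beta}+2^{-\ell})$ already follows from the \emph{global} bound (gap of $\CL_A$ w.r.t.\ $\vrho_\beta$, which is the case $X=\Lambda$ of the uniform hypothesis) plus the Duhamel localization, so there is no need to rerun the Dirichlet argument with the truncated Hamiltonian.
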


\bibliographystyle{alphaUrlePrint.bst}
\bibliography{ref}

\end{document}